\title{Complexity of Reconfiguration Problems for Constraint Satisfaction}
\author{Tatsuhiko Hatanaka}{Tohoku University, Japan}{hatanaka@ecei.tohoku.ac.jp}{}{}
\author{Takehiro Ito}{Tohoku University, Japan}{takehiro@ecei.tohoku.ac.jp}{}{}
\author{Xiao Zhou}{Tohoku University, Japan}{zhou@ecei.tohoku.ac.jp}{}{}
\authorrunning{T. Hatanaka, T. Ito and X. Zhou} 
\subjclass{Mathematics of computing $\rightarrow$ Graph algorithms }
\keywords{Combinatorial reconfiguration, constraint satisfaction problem, graph algorithm, graph coloring, homomorphism, parameterized complexity}
\newcommand{\Neigh}[2]{N(#1,#2)}
\newcommand{\NN}{\mathbb{N}}
\newcounter{Ctemp}
\newcommand{\ISN}[2]{{
	\setcounter{Ctemp}{#1}
	\stepcounter{Ctemp}
	\{#1,\theCtemp,\ldots,#2\}}}
\newcommand{\VC}{\mathsf{vc}}
\newcommand{\TD}{\mathsf{td}}
\newcommand{\MW}{\mathsf{mw}}
\newcommand{\PW}{\mathsf{pw}}
\newcommand{\BW}{\mathsf{bw}}
\newcommand{\TW}{\mathsf{tw}}
\newcommand{\CW}{\mathsf{cw}}
\newcommand{\NB}{\mathsf{nb}}
\newcommand{\hpE}{E}
\newcommand{\Prm}[1]{\mathcal{P}(#1)}
\newcommand{\Vcns}{\mathcal{L}}
\newcommand{\Ecns}{\mathcal{C}}
\newcommand{\Dom}{D}
\newcommand{\doms}{k}
\newcommand{\ari}{r}
\newcommand{\underlying}{underlying}
\newcommand{\Sol}[1]{\mathscr{S}(#1)}
\newcommand{\mapf}{f}
\newcommand{\mapg}{g}
\newcommand{\maph}{h}
\newcommand{\ini}{{s}}
\newcommand{\tar}{{t}}
\newcommand{\iot}{{r}}
\newcommand{\Seq}{\mathcal{W}}
\newcommand{\CSP}[1]{\ifx #10\textsc{constraint satisfiability}\else $#1$\textsc{-constraint satisfiability}\fi}
\newcommand{\CSR}[1]{\ifx #10\textsc{CSR}\else $#1$\textsc{-CSR}\fi}
\newcommand{\BCSR}{\textsc{BCSR}}
\newcommand{\CR}[1]{\textsc{#1CR}}
\newcommand{\HR}[1]{\textsc{#1HR}}
\newcommand{\Rels}{\mathcal{S}}
\newcommand{\rest}[2]{#1|_{#2}}
\newcommand{\diff}[2]{\mathsf{dif}(#1,#2)}
\newcommand{\Inst}{\mathcal{I}}
\newcommand{\seq}[1]{\langle #1\rangle}
\newcommand{\Cls}[2]{\mathscr{C}_{#1}(#2)}
\newcommand{\tk}{{\tau}}
\newcommand{\kk}{\kappa}
\newcommand{\kkc}{$(\kk \times \kk)$\textsc{-clique}}
\newcommand{\kkv}[2]{u_{#1}^{#2}}
\newcommand{\kkgp}[1]{U_{#1}}
\newcommand{\runtime}[2]{O^*(#1^{o(#2)})}
\newcommand{\Hwg}{R}
\newcommand{\Hws}[2]{\mathscr{W}_{#1}(#2)}
\newcommand{\wdl}{\rho}
\newcommand{\Hw}{\boldsymbol{w}}
\newcommand{\lay}[1]{L^{#1}}
\newcommand{\Elay}[1]{E^{#1}}
\newcommand{\asgn}[1]{\ifx #10\mathcal{A}\else \mathcal{A}(#1)\fi}
\newcommand{\isov}{\phi}
\newcommand{\isoe}{\pi}
\newcommand{\repcns}[2]{#1[#2]}
\newcommand{\subtree}[1]{T_{#1}}
\newcommand{\ances}[1]{\mathsf{Anc}(#1)}
\newcommand{\IDN}[1]{\mathscr{N}(#1)}
\newcommand{\IDA}[1]{\mathscr{A}(#1)}
\newcommand{\IDC}[1]{\mathscr{C}(#1)}
\newcommand{\tdkern}[1]{h_{\doms,\TD}(#1)}
\newcommand{\Part}{\mathscr{P}}
\newcommand{\CSG}[2]{\mathsf{CSG}(#1,#2)}
\newcommand{\Cnd}{P}
\newcommand{\eqcls}[1]{[#1]}
\newcommand{\Instcs}{\mathcal{J}}
\newcommand{\sbst}[2]{\mathsf{SUB}(#1;#2)}
\newcommand{\subcns}[1]{\mathcal{G}(#1)}
\newcommand{\weight}{\omega}
\newcommand{\Solw}[2]{\mathscr{S}_{#1}(#2)}
\newcommand{\len}[2]{\mathsf{len}_{#1}(#2)}
\newcommand{\nrec}[2]{\# (#1,#2)}
\newcommand{\OPT}[2]{\mathsf{OPT}(#1,#2)}
\newcommand{\Cov}{C}
\newcommand{\Vind}{I}
\newcommand{\Bs}{V_\mathsf{B}}
\newcommand{\NBs}{V_\mathsf{N}}
\newcommand{\Imp}[1]{\mathsf{IMP}(#1)}
\newcommand{\vimp}[2]{#1[#2]}
\newcommand{\resup}{\mathsf{res}}
\newcommand{\ppsup}{{*}}
\newcommand{\Vfix}{V_\mathsf{fix}}
\newtheorem{proposition}{Proposition}
\newtheorem{observation}{Observation}
\newtheorem{claim}{Claim}
\begin{document}

\maketitle

\begin{abstract}
	Constraint satisfaction problem (CSP) is a well-studied combinatorial search problem, in which we are asked to find an assignment of values to given variables so as to satisfy all of given constraints.
	We study a reconfiguration variant of CSP, in which we are given an instance of CSP together with its two satisfying assignments, and asked to determine whether one assignment can be transformed into the other by changing a single variable assignment at a time, while always remaining satisfying assignment.
	This problem generalizes several well-studied reconfiguration problems such as Boolean satisfiability reconfiguration, vertex coloring reconfiguration, homomorphism reconfiguration.
	In this paper, we study the problem from the viewpoints of polynomial-time solvability and parameterized complexity, and give several interesting boundaries of tractable and intractable cases. 
 \end{abstract}

\newpage
\section{Introduction}

	Recently, the framework of \emph{reconfiguration}~\cite{IDHPSUU} has been extensively studied in the field of theoretical computer science.
	This framework models several ``dynamic'' situations where we wish to find a step-by-step transformation between two feasible solutions of a combinatorial (search) problem such that all intermediate solutions are also feasible and each step respects a fixed reconfiguration rule.
	This reconfiguration framework has been applied to several well-studied combinatorial problems.
	(See surveys~\cite{ISweb,Nis18,Jan13}.)

\begin{figure}[t]
	\begin{center}
		\includegraphics{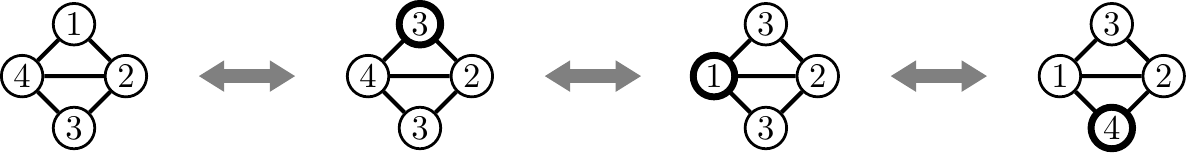}
	\end{center}
	\caption{A transformation of $4$-colorings.
	A vertex which is recolored from the previous $4$-coloring is depicted by a thick circle.}
	\label{fig:coloring}
\end{figure}

	As an example, we consider the (vertex) \textsc{coloring reconfiguration} problem~\cite{BC09,BP16,CHJ11,HIZ17c,Wro18}, which is one of the most well-studied reconfiguration problems.
	Let $G=(V,E)$ be a graph and let $\Dom$ be a set of $\doms$ \emph{colors}.
	A $\doms$-\emph{coloring} of $G$ is a mapping $\mapf \colon V \to \Dom$ such that $\mapf(v)\ne \mapf(w)$ holds for every edge $vw\in E$.
	In the reconfiguration framework, we wish to transform one $\doms$-coloring into another one by recoloring a single vertex at a time, while always remaining $\doms$-coloring.
	(See \figurename~\ref{fig:coloring} for example.)

\begin{figure}[t]
	\begin{center}
		\includegraphics{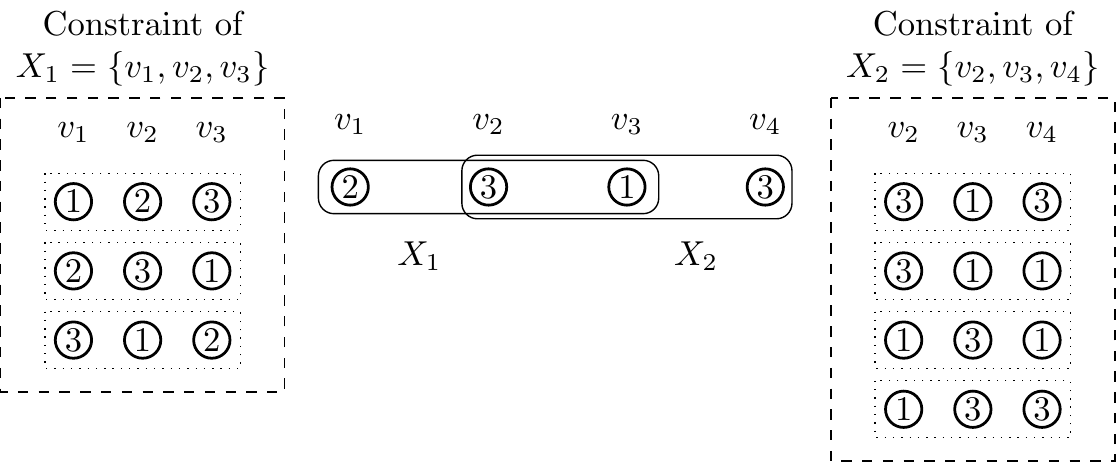}
	\end{center}
	\caption{An example of constraints which represent allowed assignments to the vertices in CSP (left and right of the figure) and a mapping which satisfies all constraints (middle of the figure).}
	\label{fig:CSP}
\end{figure}
	
	In this paper, we study a reconfiguration variant of the well-known constraint satisfaction problem (CSP, for short), which is a generalization of several combinatorial problems including vertex coloring, Boolean satisfiability, graph homomorphism.
	The formal definitions of CSP and its reconfiguration variant will be given in Section~\ref{sec:problem}, but we here briefly introduce them.
	Let $G=(V,\hpE)$ be a hypergraph.
	Let $\Dom$ be a set, called a \emph{domain}; each element of $\Dom$ is called a \emph{value} and we always denote by $\doms$ the size of a domain.
	In CSP, each hyperedge $X\in \hpE$ has a \emph{constraint} which represents the values allowed to be assigned to the vertices in $X$ at the same time, and we wish to find a mapping $V \to \Dom$ which satisfies the constraints of all hyperedges in $G$. 
	(See \figurename~\ref{fig:CSP} for an example.)
	In the case of vertex coloring, we can see that every hyperedge consists of two vertices, and has the common constraint that any two different colors can be assigned to the two vertices in the hyperedge at the same time.  
	In the reconfiguration framework for CSP, we are given a CSP instance together with its two mappings satisfying all constraints, and we wish to transform one mapping into the other by changing a value of a single vertex at a time, while always satisfying all constraints.
	
	The reconfiguration problem for CSP has been studied as several spacial cases such as 
		\textsc{Boolean constraint satisfiability reconfiguration}~\cite{BMNR14,CDEHW18,GKMP09,MTY10,MTY11,MNPR17,Schwerdtfeger14}, 
		\textsc{homomorphism reconfiguration}~\cite{BLS18,BLMNS17,BMMN16,BN15,Wro15,Wro18}, 
		and \textsc{(list) coloring reconfiguration}~\cite{BB18,BJLPP14,BC09,BMNR14,BP16,BousquetP16,CCMS16,Cer07,CHJ11,DFFV06,FJP16,HM18,HIZ14,HIZ17c,HIZ18,IKZ12,JKKPP14,OSIZ17,OSIZ18,Wro18}.
	We will briefly summarize these results in the next section.
	
	In this paper, we study the reconfiguration problem for CSP and its special cases from the viewpoints of polynomial-time solvability and parameterized complexity, and give several interesting boundaries of tractable and intractable cases.
	Our results are roughly summarized as follows:
	We first investigate the complexity of the problem with respect to the size $\doms$ of a domain (summarized in \tablename~\ref{tab:result_k}).
	We next analyze the (parameterized) complexities with respect to the structure of an input hypergraph (summarized in \tablename s~\ref{tab:result_pw} and \ref{tab:result_graph}).
	We finally explore the boundary of time complexity of the problem; we give some exact algorithms, and a lower bound of the computation time that meets the running times of our algorithms under the exponential time hypothesis.

\section{Problem definition and results}
\label{sec:problem}

\subsection{Our problem}

	In this subsection, we formally define CSP and its reconfiguration variant by means of hypergraphs.
	
	We first give some terminologies regarding hypergraphs and mappings.	
	A \emph{hypergraph} $G$ is a pair $(V,\hpE)$, where $V$ is a set of vertices and $\hpE$ is a family of non-empty vertex subsets, called \emph{hyperedges}.
	A hypergraph $G$ is $r$\emph{-uniform} if every hyperedge consists of exactly $r$ $(\ge 1)$ vertices.	
	Sometimes, a $2$-uniform hypergraph $G$ is simply called a \emph{graph} and each hyperedge of $G$ is called an \emph{edge}.
	An edge $\{v,w\}$ is sometimes denoted by $vw$ or $wv$ for notational convenience.
	Let $A$ and $B$ be any sets.
	We denote by $B^A$ the set of all mappings from $A$ to $B$, because we can identify a mapping $\phi \colon A \to B$ with a vector $(\phi(a_1), \phi(a_2), \ldots, \phi(a_{|A|})) \in B^{|A|}$, where $A=\{a_1,a_2,\ldots,a_{|A|}\}$.
	Let $\phi \in B^A$ be a mapping.
	For any subset $A^\prime$ of $A$, we denote by $\rest{\phi}{A^\prime}$ the restriction of $\phi$ on $A^\prime$; that is, $\rest{\phi}{A^\prime}$ is a mapping in $B^{A^\prime}$ such that $\rest{\phi}{A^\prime}(a)=\phi(a)$ for any $a\in A^\prime$.
	
	\medskip
	We next define CSP and give some notation.
	Let $G=(V,\hpE)$ be a hypergraph.
	Let $\Dom$ be a set, called a \emph{domain}; each element of $\Dom$ is called a \emph{value} and we always denote by $\doms$ the size (cardinality) of a domain.
	In CSP, each hyperedge $X\in \hpE$ has a set $\Ecns(X) \subseteq \Dom^X$; we call $\Ecns(X)$ a \emph{constraint of} $X$.
	If $\Ecns(X)=\Dom^X$, it is called a \emph{trivial} constraint.
	The \emph{arity} of a constraint $\Ecns(X)$ of $X$ is exactly $|X|$, and we call $\Ecns(X)$ an $\ari$\emph{-ary} constraint, where $\ari=|X|$.
	We define the \emph{constraint} $\Ecns(G)$ \emph{of} $G$ as the union of all constraints of hyperedges, that is, $\Ecns(G)=\bigcup_{X\in E(G)} \Ecns(X)$.
	For a vertex $v\in V(G)$, a \emph{list} $\Vcns(v)$ of $v$ is the set $\{i\in \Dom \colon \exists \mapg \in \Ecns(G),\ \mapg(v)=i\}$.
	For a hyperedge $X\in \hpE$, we say that a mapping $\mapf \in \Dom^V$ \emph{satisfies} a constraint of $X$ if $\rest{\mapf}{X}\in \Ecns(X)$ holds.
	$\mapf$ is a \emph{solution} if it satisfies all constraints.
	
	An instance of \textsc{constraint satisfiability} is a triple $(G,\Dom,\Ecns)$ consisting of a hypergraph $G$, a domain $\Dom$, and a constraint assignment to hyperedges over $\Dom$.
	Then, the problem asks whether there exists a solution or not.
	\textsc{Constraint satisfiability} includes many combinatorial problems as its special cases as follows.
	\begin{itemize}
		\item \textsc{Boolean constraint satisfiability} is the special case of \textsc{constraint satisfiability} where $|\Dom|=2$.
		\item For an integer $\ari \ge 1$, $\ari$\textsc{-ary constraint satisfiability} is the special case of \textsc{constraint satisfiability} where all constraints are of arity at most $\ari$, that is, all hyperedges have size at most $\ari$.
		\item \textsc{List homomorphism} is the special case of $2$\textsc{-ary constraint satisfiability} where $G$ is a $2$-uniform hypergraph (i.e., a graph) and there exists a simple undirected graph $H$ with a vertex set $\Dom$, called an \emph{\underlying{} graph}\footnote{
			In this paper, we only consider simple undirected \underlying{} graphs, although loops and/or directed edges are often allowed in researches related to graph homomorphisms.}, 
		such that $\Ecns(vw)=E(H)\cap(\Vcns(v) \times \Vcns(w))$ holds for every edge $vw\in E(G)$, where $\Vcns(v)$ and $\Vcns(w)$ are lists of $v$ and $w$, respectively.
		In other words, $\Ecns(vw)$ is the set of all homomorphisms from the edge $vw$ to $H$ which respect the lists of $v$ and $w$.
		\item \textsc{Homomorphism} is the special case of \textsc{list homomorphism} where $\Vcns(v)=\Dom$ holds for every vertex $v\in V(G)$.
		\item \textsc{(List) coloring} is the special case of \textsc{(list) homomorphism} where an \underlying{} graph $H$ is complete, that is, $\Ecns(vw)$ is a set of all injective mappings from $\{v,w\}$ to $\Dom$ (which respect the lists of $v$ and $w$).
	\end{itemize}
	
	\medskip
	We then define a reconfiguration variant of \textsc{constraint satisfiability}, that is, \textsc{constraint satisfiability reconfiguration}.
	Let $\mapf$ and $\mapf^\prime$ be two solutions for an instance $\Inst=(G,\Dom,\Ecns)$ of \CSP{0}.
	We define the \emph{difference} $\diff{\mapf}{\mapf^\prime}$ between $\mapf$ and $\mapf^\prime$ as the set $\{v\in V(G)\colon \mapf(v) \ne \mapf^\prime(v) \}$.
	We now define the \emph{solution graph} $\Sol{\Inst}$ for $\Inst$ as follows.
	$V(\Sol{\Inst})$ is the set of all solutions for $\Inst$, and two solutions $\mapf$ and $\mapf^\prime$ are connected by an edge if and only if $|\diff{\mapf}{\mapf^\prime}|=1$.
	A walk in $\Sol{\Inst}$ is called a \emph{reconfiguration sequence}.
	Two solutions $\mapf$ and $\mapf^\prime$ are \emph{reconfigurable} if and only if there exists a reconfiguration sequence between them.
	
	An instance of \textsc{constraint satisfiability reconfiguration} (\CSR{0} for short) is a 5-tuple $(G,\Dom,\Ecns,\allowbreak \mapf_\ini,\mapf_\tar)$, where $(G,\Dom,\Ecns)$ is an instance of \textsc{constraint satisfiability}, and $\mapf_\ini$ and $\mapf_\tar$ are two solutions to $(G,\Dom,\Ecns)$, called \emph{initial} and \emph{target} solutions.
	Then, the problem asks whether $\mapf_\ini$ and $\mapf_\tar$ are reconfigurable or not.
	Similarly, for each special case \textsc{P} of \textsc{constraint satisfiability}, we define ``\textsc{P reconfiguration}'' as a special case of \CSR{0}.
	We use the following abbreviations:
	\begin{itemize}
		\item \BCSR{} for \textsc{Boolean constraint satisfiability reconfiguration};
		\item \CSR{\ari} for $\ari$\textsc{-ary constraint satisfiability reconfiguration} for each integer $\ari \ge 1$;
		\item \HR{(L)} for \textsc{(list) homomorphism reconfiguration}; and
		\item \CR{(L)} for \textsc{(list) coloring reconfiguration}.
	\end{itemize}
	Relationships between problems are illustrated in \figurename~\ref{fig:relation}(a).

	\begin{figure}[t]
		\begin{center}
			\includegraphics{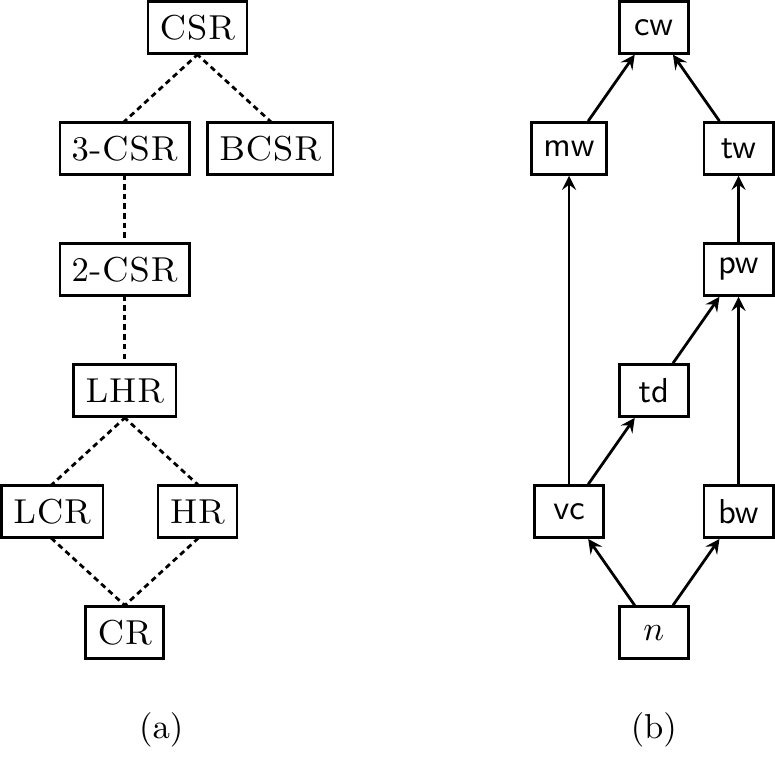}
		\end{center}
		\caption{(a) Relationships between problems.
			Each dotted line between P (lower) and Q (upper) means that P is a special case of Q.
			(b) Relationships between graph parameters.
			$\CW$, $\MW$, $\TW$, $\PW$, $\TD$, $\VC$, $\BW$ and $n$ are the cliquewidth, the modular-width, the treewidth, the pathwidth, the tree-depth, the size of a minimum vertex cover, the bandwidth and the number of vertices of a graph, respectively.
			Each arrow $\alpha \to \beta$ means that $\alpha$ is stronger than $\beta$, that is, if $\alpha$ is bounded by a constant then $\beta$ is also bounded by some constant.}
		\label{fig:relation}
	\end{figure}

\subsection{Known and related results}
	
	There are many literatures which study special cases of \CSR{0} and their \emph{shortest variants}.
	In the shortest variant, we are given an instance with an integer $\ell \ge 0$, and asked whether there exists a reconfiguration sequence of length at most $\ell$.
	We here state only the results from the viewpoint of the computational complexity.
	
	One of the most well-studied special cases of \CSR{0} is \BCSR{}~\cite{BMNR14,CDEHW18,GKMP09,MTY10,MTY11,MNPR17,Schwerdtfeger14}.
	Gopalan et al.~\cite{GKMP09} gave a computational dichotomy for \BCSR{} with respect to a set $\Rels$ of logical relations which can be used to define each constraint;
	the problem is PSPACE-complete or in P for each fixed $\Rels$.
	In addition, Cardinal et al.~\cite{CDEHW18} showed that the problem remains PSPACE-complete even if $\Rels$ is equivalent to monotone Not-All-Equal $3$-SAT (i.e., each constraint is a set of surjections) and a ``variable-clause incidence graph'' is planar.
	For the shortest variant, a computational trichotomy is known; Mouawad et al.~\cite{MNPR17} proved that the problem is PSPACE-complete, NP-complete or in P for each fixed $\Rels$.
	Bonsma et al.~\cite{BMNR14} proved that the shortest variant is $W[2]$-hard when parameterized by $\ell$ even if $\Rels$ is equivalent to Horn SAT.
	
	Another well-studied spacial case is \CR{}~\cite{BB18,BJLPP14,BC09,BMNR14,BP16,BousquetP16,CCMS16,Cer07,CHJ11,DFFV06,FJP16,HM18,HIZ17c,JKKPP14,OSIZ17,OSIZ18,Wro18}.
	This problem is PSPACE-complete for $\doms \ge 4$ and bipartite planar graphs~\cite{BC09} but in P for $\doms \le 3$~\cite{CHJ11}.
	We note that these results (including tractability one) can be extended for \CR{L}.
	Moreover, it is known that the problem remains PSPACE-complete even if $\doms$ is a fixed constant for several graph classes such as line graphs (for any fixed $\doms \ge 5$)~\cite{OSIZ18}, bounded bandwidth graphs~\cite{Wro18}, and chordal graphs~\cite{HIZ17c}.
	On the other hand, several polynomial-time algorithms are known for subclasses of chordal graphs such as trivially perfect graphs, split graphs~\cite{HIZ17c}, and $(\doms-2)$-connected chordal graphs~\cite{BP16}.
	For the shortest variant parameterized by $\ell$, some intractability results are known;
	it is $W[1]$-hard~\cite{BMNR14} and does not admit a polynomial kernelization when $\doms$ is fixed unless the polynomial hierarchy collapses~\cite{JKKPP14}.
	
	As a generalization of \CR{}, \CR{L} is also studied~\cite{HIZ14,HIZ18,IKZ12,OSIZ18,Wro18}.
	The problem is PSPACE-complete even if $\doms$ is a constant for graphs with pathwidth two~\cite{Wro18}, while it is polynomial-time solvable for graphs with pathwidth one~\cite{HIZ14}.
	Hatanaka et al.~\cite{HIZ18} gave fixed-parameter algorithms for \CR{L} parameterized by $\doms+\MW$ and for the shortest variant parameterized by $\doms+\VC$.
	In contrast, they also showed that the problem is $W[1]$-hard when parameterized only by $\VC$.
	
	\HR{} is also well-studied as a generalization of \CR{}.
	Several literatures investigated \HR{} from the viewpoint of graph classes in which a given graph $G$ or an \underlying{} graph $H$ lies~\cite{BLS18,BLMNS17,BMMN16,BN15,Wro15,Wro18}.
	Brewster et al.~\cite{BMMN16} gave a dichotomy for a special case of \HR{} in which $H$ is a $(p,q)$-circular clique;
	it is PSPACE-complete if $p/q\ge 4$ but is in P otherwise.
	We note that this result generalizes the complexity of \CR{} with respect to $\doms$, since a complete graph $K_{\doms}$ is a $(\doms,1)$-circular clique.
	It is also known that the problem is PSPACE-complete even if $H$ is an odd wheel~\cite{BLMNS17} or $H$ is some fixed graph and $G$ is a cycle~\cite{Wro18}.
	On the other hand, it can be solved in polynomial time if $G$ is a tree~\cite{Wro18} or $H$ contains no cycles of length four as a subgraph~\cite{Wro15}.
	Furthermore, a fixed-parameter algorithm is known when parameterized by $\doms+\TD$~\cite{Wro18}; note that it can be easily extended for \HR{L}.
	
	Finally, we refer to the shortest variant of \CSR{0}.
	Bonsma et al.~\cite{BMNR14} gave a fixed-parameter algorithm for the shortest variant parameterized by $\doms+\ari+\ell$, where $\ari$ is the maximum arity of a constraint.
	This implies that shortest variants of \BCSR{} and \CSR{2} are fixed-parameter tractable when parameterized by $\ari+\ell$ and $\doms+\ell$, respectively.
	They also showed that the problem is intractable in general if at least one of $\{\doms,\ari,\ell\}$ is excluded from the parameter.

\subsection{Our contribution}

	In this paper, we investigate the complexity of \CSR{0} and its spacial cases, especially \CSR{3}, \CSR{2}, \HR{(L)} and \CR{(L)}, from several viewpoints.

\paragraph*{The sizes of a domain and lists}

\begin{table}[t]
	\centering
	\caption{Computational complexities with respect to the size $\doms$ of a domain.}
	\medskip
	\label{tab:result_k}
	\begin{tabular}{|l|l|l|l|}
		\hline
		& $\doms \ge 4$ & $\doms=3$ & $\doms=2$ \\ \hline
		\CSR{0}	& PSPACE-c.		& PSPACE-c.		& PSPACE-c.		\\ \hline
		\CSR{3}	& PSPACE-c.		& PSPACE-c.		& PSPACE-c.~\cite{GKMP09}	\\ \hline
		\CSR{2}	& PSPACE-c.		& PSPACE-c.~[Thm.~\ref{the:k3hard}]	& P~[Thm.~\ref{the:k2easy}]		\\ \hline
		\HR{L}	& PSPACE-c.		& P~[Thm.~\ref{the:k3easy}]		& P	\\ \hline
		\CR{L}	& PSPACE-c.		& P~\cite{CHJ11}	& P	\\ \hline
		\HR{}		& PSPACE-c.		& P~\cite{Wro15}	& P	\\ \hline
		\CR{}		& PSPACE-c.~\cite{BC09}	& P	& P	\\ \hline
	\end{tabular}
\end{table}

\begin{table}[t]
	\centering
	\caption{Parameterized complexity with respect to $\doms$ and the number $\NB$ of vertices whose lists have more than two values.}
	\medskip
	\label{tab:result_NB}
	\begin{tabular}{|l|l|l|}
		\hline
		Parameter & $\doms+\NB$  & $\NB$	\\ \hline
		\CSR{0} & PSPACE-c.               & PSPACE-c.	\\ \hline
		\CSR{3} & PSPACE-c.~\cite{GKMP09} & PSPACE-c.	\\ \hline
		\CSR{2} & FPT~[Thm.~\ref{the:nb}] & $W[1]$-hard but XP~[Thm.~\ref{the:nb}]	\\ \hline
		\HR{L}  & FPT                     & $W[1]$-hard but XP	\\ \hline
		\CR{L}  & FPT                     & $W[1]$-hard but XP	\\ \hline
		\HR{}   & FPT                     & $W[1]$-hard~[Cor.~\ref{cor:W1hard}] but XP	\\ \hline
		\CR{}   & FPT                     & FPT	\\ \hline
	\end{tabular}
\end{table}

	We first classify the complexity of the problems for each fixed size $\doms$ of a domain in Section~\ref{sec:doms}; recall that $\doms$ corresponds to the number of colors in \CR{(L)}.
	Together with known results, our results give interesting boundaries of (in)tractability as summarized in \tablename~\ref{tab:result_k}.
	In particular, our results unravel the boundaries with respect to $\doms$ for \CSR{2} and \HR{L}.
	The other interesting contrast we show is the boundary between \CSR{2} and \HR{L} for $\doms=3$.
	This might be caused by the difference of variety of constraints; in \HR{L}, a constraint is unique for each pair of lists.
	
	We then extend Theorem~\ref{the:k2easy}, which states that \CSR{2} can be solved in polynomial time if $\doms=2$.
	More precisely, we give a fixed-parameter algorithm with respect to $\doms+\NB$, where $\NB$ is the number of vertices whose lists have size more than two.
	Notice that $\doms=2$ implies that $\NB=0$, and hence our algorithm generalizes Theorem~\ref{the:k2easy}.
	Moreover, this result gives the boundary of the complexity of \CSR{2} between $\doms=3$ and $\doms=2$.
	We also investigate the parameterized complexity when parameterized only by $\NB$.
	The results are summarized in \tablename~\ref{tab:result_NB}.

\paragraph*{Graph parameters}

\begin{table}[t]
	\centering
	\caption{Computational complexity for graphs with pathwidth at most two.}
	\medskip
	\label{tab:result_pw}
	\begin{tabular}{|l|l|l|}
		\hline
		& $\PW=2$ & $\PW=1$ \\ \hline
		\CSR{0}	& PSPACE-c.		& PSPACE-c.		\\ \hline
		\CSR{3}	& PSPACE-c.		& PSPACE-c.		\\ \hline
		\CSR{2}	& PSPACE-c.		& PSPACE-c.		\\ \hline
		\HR{L}	& PSPACE-c.		& PSPACE-c.~[Thm.~\ref{the:pathhard}]	\\ \hline
		\CR{L}	& PSPACE-c.~\cite{HIZ14,Wro18}	& P~\cite{HIZ14}	\\ \hline
		\HR{}		& PSPACE-c.~\cite{Wro18}		& P~\cite{Wro18}	\\ \hline
		\CR{}		& P~\cite{HIZ17c}	& P	\\ \hline
	\end{tabular}
\end{table}

\begin{table}[t]
	\centering
	\caption{Parameterized complexity with respect to $\doms$ plus graph parameters.}
	\medskip
	\label{tab:result_graph}
	\begin{tabular}{|l|l|l|l|l|}
		\hline
		Parameter & $\doms+\MW$	& $\doms+\TD$	& $\doms+\VC$ & $\doms+\BW$ \\ \hline
		\CSR{0}	& PSPACE-c.		& FPT~[Thm.~\ref{the:tdeasy}]	& FPT~[Thms.~\ref{the:vc}, \ref{the:vcfast}]	& PSPACE-c.		\\ \hline
		\CSR{3}	& PSPACE-c.		& FPT		& FPT		& PSPACE-c.	\\ \hline
		\CSR{2}	& PSPACE-c.~[Cor.~\ref{cor:mwhard}]	& FPT		& FPT		& PSPACE-c.		\\ \hline
		\HR{L}	& FPT~[Thm.~\ref{the:mweasy}]		& FPT		& FPT		& PSPACE-c.			\\ \hline
		\CR{L}	& FPT~\cite{HIZ18}	& FPT		& FPT		& PSPACE-c.	\\ \hline
		\HR{}		& FPT	& FPT~\cite{Wro18}	& FPT		& PSPACE-c.		\\ \hline
		\CR{}		& FPT	& FPT		& FPT		& PSPACE-c.~\cite{Wro18}	\\ \hline
	\end{tabular}
\end{table}

	As mentioned in the previous subsection, \CR{}, \CR{L} and \HR{} have been studied from the viewpoint of graph parameters.
	In this paper, we extend the notion of graph parameters to hypergraphs by taking a ``primal graph.''
	The \emph{primal graph} $\Prm{G}$ of a hypergraph $G$ is a graph such that $V(\Prm{G})=V(G)$ and two distinct vertices are connected by an edge if they are contained in the same hyperedge of $G$.
	Then, we define any graph parameter of a hypergraph $G$ as the parameter of its primal graph $\Prm{G}$.\footnote{
		For example, when we refer to the treewidth of a hypergraph $G$, it means the treewidth of its primal graph $\Prm{G}$.
		Note that $\Prm{G}=G$ if $G$ is $2$-uniform.}
	Then we can draw \tablename s~\ref{tab:result_pw}	and \ref{tab:result_graph} from this viewpoint; throughout the paper, $\MW$, $\PW$, $\TD$ and $\VC$ are the modular-width, the pathwidth, the tree-depth and the size of a minimum vertex cover of a given hypergraph, respectively.
	The relationships between graph parameters are summarized in \figurename~\ref{fig:relation}(b); note that tractability (resp., intractability) result propagates downward (resp., upward).
	We extend several known fixed-parameter algorithms to \HR{2} ($\doms+\MW$), and \CSR{0} ($\doms+\TD$ and $\doms+\VC$).
	On the other hand, we show that the first one cannot be extended to \CSR{2}.
	We also note that a fixed-parameter tractability of \CSR{0} parameterized by $\doms+\VC$ can be obtained as a corollary of Theorem~\ref{the:tdeasy}.
	However, Theorem~\ref{the:vcfast} gives a faster algorithm and Theorem~\ref{the:vc} gives an algorithm for the shortest variant.
	
\paragraph*{Boundary of time comlexity}

	In Section~\ref{sec:boundary}, we explore the boundary of time complexity of \CSR{0}.
	We first give an algorithm for \CSR{0} running in time $O^*(\doms^{O(n)})$, where $n$ is the number of vertices of a given hypergraph, and hence \CSR{0} is in the class XP when parameterized by $n$.
	On the other hand, we prove that \HR{} is $W[1]$-hard when parameterized by $n$.
	Furthermore, we prove that \CSR{2} cannot be solved in time $\runtime{(\doms+n)}{\doms+n}$ under the exponential time hypothesis (ETH).
	This lower bound matches the running time shown in Theorems~\ref{the:exact}, \ref{the:vcfast} and \ref{the:nb}.	
	
	\medskip
	We move several proofs to Appendices.

\subsection{Preliminary}
\label{sec:pre}
	
	Let $G$ be a hypergraph, and let $v\in V(G)$ be a vertex.
	We denote by $\Neigh{G}{v}$ the set $\{w \colon w\ne v,\ \exists X\in E(G),\ \{v,w\}\subseteq X\}$ of vertices which are \emph{adjacent} to $v$.
	For a vertex subset $V^\prime \subseteq V(G)$, we denote $\Neigh{G}{V^\prime}:=\bigcup_{v\in V^\prime} \Neigh{G}{v} \setminus V^\prime$.
	
	Two hypergraphs $G$ and $G^\prime$ are \emph{isomorphic} if there exist two bijections $\isov \colon V(G) \to V(G^\prime)$ and $\isoe \colon E(G) \to E(G^\prime)$ such that $\isoe(X)=\{\isov(v_1),\isov(v_2),\ldots,\isov(v_\ari)\}\in E(G^\prime)$ holds for each hyperedge $X=\{v_1,v_2,\ldots,v_\ari\}\in E(G)$.
	(See \figurename~\ref{fig:isomorphism}.)
	For a hypergraph $G$ and a vertex subset $V^\prime \subseteq V(G)$, we define the \emph{subhypergraph} of $G$ \emph{induced by} $V^\prime$ as the hypergraph $G^\prime$ such that $V(G^\prime)=V^\prime$ and $E(G^\prime)=\{X\cap V^\prime \colon X\in E(G), X\cap V^\prime \ne \emptyset\}$.
	We denote by $G[V^\prime]$ the subhypergraph of $G$ induced by $V^\prime$ for any vertex subset $V^\prime$.
	(See \figurename~\ref{fig:subhypergraph}.)
	We use the notation $G\setminus V^\prime$ to denote $G[V(G)\setminus V^\prime]$.
	
	\begin{figure}[t]
		\begin{center}
			\includegraphics{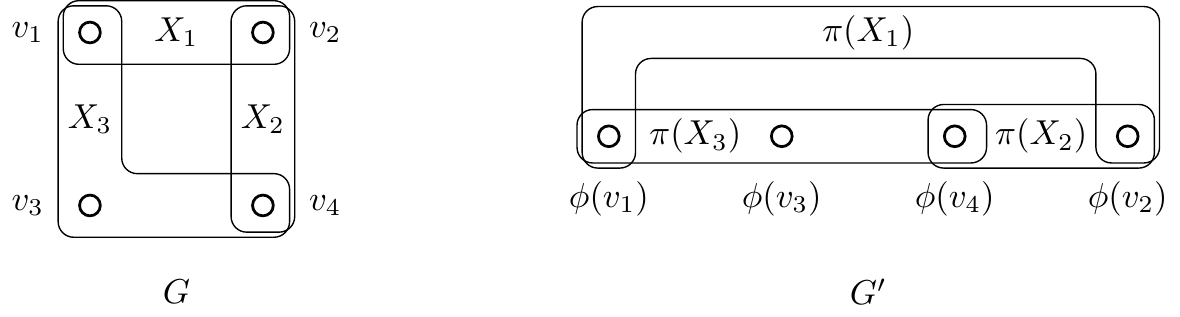}
		\end{center}
		\caption{Two isomorphic hypergraphs $G$ and $G^\prime$ under the bijections $\isov$ and $\isoe$.}
		\label{fig:isomorphism}
	\end{figure}
	\begin{figure}[t]
		\begin{center}
			\includegraphics{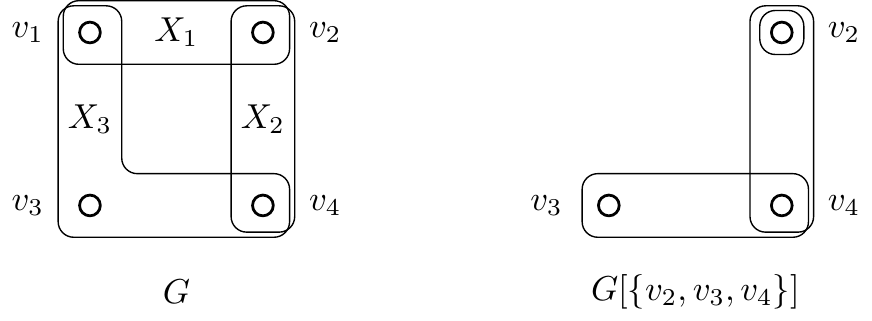}
		\end{center}
		\caption{A graph $G$ and the subhypergraph $G[\{v_2,v_3,v_4\}]$ induced by $\{v_2,v_3,v_4\}$.}
		\label{fig:subhypergraph}
	\end{figure}

	We say that two mappings $\phi \in B^{A}$ and $\phi^\prime \in B^{A^\prime}$ are \emph{compatible} if $\rest{\phi}{A\cap A^{\prime \prime}}=\rest{\phi^\prime}{A\cap A^{\prime \prime}}$ holds.

	
	Let $(G,\Dom,\Ecns)$ be an instance of \CSP{0}.
	A \emph{Boolean vertex} is a vertex $v\in V(G)$ with $|\Vcns(v)|\le 2$, and a \emph{non-Boolean vertex} is a vertex $v\in V(G)$ with $|\Vcns(v)|>2$.
	Let $X$ and $X^\prime$ be hyperedges in $E(G)$ such that $|X|=|X^\prime|$.
	For a bijection $\isov \colon X\to X^\prime$, we denote by $\repcns{\Ecns}{\isov}(X)$ the set $\{\mapg \circ \isov^{-1} \colon \mapg \in \Ecns(X)\}\subseteq \Dom^{X^\prime}$ of mappings from $X^\prime$ to $\Dom$,
	where $\circ$ means the composition of mappings.
	Intuitively, $\repcns{\Ecns}{\isov}(X)$ is a ``translation'' of $\Ecns(X)$ into a constraint of $X^\prime$ via a bijection $\isov$.
	For example, assume that $\Ecns(\{v_1,v_2,v_3\})$ contains a mapping $\mapg=(1,2,3)$.
	If a bijection $\phi \colon \{v_1,v_2,v_3\} \to \{u_1,u_2,u_3\}$ maps $v_1,v_2,v_3$ to $u_2,u_1,u_3$, respectively, 
	then $\repcns{\Ecns}{\isov}(\{v_1,v_2,v_3\})$ contains a mapping $\mapg^\prime \colon \{u_1,u_2,u_3\} \to \{1,2,3\}$ such that $(\mapg^\prime(u_1),\mapg^\prime(u_2),\mapg^\prime(u_3))=(\mapg \circ \isov^{-1}(u_1),\mapg \circ \isov^{-1}(u_2),\mapg \circ \isov^{-1}(u_3))=(\mapg(v_2),\mapg(v_1),\mapg(v_3))=(2,1,3)$.
	
	Let $(G,\Dom,\Ecns)$ be an instance of \CSP{2}.
	Without loss of generality, we assume that $G$ is connected, $|V(G)|\ge 2$, and $\Dom=\{0,1\}$.
	Moreover, we can assume that $G$ is $2$-uniform as follows.
	If $G$ contains a size-one hyperedge $\{v\}$, there must exist a size-two hyperedge (i.e., an edge) $vw\in E(G)$ from the assumption.
	Then, we remove $\{v\}$ from $E(G)$ and replace $\Ecns(vw)$ with the set of all solutions satisfying $\Ecns(\{v\})$ and $\Ecns(vw)$.
	Note that this modification does not change the set of solutions and the primal graph.

\section{Computational complexity with respect to $\boldsymbol{\doms}$}
\label{sec:doms}

	In this section, we classify the complexity of the problems for each fixed size $\doms$ of a domain.
	\begin{theorem}
		\label{the:k3hard}
		\CSR{2} is PSPACE-complete for bipartite planar graphs even if $\doms=3$.
	\end{theorem}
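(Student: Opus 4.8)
First I would settle membership in PSPACE by the usual argument: any reconfiguration sequence for an instance $\Inst=(G,\Dom,\Ecns,\mapf_\ini,\mapf_\tar)$ need only visit at most $\doms^{|V(G)|}$ solutions, each storable in polynomial space, so one may guess the sequence step by step, checking at each step that exactly one vertex changed its value and --- since every constraint $\Ecns(X)$ is listed explicitly in the input --- that all constraints of $G$ remain satisfied; by Savitch's theorem this yields PSPACE. The rest of the plan concerns hardness, and I would reduce from \CR{} restricted to four colors and bipartite planar graphs, which is PSPACE-complete by the construction of Bonsma and Cereceda~\cite{BC09}.

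The driving idea is the one flagged in the introduction: over a $3$-element domain a binary constraint on an edge $vw$ is an arbitrary subset of $\{0,1,2\}\times\{0,1,2\}$, not necessarily of homomorphism type, and in particular may apply two \emph{independent} injections $\iota_v,\iota_w$ of the two endpoints' lists into $\{0,1,2\}$. Hence an instance of \CR{L} with four colors on a bipartite planar graph in which \emph{every list has size at most three} can be re-encoded, on the very same graph, over the domain $\{0,1,2\}$: give a vertex $v$ the list $\iota_v(\Vcns(v))$ and give each edge $vw$ the constraint $\{(\iota_v(a),\iota_w(b)) : a\in\Vcns(v),\ b\in\Vcns(w),\ a\ne b\}$. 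Since the $\iota_v$ are injective, $\mapf\mapsto\bigl(v\mapsto\iota_v(\mapf(v))\bigr)$ is an isomorphism between the two solution graphs, so bipartiteness, planarity, and reconfigurability are all preserved; this already gives the desired hardness for such instances. It therefore remains to reduce an arbitrary four-color instance to one in which every vertex uses at most three colors.

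For that reduction I would replace each vertex $v$ whose list is the full set $\{1,2,3,4\}$ by a small bipartite planar gadget that (i) is consistently $2$-colorable together with its attachments to the gadgets of $v$'s former neighbors, (ii) has four ``canonical'' joint states $C_1,\dots,C_4$, where $C_c$ presents to every former neighbor an interface value that the corresponding edge constraint reads as ``color $c$'', and (iii) also has ``buffer'' joint states along which $C_a$ and $C_b$ are joined, designed so that while $v$'s gadget is in transit between $C_a$ and $C_b$ its interface constraints forbid the neighbor \emph{both} colors $a$ and $b$. With such gadgets a single legal recoloring $a\to b$ of $v$ in $G$ --- legal precisely because no neighbor then uses $a$ or $b$ --- is simulated by a short walk from $C_a$ through buffer states to $C_b$, and conversely no walk of the gadget network realizes a recoloring that is illegal in $G$; composing with the relabeling of the previous paragraph completes the reduction.

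The step I expect to be the real obstacle is precisely the design and correctness of this gadget. The reason the buffer states are unavoidable is a rigidity of small domains: any family of vectors in $\{0,1,2\}^d$ that are pairwise at Hamming distance one has at most three members (such a family lies on a single ``line''), so the four canonical states cannot be made pairwise adjacent in the solution graph, and the step $a\to b$ must sometimes be routed through a third color while that color is blocked at the neighbors --- which is exactly what the buffer states and the ``transit-blocking'' interface constraints accomplish. The bulk of the proof is then the simultaneous verification that (a) every legal move upstairs is simulated downstairs, (b) no new connection between solutions is introduced, and (c) the whole graph stays bipartite and planar, the last point needing some care in how each gadget is drawn inside the disc around the vertex it replaces.
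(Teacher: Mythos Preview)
Your relabeling step is exactly the paper's proof: once every list has size at most three, the per-vertex injections $\iota_v$ into $\{0,1,2\}$ together with the edge constraints $\{(\iota_v(a),\iota_w(b)):a\ne b\}$ give an isomorphism of solution graphs, and the underlying bipartite planar graph is untouched. The difference is entirely in the starting point. The paper reduces from \CR{L} on bipartite planar graphs with every list of size at most three, which it cites as already PSPACE-complete from~\cite{BC09}; the relabeling is then the whole reduction, done in a few lines. You instead start from plain $4$-\CR{} (all lists full) and propose to manufacture the list-size bound by replacing each vertex with a bipartite planar gadget having four canonical states and buffer transitions.

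That detour is where your proposal has a real gap: you specify the gadget only by the properties it must satisfy and explicitly flag its construction and verification as ``the real obstacle'', so the hardness argument is not actually established. Nothing in your outline is wrong --- your Hamming-distance observation correctly explains why four pairwise-adjacent canonical states are impossible over a three-element domain, and the transit-blocking interface idea is a plausible workaround --- but the paper sidesteps the entire issue by invoking the stronger known hardness of \CR{L} with small lists. The fix is simply to change your source problem: start from \CR{L} with lists of size at most three on bipartite planar graphs (this is what the Bonsma--Cereceda construction delivers, since their frozen-vertex machinery already restricts every vertex to at most three usable colors), apply your relabeling paragraph verbatim, and you are done without any gadget at all.
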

	\begin{proof}
		We give a polynomial-time reduction from \CR{L} to \CSR{2}.
		It is known that \CR{L} is PSPACE-complete for bipartite planar graphs even if each list has size at most three~\cite{BC09}.
		Let $\Inst=(G,\Dom,\Ecns,\mapf_\ini,\mapf_\tar)$ be such an instance of \CR{L}, and let $\Vcns$ is a list assignment.
		Without loss of generality, we assume that $\Dom=\{1,2,3,4\}$.
		We then construct an instance $(G,\{1,2,3\},\Ecns^\prime,\mapf_\ini^\prime,\mapf_\tar^\prime)$ of \CSR{2} as follows.
		The idea is to simply replace a value $4$ with some value from $\{1,2,3\}$ for each vertex without changing the graph $G$.
		Let $v\in V(G)$ be a vertex such that $4\in \Vcns(v)$.
		Since $|\Vcns(v)|\le 3$, there exists a value $i$ in $\{1,2,3\}\setminus \Vcns(v)$.
		Let $\pi \colon \Dom \to \Dom$ be a permutation such that $\pi(i)=4$, $\pi(4)=i$ and $\pi(j)=j$ for each $j\notin \{i,4\}$, and we update $(\Vcns,\Ecns,\mapf_\ini,\mapf_\tar)$ as follows:
		\begin{itemize}
			\item $\Vcns(v):=\Vcns(v)\setminus \{4\} \cup \{i\}$;
			\item $\mapg:=(\pi(\mapg(v)),\mapg(w))$ for each neighbor $w$ of $v$ and each mapping $\mapg \in \Ecns(vw)$; and
			\item $\mapf_\iot(v):=\pi(\mapf_\iot(v))$ for each $\iot \in \{\ini,\tar\}$.
		\end{itemize}
		We repeat this operation until there is no vertex $v$ such that $4\in \Vcns(v)$, and let $\Inst^\prime=(G,\{1,2,3\},\Ecns^\prime,\mapf_\ini^\prime,\mapf_\tar^\prime)$ be the resulting instance.
		Observe that the construction can be done in polynomial time.
		
		Because we only replace values, $\Inst^\prime$ is a valid instance of \CSR{2} which is essentially equivalent to $\Inst$.
		Moreover, $G$ is bipartite planar and the domain has only three values, and thus this completes the proof.
	\end{proof}
	
	In contrast to Theorem~\ref{the:k3hard}, there exist polynomial-time algorithms for more restricted cases.
	We first show that the problem becomes tractable when restricted to \HR{L} and $\doms=3$.
	\begin{theorem}
		\label{the:k3easy}	
		\HR{L} can be solved in polynomial time if $\doms=3$.
	\end{theorem}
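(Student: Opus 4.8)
The plan is to run a case analysis on the \underlying{} graph $H$, exploiting that $H$ has only three vertices. Up to a permutation of $\Dom$ applied simultaneously to all lists, all constraints, and both $\mapf_\ini,\mapf_\tar$ (which does not change the answer, exactly as in the proof of Theorem~\ref{the:k3hard}), $H$ is one of the four graphs on three vertices: the edgeless graph, a single edge $\{1,2\}$ together with an isolated vertex $3$, the path $1$--$2$--$3$, or the triangle $K_3$. If $H=K_3$, the instance is an instance of \CR{L} with three colors, which is polynomial-time solvable~\cite{CHJ11}; so we may assume that $H$ is bipartite, and then, since $\mapf_\ini$ exists, $G$ is bipartite as well. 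Reconfigurability decomposes over the connected components of $G$ --- a reconfiguration sequence restricts to each component, and conversely per-component sequences can be concatenated --- and on a one-vertex component any two mappings respecting that vertex's list are reconfigurable in one step; hence it suffices to decide reconfigurability on a connected component $C$ with $|V(C)|\ge 2$, which is then connected bipartite with a unique bipartition $(A,B)$, every vertex having a neighbor.

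The core is the path case $H=P_3$ with $E(H)=\{\{1,2\},\{2,3\}\}$. The key observation is that for every solution $\mapf$ on $C$, both $\mapf^{-1}(2)$ and $\mapf^{-1}(\{1,3\})$ are independent sets (in $P_3$ the vertex $2$ has no loop, and $1,3$ are non-adjacent), so $(\mapf^{-1}(2),\mapf^{-1}(\{1,3\}))$ is a bipartition of $C$ and therefore equals $(A,B)$ or $(B,A)$; call $\mapf$ of \emph{type $A$} if $\mapf^{-1}(2)=A$, and of \emph{type $B$} otherwise. I would then prove two claims: (i) a single recoloring step preserves the type, because if $\mapf$ has type $A$ then no vertex of $A$ can be recolored (such a vertex has a neighbor in $B$ carrying a value in $\{1,3\}$, and in $P_3$ no value in $\{1,3\}$ is adjacent to a value in $\{1,3\}$), while a vertex $v\in B$ can only be recolored within $\{1,3\}$ (all its neighbors lie in $A$ and retain value $2$, and the neighbors of $2$ in $P_3$ are exactly $1$ and $3$); and (ii) any two type-$A$ solutions are reconfigurable, since under a type-$A$ solution the vertices of $A$ stay at value $2$ and each $v\in B$ can be recolored, independently and in one valid step, to any value of $\Vcns(v)\cap\{1,3\}$. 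Consequently the solution graph of $C$ has at most two connected components, the type-$A$ and the type-$B$ solutions, so $\mapf_\ini$ and $\mapf_\tar$ are reconfigurable if and only if they have the same type, which is decided in linear time.

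The other bipartite cases are degenerate instances of the same principle. For edgeless $H$ no solution exists on a component containing an edge, so this case cannot occur once $|V(C)|\ge 2$. For $H$ equal to the edge $\{1,2\}$ plus the isolated vertex $3$, every vertex of $C$ must receive a value in $\{1,2\}$ (it has a neighbor, and $3$ has no $H$-neighbor), so the solutions on $C$ are exactly the proper $2$-colorings of $C$ with colors $\{1,2\}$ respecting the lists: if some vertex $v$ has $|\Vcns(v)\cap\{1,2\}|=1$, the solution on $C$ is unique and the answer on $C$ is trivially ``yes''; otherwise $C$ has exactly two solutions, which differ on every vertex and hence form two isolated vertices of the solution graph, so the answer on $C$ is ``yes'' if and only if $\mapf_\ini$ and $\mapf_\tar$ agree on $C$. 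Conjoining the verdicts over all components yields the overall answer.

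The step I expect to be the main obstacle is making the bipartition/type analysis of $H=P_3$ fully rigorous --- in particular verifying that every intermediate mapping constructed in claim (ii) is again a solution, and that claims (i)--(ii) together really do pin down the connected components of the solution graph --- whereas reducing to connected instances, the bookkeeping over components, and handling the permutation of $\Dom$ are routine.
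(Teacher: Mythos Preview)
Your proof is correct and takes essentially the same approach as the paper: the heart of both arguments is the $P_3$ case, where the paper's criterion $V_\ini=V_\tar$ (with $V_\iot=\mapf_\iot^{-1}(\{1,3\})$) is exactly your ``same type'' condition, established via the same two implications as your claims~(i) and~(ii). The only difference is organizational---the paper first restricts to the connected component of $H$ hit by $\mapf_\ini$ and observes that a connected simple graph on at most three vertices is either complete (delegated to~\cite{CHJ11}) or $P_3$, whereas you enumerate all four three-vertex graphs and dispose of the edgeless and single-edge cases by hand.
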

	\begin{proof}
		Let $\Inst=(G,\Dom,\Ecns,\mapf_\ini,\mapf_\tar)$ be a given instance of \HR{L} such that $H$ is the \underlying{} graph with $|V(H)|\le |\Dom|=3$ and $\Vcns$ is a list assignment.
		We assume without loss of generality that $G$ is connected and has at least two vertices.
		Since $G$ is connected, for any homomorphism $\mapf$ from $G$ to $H$, there exists exactly one connected component $C$ of $H$ such that $\mapf(v)\in V(C)$ holds for every vertex $v\in V(G)$.
		Moreover, for any two homomorphisms $\mapf$ and $\mapf^\prime$ from $G$ to $H$ which correspond to different connected components, $|\diff{\mapf}{\mapf^\prime}|=|V(G)|\ge 2$ holds; and hence they are not adjacent in the solution graph.
		Because the relation of reconfigurability between homomorphisms is transitive, $C_\ini=C_\tar$ holds if $\mapf_\ini$ and $\mapf_\tar$ are reconfigurable, where $C_\ini$ and $C_\tar$ are connected components of $H$ corresponding to $\mapf_\ini$ and $\mapf_\tar$, respectively.
		Thus, we can assume that $C_\ini=C_\tar$, and let $H:=C_\ini$ and $\Dom:=V(C_\ini)$.
		If $H$ is complete, $\Inst$ is also an instance of \CR{L} with $|\Dom|\le 3$, which is solvable in polynomial time~\cite{CHJ11}.
		Otherwise, $H$ is a path $(\{1,2,3\},\{12,23\})$ of length two.
		Let $V_\ini$ (resp. $V_\tar$) be the set of all vertices $v\in V(G)$ with $\mapf_\ini(v)\in \{1,3\}$ ($\mapf_\ini(v)\in \{1,3\}$).
		We now claim that $\Inst$ is a yes-instance if and only if $V_\ini=V_\tar$, which can be checked in polynomial time.
		
		If $V_\ini=V_\tar$, then $\diff{\mapf_\ini}{\mapf_\tar}\subseteq V_\ini$ holds.
		Because $H$ contains no edge between $1$ and $3$ and $\mapf_\ini$ is a homomorphism from $G$ to $H$, $V_\ini$ must be an independent set of $G$.
		Therefore, we can independently change the value of each vertex in $\diff{\mapf_\ini}{\mapf_\tar}\subseteq V_\ini$ to obtain $\mapf_\tar$; and hence $\Inst$ is a yes-instance.
		
		We next assume that $\Inst$ is a yes-instance but $V_\ini \ne V_\tar$.
		Then, there exist two consecutive homomorphisms $\mapf$ and $\mapf^\prime$ in the reconfiguration sequence such that $(\mapf(v),\mapf^\prime(v))\in \{(1,2),(2,1),(3,2),(2,3)\}$ holds, where $v$ is the unique vertex in $\diff{\mapf}{\mapf^\prime}$;
		that is, $\mapf^\prime$ is obtained from $\mapf$ by changing the value of $v$ along an edge of $H$.
		Since $G$ is connected and has at least two vertices, $v$ has at least one neighbor $w$ in $G$.
		Because $\mapf$ is a homomorphism, $\mapf(v)=2$ if and only if $\mapf(w)\ne 2$.
		Similarly, because $\mapf^\prime$ is a homomorphism, $\mapf^\prime(v)=2$ if and only if $\mapf^\prime(w)=\mapf(w)\ne 2$.
		From the definition of $v$, $\mapf(v)=2$ if and only if $\mapf^\prime(v)\ne 2$.
		We thus have that $\mapf(w)\ne 2$ if and only if $\mapf(w)=2$, which is a contradiction.
		Therefore, $V_\ini \ne V_\tar$ if $\Inst$ is a yes-instance.
	\end{proof}
	
	We next show that \CSR{2} becomes tractable if $\doms$ is reduced from three to two.
	\begin{theorem}
		\label{the:k2easy}
		\CSR{2} can be solved in polynomial time if $\doms=2$.
	\end{theorem}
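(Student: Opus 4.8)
The plan is to give an explicit greedy procedure together with a ``shortest reconfiguration sequence'' lemma that certifies its correctness. Absorbing size-one hyperedges as in Section~\ref{sec:pre}, I may assume $\Dom=\{0,1\}$ and that every constraint $\Ecns(vw)$ is a binary relation over $\{0,1\}$. The procedure keeps a current solution $\mapf$, starting from $\mapf_\ini$; while $\mapf\ne\mapf_\tar$ it searches for a vertex $v\in\diff{\mapf}{\mapf_\tar}$ such that changing $\mapf(v)$ to the other value of $\Dom$ still gives a solution, performs one such flip if it exists, and otherwise reports ``not reconfigurable''; if $\mapf$ reaches $\mapf_\tar$ it reports ``reconfigurable''. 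Because $\Dom$ has only two values, flipping a vertex of $\diff{\mapf}{\mapf_\tar}$ decreases $|\diff{\mapf}{\mapf_\tar}|$ by exactly one, so there are at most $|V(G)|$ iterations, each testing at most $|V(G)|$ candidate flips against all constraints; hence the running time is polynomial. One direction of correctness is immediate: when the procedure answers ``reconfigurable'' it has actually built a reconfiguration sequence from $\mapf_\ini$ to $\mapf_\tar$.

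The core of the argument is the lemma that \emph{reconfigurable solutions $\mapf$ and $\mapg$ always admit a reconfiguration sequence of length exactly $|\diff{\mapf}{\mapg}|$}, i.e.\ one that flips each vertex of $\diff{\mapf}{\mapg}$ exactly once. To prove it I would exploit that every binary relation over $\{0,1\}$ is a conjunction of clauses of arity at most two, hence is closed under the coordinatewise ternary majority operation $\mathsf{maj}$; consequently the whole solution set is closed under $\mathsf{maj}$. Given any reconfiguration sequence $\mapf=\maph_0,\maph_1,\ldots,\maph_m=\mapg$, I replace each $\maph_i$ by $\maph_i':=\mathsf{maj}(\mapf,\mapg,\maph_i)$. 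Each $\maph_i'$ is again a solution, $\maph_0'=\mapf$, $\maph_m'=\mapg$, consecutive $\maph_i'$ differ in at most one vertex, and every $\maph_i'$ agrees with $\mapf$ (and with $\mapg$) on all vertices outside $\diff{\mapf}{\mapg}$. After deleting repetitions this is again a reconfiguration sequence from $\mapf$ to $\mapg$; its first step flips a single vertex $v$, which by the last property lies in $\diff{\mapf}{\mapg}$, and since $\Dom$ is Boolean the resulting solution $\mapf'$ has $\mapf'(v)=\mapg(v)$, so $|\diff{\mapf'}{\mapg}|=|\diff{\mapf}{\mapg}|-1$. Induction on $|\diff{\mapf}{\mapg}|$ then gives a length-$|\diff{\mapf'}{\mapg}|$ sequence from $\mapf'$ to $\mapg$, and prepending $\mapf\to\mapf'$ finishes the lemma.

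Correctness of the greedy procedure then follows by maintaining the invariant that the current $\mapf$ is reconfigurable to $\mapf_\tar$: this holds initially, it is preserved because each step moves to an adjacent solution, and whenever $\mapf\ne\mapf_\tar$ the lemma supplies a reconfiguration sequence from $\mapf$ to $\mapf_\tar$ of positive length whose first step is a legal greedy move, so the procedure cannot get stuck before reaching $\mapf_\tar$. I expect the main obstacle to be the lemma; within it, the only place the structure of the problem is really used is the $\mathsf{maj}$-closedness of solution sets of binary Boolean constraints, and once that observation is in place the rest is bookkeeping. (This coincides with the ``bijunctive'' case of the Boolean dichotomy of Gopalan et al.~\cite{GKMP09} specialized to arity two, but the above is self-contained and is the natural starting point for the fixed-parameter algorithm for $\doms+\NB$ in Theorem~\ref{the:nb}.)
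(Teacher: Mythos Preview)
Your argument is correct. The majority-closure of arity-$2$ Boolean relations, the projection $\maph_i\mapsto\mathsf{maj}(\mapf,\mapg,\maph_i)$, and the induction on $|\diff{\mapf}{\mapg}|$ all go through as you describe; the invariant for the greedy loop is maintained because reconfigurability is an equivalence relation, so any adjacent solution the greedy step chooses (not necessarily the one supplied by the lemma) stays in the same component.

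The paper, by contrast, does not build an algorithm from scratch: it simply observes that every relation $\Ecns(vw)\subseteq\{0,1\}^{\{v,w\}}$ is definable by a $2$-CNF formula (writing one clause $(v^a\vee w^b)$ per forbidden pair $(a,b)$), hence every \CSR{2} instance with $\doms=2$ is literally an instance of \emph{bijunctive} \BCSR{}, and then invokes the polynomial-time algorithm of Gopalan et al.~\cite{GKMP09} as a black box. So both proofs rest on the same structural fact---bijunctive equals $2$-CNF equals majority-closed---but the paper packages it as a one-line reduction, whereas you unfold the bijunctive case of~\cite{GKMP09} into a self-contained monotone lemma and an explicit greedy procedure. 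Your route yields more: it exhibits a shortest reconfiguration sequence of length exactly $|\diff{\mapf_\ini}{\mapf_\tar}|$ whenever one exists, and as you note it is indeed closer in spirit to the machinery the paper later builds for Theorem~\ref{the:nb}.
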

	\begin{proof}
		We reduce the problem to \textsc{bijunctive} \BCSR{}, which is solvable in polynomial time~\cite{GKMP09}.
		\textsc{Bijunctive} \BCSR{} is a special case of \BCSR{} where $\Dom=\{0,1\}$ and there exists a $2$-CNF formula $\phi(v_1,\ldots,v_r)$ such that $\Ecns(\{v_1\ldots,v_r\})$ is exactly the set of all satisfying assignments of $\phi$ for every hyperedge $\{v_1\ldots,v_r\}\in E(G)$.
		Let $\Inst=(G,\Dom,\Ecns,\mapf_\ini,\mapf_\tar)$ be a given instance of \CSR{2} where $G$ is a graph and $\Dom=\{0,1\}$.
		We now show that for every edge $vw\in E(G)$ there exists a $2$-CNF formula $\phi(v,w)$ such that $\Ecns(vw)$ is exactly the set of all satisfying assignments of $\phi$.
		For each $i\in \Dom$ and $u\in \{v,w\}$, we denote by $u^i$ a literal $u$ if $i=0$ or $\bar{u}$ if $i=1$.
		Then we define a $2$-CNF formula $\phi(v,w)$ as follows:
		\[
		\phi(v,w)=\bigwedge_{(a,b)\in \Dom^2 \setminus \Ecns(vw)} (v^a \vee w^b).
		\]
		Notice that a clause $(v^a \vee w^b)$ corresponds to a set $\Dom^2 \setminus \{(a,b)\}$.
		Therefore, $\phi(v,w)$ corresponds to the set
		\[
		\bigcap_{(a,b)\in \Dom^2 \setminus \Ecns(vw)} \Dom^2 \setminus \{(a,b)\}=\Dom^2 \setminus (\Dom^2 \setminus \Ecns(vw))=\Ecns(vw)
		\]
		as required.
	\end{proof}

\section{Boundary of time complexity}
\label{sec:boundary}

	In this section, we explore the boundary of time complexity of the problem.
	We first give a simple exact algorithm as follows.
	\begin{theorem}
		\label{the:exact}
		\CSR{0} can be solved in time $O^*(\doms^{O(n)})$, and hence \CSR{0} is fixed-parameter tractable when parameterized by $\doms+n$ and in XP when parameterized by $n$, where $n$ is the number of vertices of a given hypergraph.
	\end{theorem}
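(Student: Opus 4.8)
The plan is to solve \CSR{0} by explicitly building (a relevant portion of) the solution graph $\Sol{\Inst}$ and running a reachability check. First I would enumerate all candidate mappings: each solution lies in $\prod_{v\in V(G)} \Vcns(v) \subseteq \Dom^{V(G)}$, so there are at most $\doms^n$ of them, and for each candidate we can test in polynomial time (scanning every hyperedge $X\in E(G)$ and checking $\rest{\mapf}{X}\in\Ecns(X)$) whether it is a solution. This yields the vertex set of $\Sol{\Inst}$ in time $O^*(\doms^{O(n)})$.

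Next I would construct the edges of $\Sol{\Inst}$: for every pair of enumerated solutions $\mapf,\mapf^\prime$ we check whether $|\diff{\mapf}{\mapf^\prime}|=1$, which takes $O(n)$ time per pair and hence $O^*(\doms^{O(n)})$ in total over the at most $\doms^{2n}$ pairs. (Alternatively, and slightly faster, for each solution $\mapf$ and each vertex $v$ and each value $i\in\Vcns(v)$, form the mapping obtained by resetting $\mapf(v):=i$, test whether it is a solution, and if so record an edge; this enumerates all neighbors directly.) Once $\Sol{\Inst}$ is in hand, $\mapf_\ini$ and $\mapf_\tar$ are reconfigurable if and only if they lie in the same connected component, which is decided by a breadth-first or depth-first search from $\mapf_\ini$ in time linear in the size of $\Sol{\Inst}$, again $O^*(\doms^{O(n)})$.

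Summing the three phases gives total running time $O^*(\doms^{O(n)})$, which is the claimed bound. Fixed-parameter tractability with respect to $\doms+n$ is then immediate, since $\doms^{O(n)}$ is a function of $\doms+n$ alone (indeed bounded by $(\doms+n)^{O(\doms+n)}$), and membership in XP with respect to $n$ alone follows because, treating $n$ as fixed, the running time is polynomial in the input size $|\Inst|$ (note $\doms\le|\Inst|$).

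There is essentially no hard obstacle here; the only point requiring a word of care is that the solution graph can have $\doms^{n}$ vertices and up to $\doms^{2n}$ edges, so the algorithm is genuinely exponential-space as well as exponential-time, and one should remark that this is acceptable for the stated claim (an XP/FPT running-time statement, not a space-efficient one). If a polynomial-space variant were desired one could instead run a nondeterministic reachability search of the PSPACE type, but that is not needed for this theorem.
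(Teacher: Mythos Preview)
Your proposal is correct and follows essentially the same approach as the paper: explicitly construct the solution graph $\Sol{\Inst}$ (at most $\doms^n$ vertices, built in time $O^*(\doms^{O(n)})$) and decide reconfigurability by a breadth-first search for connectivity between $\mapf_\ini$ and $\mapf_\tar$. The paper's own proof is just a terser statement of exactly this; your additional remarks on edge enumeration, the FPT/XP consequences, and the space usage are all fine elaborations.
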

	\begin{proof}
		Our algorithm explicitly construct the solution graph and then check the connectivity between two given solutions.
		The solution graph has at most $\doms^n$ vertices and can be constructed in time $O^*(\doms^{O(n)})$.
		The connectivity can be checked in time polynomial in the size of the solution graph by a simple breadth-first search.
		Therefore, our algorithm runs in time $O^*(\doms^{O(n)})$.
	\end{proof}
	
	On the other hand, the following theorem implies that a fixed-parameter algorithm parameterized only by $n$ is unlikely to exist.
	\begin{theorem}
		\label{the:W1hard}
		\HR{} is $W[1]$-hard when parameterized by $n$.
	\end{theorem}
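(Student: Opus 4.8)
The plan is to give a parameterized reduction from \kkc, the problem of deciding whether a graph $\mathcal{G}$ equipped with a partition of its vertex set into parts $\kkgp{1},\dots,\kkgp{\kk}$, where $\kkgp{i}=\{\kkv{i}{1},\dots,\kkv{i}{\kk}\}$, contains a clique that uses exactly one vertex from each part; this problem (the partitioned clique problem) is well known to be $W[1]$-hard parameterized by $\kk$. From an instance $\mathcal{G}$ I would build, in polynomial time, an \HR{} instance $(G,H,\mapf_\ini,\mapf_\tar)$ in which the target graph $H$ has size polynomial in $\mathcal{G}$ but $G$ has only $O(\kk^2)$ vertices, so that $n=|V(G)|$ is bounded by a function of the parameter $\kk$. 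It then suffices to show that $\mapf_\ini$ and $\mapf_\tar$ are reconfigurable if and only if $\mathcal{G}$ has the required clique.

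The graph $G$ is built from three kinds of vertices. There are \emph{selector} vertices $x_1,\dots,x_\kk$, where $x_i$ is meant to point at the chosen vertex of $\kkgp{i}$; since \HR{} is polynomial-time solvable when $G$ is a tree~\cite{Wro18}, $G$ must contain cycles, so $x_1,\dots,x_\kk$ are tied together through a small \emph{verification gadget} that forms short cycles with them. Finally, each $x_i$ carries an attached \emph{list gadget} which, together with a suitable $H$, stays \emph{frozen} — every gadget vertex has a unique admissible image given its neighbours — in every homomorphism reachable from $\mapf_\ini$ or $\mapf_\tar$, and whose effect is to pin the image of $x_i$ into a private copy of $\kkgp{i}$ inside $H$. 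The target graph $H$ is assembled from $\mathcal{G}$ so that it carries a linear ``phase'' coordinate $0,1,\dots,L$ with $L=\Theta(\kk)$: every homomorphism $G\to H$ has a well-defined phase, $\mapf_\ini$ has phase $0$ and $\mapf_\tar$ has phase $L$, a single-vertex recoloring changes the phase by at most one, and — the crucial property — a homomorphism can occupy the middle phase only if the selectors simultaneously point at vertices $\kkv{1}{a_1},\dots,\kkv{\kk}{a_\kk}$ that are pairwise adjacent in $\mathcal{G}$.

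Correctness then runs as follows. If $\{\kkv{1}{a_1},\dots,\kkv{\kk}{a_\kk}\}$ is a clique of $\mathcal{G}$, we reconfigure $\mapf_\ini$ up to the middle phase while parking $x_i$ on the phase-annotated copy of $\kkv{i}{a_i}$, then push the verification gadget and the selectors across the middle phase — which is possible precisely because these vertices are pairwise adjacent — and finally descend symmetrically to $\mapf_\tar$, so the instance is a yes-instance. Conversely, along any reconfiguration sequence from $\mapf_\ini$ to $\mapf_\tar$ the phase starts at $0$, ends at $L$, and changes by at most one per step, so some homomorphism $\mapf$ in the sequence sits in the middle phase; since the list gadgets are frozen throughout, reading off $\mapf(x_1),\dots,\mapf(x_\kk)$ and stripping the phase annotation yields the required clique in $\mathcal{G}$.

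The main obstacle will be the design of $H$ (and of the verification gadget) that makes the middle phase a genuine, unavoidable, \emph{global} bottleneck. Three points need care: (i) defining the phase so that it cannot simply be decreased and then increased again to skirt the middle without the selectors committing — this needs an invariant/potential argument on the layered structure of $H$, and is why $G$ must contain cycles rather than, say, being a clique on the selectors, which would be unhelpfully rigid; (ii) arguing that occupancy of the middle phase forces all $\Theta(\kk^2)$ pairwise adjacencies at once, not merely the ones directly ``tested'' by individual edges of $G$, which is exactly what the verification gadget is for and where the freedom to make $H$ large does the heavy lifting; and (iii) verifying that every homomorphism appearing on a walk out of $\mapf_\ini$ (or $\mapf_\tar$) keeps all list gadgets frozen, so that the selectors' images remain confined to the copies of the parts in both directions of the argument. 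The remaining checks are routine bookkeeping.
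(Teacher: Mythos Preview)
Your plan takes a fundamentally different route from the paper, and in doing so overshoots. You reduce from the \emph{decision} problem \kkc{} and try to build a layered bottleneck in the reconfiguration space; the paper instead reduces from a \emph{reconfiguration} problem, \textsc{labeled clique reconfiguration}, which is $W[1]$-hard parameterized by the clique size $\tk$ (via the argument for \textsc{independent set reconfiguration} under token jumping~\cite{IKOSUY14}). With that source in hand the construction is almost immediate: take $G=K_\tk$ and let the underlying graph $H$ be the input graph $G'$ itself. A homomorphism $K_\tk\to G'$ is exactly a labeled $\tk$-clique of $G'$, and two such homomorphisms are adjacent in the solution graph iff the corresponding labeled cliques differ in one coordinate, so $\Sol{\Inst}\cong\Cls{\tk}{G'}$. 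No gadgets, phases, or freezing are needed, and $n=\tk$. Incidentally, this shows your instinct that making $G$ a clique on the selectors would be ``unhelpfully rigid'' is off: that rigidity is precisely what makes the one-line isomorphism go through.

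As written, your proposal has a genuine gap: the design of $H$ and of the verification gadget --- which you yourself flag as ``the main obstacle'' --- is not carried out. For \HR{} without lists, each of your three subgoals (simulating lists by subgraphs that stay frozen along \emph{every} walk from $\mapf_\ini$ and from $\mapf_\tar$, defining a global phase that is well-defined on every homomorphism and Lipschitz under single-vertex changes, and making a gadget of size $O(\kk^2)$ certify all $\binom{\kk}{2}$ adjacencies at the middle phase) is a real construction problem, and none is specified. The bottleneck template does work in nearby settings --- the paper's own proof of Theorem~\ref{the:LB} uses it for \CSR{2} --- but there one has arbitrary binary constraints to play with; pushing all of this through a single undirected $H$ is substantially harder, and the current write-up does not do it.
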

	\begin{proof}
		We give a parameterized reduction from \textsc{labeled clique reconfiguration}, which is defined as follows.
		Let $G^\prime$ be a simple graph and let $\tk$ be a positive integer.
		A $\tk$\emph{-labeled clique} ($\tk$\emph{-LC} for short) of $G^\prime$ is a vector $(u_1,u_2,\ldots,u_{\tk})$ consisting of $\tk$ distinct vertices $u_1,u_2,\ldots,u_{\tk}\in V(G^\prime)$ which form a clique.
		A $\tk$\emph{-labeled clique graph} $\Cls{\tk}{G^\prime}$ is a graph such that $V(\Cls{\tk}{G^\prime})$ is a set of all $\tk$-LCs of $G^\prime$, and two $\tk$-labeled cliques $C$ and $C^\prime$ are adjacent if and only if they differ on exactly one component.
		Then, \textsc{labeled clique reconfiguration} asks for a given graph $G^\prime$, an integer $\tk$, two $\tk$-LCs $C_\ini$ and $C_\tar$, whether there exists a walk between $C_\ini$ and $C_\tar$ in $\Cls{\tk}{G^\prime}$ or not.
		It is known that \textsc{labeled clique reconfiguration} is $W[1]$-hard when parameterized by $\tk$~\cite{IKOSUY14}.\footnote{
			Although they actually show the similar result for (unlabeled) \textsc{independent set reconfiguration}, the proof can be applied to \textsc{labeled clique reconfiguration}.}
		
		We now construct an instance $(G,\Dom,\Ecns,\mapf_\ini,\mapf_\tar)$ of \HR{} corresponding to an instance $(G^\prime,\tk,C_\ini,C_\tar)$ of \textsc{labeled clique reconfiguration}.
		Let $G$ be a complete graph $K_\tk$ with $\tk$ vertices $\{v_1,v_2,\ldots,v_\tk\}$, and let $\Dom=V(G^\prime)$.
		We define constraints for edges so that $G^\prime$ is an \underlying{} graph; that is, for each $v_i v_j \in E(G)$, we define $\Ecns(v_i v_j)=\{(u_p,u_q) \colon u_p u_q \in E(G^\prime)\}$.
		Observe that $(G,\Dom,\Ecns)$ is an instance of \textsc{homomorphism} with $\tk$ vertices.
		The remaining components, two solutions $\mapf_\ini$ and $\mapf_\tar$, are defined as follows.
		For any $\tk$-LC $C=(u_1,u_2,\ldots,u_{\tk})$ of $G^\prime$, we define $\phi_C$ be a mapping such that $\phi_C(v_i)=u_i$ for each $i\in \{1,2,\ldots,\tk\}$.
		Since $\phi_C(v_i) \phi_C(v_j)=u_i u_j\in E(G^\prime)$ holds for each distinct $i,j\in \{1,2,\ldots,\tk\}$, $\phi_C$ is a solution of $(G,\Dom,\Ecns)$.
		Thus, let $\mapf_\ini=\phi_{C_\ini}$ and $\mapf_\tar=\phi_{C_\tar}$.
		This completes the construction of $\Inst=(G,\Dom,\Ecns,\mapf_\ini,\mapf_\tar)$.
		Finally, $(G^\prime,\tk,C_\ini,C_\tar)$ is a yes-instance if and only if $\Inst$ is, because $\Cls{\tk}{G^\prime}$ and $\Sol{\Inst}$ are isomorphic under a mapping $\phi$.
	\end{proof}
	The following also follows as a corollary.
	\begin{corollary}
		\label{cor:W1hard}
		\HR{} is $W[1]$-hard when parameterized by $p$, where $p$ is any parameter which is polynomially bounded in $n$.
	\end{corollary}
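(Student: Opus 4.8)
The plan is to derive the corollary directly from the \emph{proof} of Theorem~\ref{the:W1hard}, rather than from its statement alone. The crucial feature of that reduction is that, given an instance $(G^\prime,\tk,C_\ini,C_\tar)$ of \textsc{labeled clique reconfiguration}, it outputs an instance $\Inst=(G,\Dom,\Ecns,\mapf_\ini,\mapf_\tar)$ of \HR{} whose hypergraph $G$ is the complete graph $K_\tk$; in particular the number of vertices is exactly $n=\tk$.

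Now suppose $p$ is a parameter that is polynomially bounded in $n$, say $p\le q(n)$ for some polynomial $q$. Applied to the instances produced by the reduction above, this gives $p\le q(n)=q(\tk)$, so the value of $p$ on the output instance is bounded by a function of the source parameter $\tk$. Since the reduction runs in polynomial time and maps yes-instances to yes-instances and no-instances to no-instances, it is therefore a valid parameterized reduction from \textsc{labeled clique reconfiguration} parameterized by $\tk$ to \HR{} parameterized by $p$. As \textsc{labeled clique reconfiguration} is $W[1]$-hard when parameterized by $\tk$~\cite{IKOSUY14}, we would conclude that \HR{} is $W[1]$-hard when parameterized by $p$.

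There is essentially no obstacle here: the entire content is the observation that $n=\tk$ in the construction, so every parameter controlled by $n$ is automatically controlled by $\tk$. The one point worth spelling out explicitly is that this argument does cover the graph parameters of Figure~\ref{fig:relation}(b) — such as treewidth, pathwidth, tree-depth, bandwidth, and minimum vertex cover size — each of which is at most $n$, whereas it does \emph{not} cover $\doms$, which in the construction equals $|V(G^\prime)|$ and is not bounded by any polynomial in $n=\tk$; this is consistent with the fixed-parameter tractability of \HR{} parameterized by $\doms$ plus the listed structural parameters.
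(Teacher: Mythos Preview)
Your argument is correct and matches the paper's intent: the paper gives no explicit proof, merely stating that the result ``follows as a corollary'' of Theorem~\ref{the:W1hard}, and your observation that the reduction produces an instance with $n=\tk$ is exactly the point.

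One small refinement: you frame the argument as requiring the \emph{proof} of Theorem~\ref{the:W1hard} rather than its statement, but in fact the statement suffices. If \HR{} is $W[1]$-hard parameterized by $n$, then by definition there is a parameterized reduction from some $W[1]$-hard problem $(A,\kappa)$ to $(\HR{},n)$ with $n\le g(\kappa)$ for some computable $g$; composing with the bound $p\le q(n)$ immediately yields $p\le q(g(\kappa))$, so the very same reduction is a parameterized reduction to $(\HR{},p)$. Your route through the explicit construction is of course equally valid, just more concrete than necessary.
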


	We finally give a lower bound on the computation time, which matches the running times of Theorem~\ref{the:exact} and two theorems which will be shown later.
	\begin{theorem}[*]
		\label{the:LB}
		Under ETH, there exists no algorithm solving \CSR{2} in time $\runtime{(\doms+n)}{\doms+n}$. 
	\end{theorem}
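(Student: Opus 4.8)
The plan is to reduce from \kkc{}: given a graph $\gra{H}$ whose vertex set is partitioned into $\kk$ groups $\kkgp{1},\dots,\kkgp{\kk}$ of size $\kk$, decide whether $\gra{H}$ has a clique using exactly one vertex of each group. It is known that, unless ETH fails, \kkc{} admits no $2^{o(\kk\log\kk)}$-time algorithm. I will construct, in polynomial time, an instance of \CSR{2} with domain size $\doms=\kk+O(1)$ and $n=\kk+1$ vertices that is a yes-instance if and only if the \kkc{} instance is. An algorithm for \CSR{2} running in time $\runtime{(\doms+n)}{\doms+n}$ would then, since $\doms+n=O(\kk)$, decide \kkc{} in time $2^{o(\kk\log\kk)}$, contradicting ETH. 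Since the running times of Theorems~\ref{the:exact}, \ref{the:vcfast} and \ref{the:nb} are all of the form $O^*(\doms^{O(n)})\le O^*((\doms+n)^{O(\doms+n)})$, this lower bound matches them.

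The heart of the reduction is the clique-encoding used in the proof of Theorem~\ref{the:W1hard}: introduce one variable $v_i$ per group $\kkgp{i}$ with list $\{1,\dots,\kk\}$, and for each pair $i<j$ let the constraint on $v_iv_j$ be the set of pairs $(a,b)$ with $\kkv{i}{a}\kkv{j}{b}\in E(\gra{H})$; the solutions of this sub-instance are exactly the cliques sought. The obstacle is that \CSR{2} does not ask whether a solution exists but whether two \emph{prescribed} solutions are reconfigurable, so the sub-instance must be embedded into a reconfiguration instance that always admits a fixed source and a fixed target solution and yet is ``locked'', in the sense that any walk from the source to the target is forced to pass through an assignment which is a genuine clique.

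To achieve this I would add two fresh ``slack'' values $\alpha,\beta$ to the list of every $v_i$, together with one extra ``switch'' vertex $z$, adjacent to all of $v_1,\dots,v_\kk$, with list $\{0,1\}$. The pairwise constraints are relaxed so that a $v_iv_j$-constraint is satisfied whenever one of its endpoints takes a slack value; the switch-constraints $\Ecns(zv_i)$ forbid $v_i=\beta$ when $z=0$ and forbid $v_i=\alpha$ when $z=1$, while imposing nothing when $v_i$ takes a value in $\{1,\dots,\kk\}$. We take $\mapf_\ini$ to be the solution with $z=0$ and all $v_i=\alpha$, and $\mapf_\tar$ the solution with $z=1$ and all $v_i=\beta$. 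In any reconfiguration sequence the variable $z$ must change from $0$ to $1$ at some single step; at that step only $z$ moves, so every $v_i$ must satisfy a $z=0$-constraint and a $z=1$-constraint at once, hence avoid both $\alpha$ and $\beta$ and lie in $\{1,\dots,\kk\}$, and then the pairwise constraints, with the slack clause inapplicable, state exactly that the selected vertices form a clique meeting each group once. Conversely, given such a clique with index $c_i$ chosen from $\kkgp{i}$, one changes the $v_i$ from $\alpha$ to $c_i$ one at a time (legal because the not-yet-changed $v_j$ are still slack, and the already-changed ones meet the clique constraints), then flips $z$ from $0$ to $1$, then changes each $v_i$ from $c_i$ to $\beta$ one at a time; this is a reconfiguration sequence from $\mapf_\ini$ to $\mapf_\tar$. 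Hence the two instances are equivalent, and the construction is evidently polynomial.

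The step I expect to need the most care is the correctness of this wrapper: checking that every intermediate assignment produced in the forward direction is a solution (which rests on the particular order of the single-vertex changes above) and that the locking argument really does certify a clique at the moment the switch flips, together with the routine but necessary observation that every new constraint is binary (each is supported on some pair $v_iv_j$ or $zv_i$), so that the output is a legitimate instance of \CSR{2}. With this in hand, $\doms=\kk+O(1)$ and $n=\kk+1$, and the counting in the first paragraph completes the proof. One can also avoid invoking the \kkc{} bound and reduce instead from a sparsified $3$-SAT formula on $p$ variables: first replace it by its equivalent binary ``dual'' CSP (which has $O(p)$ variables of domain size at most $7$), then merge $\Theta(\log p)$ of these variables into each super-variable, choosing the group size so that the resulting binary CSP has $\doms,n=\Theta(p/\log p)$, and finally apply the same slack-and-switch wrapper; then $\doms+n=\Theta(p/\log p)$, so an $\runtime{(\doms+n)}{\doms+n}$-time algorithm for \CSR{2} would solve $3$-SAT in $2^{o(p)}$ time, again contradicting ETH.
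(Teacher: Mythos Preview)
Your proof is correct and follows the same approach as the paper: reduce from \kkc{}, encode the clique as a binary CSP on $\kk$ variables with slack values so that trivial solutions always exist, and attach a small switch gadget whose reconfiguration forces a moment at which no variable carries a slack value, thereby certifying a clique. The only difference is cosmetic --- the paper uses one slack value $0$ and two extra vertices $w_1,w_2$ (with $\Ecns(w_1w_2)=\{(0,1),(0,2),(1,2),(2,1)\}$ and $\Ecns(w_1v_i)=\Dom^2\setminus\{(0,0)\}$, so $n=\kk+2$, $\doms=\kk+1$), whereas you use two slack values $\alpha,\beta$ and a single binary switch $z$ (so $n=\kk+1$, $\doms=\kk+2$); both give $\doms+n=O(\kk)$ and the same ETH consequence.
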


\section{PSPACE-completeness for graphs with bounded parameters}
\label{sec:gpara}

	In this section, we show the PSPACE-completenesses of \HR{L} and \CSR{2} for graphs with bounded parameter.

	\begin{theorem}[*]
		\label{the:pathhard}
		\HR{L} is PSPACE-complete even if $\doms=O(1)$ for paths.
	\end{theorem}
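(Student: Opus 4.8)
Membership in $\mathrm{PSPACE}$ is routine: if a reconfiguration sequence exists then one can be guessed step by step while storing only the current solution, so the problem lies in $\mathrm{NPSPACE}=\mathrm{PSPACE}$. For hardness, the plan is to reduce from \CR{} restricted to instances whose graph has bandwidth at most a constant $b$ and which use at most a constant number $\doms_0$ of colors; this restriction is $\mathrm{PSPACE}$-complete~\cite{Wro18} (it is the $\doms+\BW$ entry for \CR{} in \tablename~\ref{tab:result_graph}), and we may assume that a bandwidth-$b$ vertex ordering $u_1,u_2,\dots,u_N$ is supplied with the instance. Note that lists must genuinely be exploited here, since \HR{} on a tree, hence on a path, is polynomial-time solvable~\cite{Wro18}; so the hardness has to come from the interplay of a path-shaped $G$, a non-complete underlying graph, and the lists.

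\textbf{Construction.}
The idea is to \emph{serialize} the bounded-bandwidth instance along its ordering. Cut $u_1,\dots,u_N$ into consecutive blocks $W_1,W_2,\dots$ of $b$ vertices each, so that every edge of the input graph lies inside a single block or between two consecutive blocks. The path $G$ of the target instance will contain, for each block $W_j$, a stretch of constantly many vertices; the value assigned to a vertex of this stretch will encode a $\doms_0$-coloring of $W_j$ together with a bounded amount of auxiliary ``wire'' data that carries, along the path, the colors of the (constantly many) vertices of $W_j$ incident to an edge leaving $W_j$. Since $b$ and $\doms_0$ are constants, the set of all such encodings has constant size, so we may fix a single underlying graph $H$ whose vertex set is (a superset of) this set; the edges of $H$ are the obvious ``local consistency'' pairs (matching shared coordinates and wire values), which are symmetric, loopless, and — crucially — the same at every position of the path. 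The per-vertex lists of $G$ are then used only to (i) restrict each block-stretch to the colorings proper on the edges inside $W_j$, (ii) install the actual cross-block edges of the input graph at the appropriate places, and (iii) insert ``transition'' values so that the encoded coloring of one block can be updated one input-vertex at a time while every intermediate assignment remains a valid list homomorphism. The initial and target solutions $\mapf_\ini,\mapf_\tar$ are the encodings of the two given colorings, and $\doms=|V(H)|=O(1)$.

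\textbf{Correctness and the main obstacle.}
It is straightforward that list homomorphisms $G\to H$ respecting the lists correspond bijectively to proper colorings of the input graph; the heart of the proof is that this correspondence preserves reconfigurability. The forward direction is easy: recoloring a single input vertex $u\in W_j$ changes only the encoding of $W_j$, so it is mirrored by changing the ``own-color'' coordinate of the relevant $G$-vertex and then pushing the updated wire value along the constant-length wires into the neighbouring blocks — a constant number of moves, each staying within the lists thanks to the transition values. The backward direction is the main obstacle: a single $G$-move changes one value of one $G$-vertex, and one must guarantee that the corresponding change in the encoded coloring can always be carried out by single-vertex recolorings of the input graph that never leave the set of proper colorings. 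This is exactly what the auxiliary coordinates are for — each value of $H$ carries enough mode/pointer information that the list of each $G$-vertex, intersected with $E(H)$ and with the neighbours' current values, only permits moving to an encoding differing in a single controlled coordinate, so that every $G$-move is covered by at most one input recoloring (or is a pure wire move with no counterpart in the input instance). Making these gadgets precise with one fixed loopless simple $H$, while still allowing every legitimate transition, is the technical core; once it is in place, $\mapf_\ini$ and $\mapf_\tar$ are reconfigurable iff the two given colorings are, which yields $\mathrm{PSPACE}$-completeness of \HR{L} on paths with $\doms=O(1)$. (An alternative route would be to start from the $\mathrm{PSPACE}$-hardness of \HR{} on a cycle~\cite{Wro18} and cut the cycle into a path, using singleton lists at the two endpoints to pin the identified vertex; this hits the same obstacle, namely ensuring that no legitimate cyclic reconfiguration is lost when the basepoint is frozen.)
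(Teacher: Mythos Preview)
Your proposal is a sketch that identifies but does not close the central gap: you explicitly leave ``making these gadgets precise'' as unfinished business, and the backward direction of the correspondence --- that every single move on $G$ translates to at most one legal single-vertex recoloring of the input --- is precisely the claim that needs a concrete construction, not a promise that auxiliary coordinates will take care of it. Without specifying $H$ and the lists exactly and verifying that (a) $H$ is simple and loopless, (b) every legitimate input recoloring lifts to a finite sequence of moves on the path, and (c) no illegitimate multi-vertex recoloring can sneak through as a single $H$-move between block-encodings, the argument is incomplete.

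The paper takes a much shorter route that sidesteps all of this. It reduces from $R$\textsc{-word reconfiguration}: for a fixed directed graph $R$, an $R$-word of length $\rho$ is a directed walk of $\rho$ vertices in $R$, two $R$-words are adjacent if they differ in exactly one position, and one asks whether two given $R$-words are reconfigurable. Wrochna~\cite{Wro18} proved there is a fixed $R$ for which this is $\mathrm{PSPACE}$-complete; that result already absorbs the bounded-bandwidth serialisation you were planning to redo by hand. Given such an $R$, the paper takes $G$ to be the path $v_1,\dots,v_\rho$ and builds $H$ from three disjoint layers $L^0,L^1,L^2$, each a copy of $V(R)$, with an undirected edge $x^p y^{p+1}$ (indices mod~$3$) for every arc $xy\in E(R)$. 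The list of $v_i$ is $L^{i\bmod 3}$. Then list homomorphisms $G\to H$ are in bijection with $R$-words of length $\rho$ --- the layer is forced by the list, so the value at $v_i$ encodes exactly one symbol --- and this bijection trivially preserves Hamming distance~$1$. There is no backward-direction difficulty because one path vertex encodes one symbol, not a whole block; the three-layer trick is what turns the directed $R$ into a simple undirected $H$ while the lists remember the direction of traversal.

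Your ``alternative route'' of cutting Wrochna's cycle and pinning the endpoints with singleton lists is indeed unsafe for the reason you give; the correct fix is not to cut the cycle but to go one step back in Wrochna's pipeline to the $R$-word problem and redo the layer construction on a path, using lists in place of the cycle's wrap-around to enforce the layers --- which is exactly the paper's proof.
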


	\begin{corollary}
		\label{cor:mwhard}
		\CSR{2} is PSPACE-complete even if $\doms=3$ for complete graphs.
	\end{corollary}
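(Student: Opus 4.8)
The plan is to derive this from Theorem~\ref{the:k3hard} by a \emph{padding} argument that enlarges the underlying graph into a complete graph without changing the reconfiguration question at all. Recall that Theorem~\ref{the:k3hard} supplies, for $\doms=3$, a PSPACE-hard family of instances $(G,\Dom,\Ecns,\mapf_\ini,\mapf_\tar)$ of \CSR{2}; here $G$ is in particular a (bipartite planar) graph, but no property of $G$ beyond being a graph will be used. First I would take such an instance and replace $G$ by the complete graph $G^{+}$ on the vertex set $V(G)$, defining a new constraint assignment $\Ecns^{+}$ by $\Ecns^{+}(e)=\Ecns(e)$ for every $e\in E(G)$ and $\Ecns^{+}(e)=\Dom^{e}$ (the trivial constraint) for every new edge $e\in E(G^{+})\setminus E(G)$, and keeping $\mapf_\ini,\mapf_\tar$ unchanged. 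This is clearly a legitimate \CSR{2} instance with $\doms=3$ whose primal graph is $K_{|V(G)|}$, and it is computable in polynomial time.

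The heart of the argument is the observation that a mapping $\mapf\in\Dom^{V(G)}$ satisfies all constraints of $(G^{+},\Dom,\Ecns^{+})$ if and only if it satisfies all constraints of $(G,\Dom,\Ecns)$, simply because the constraints on the new edges are trivial and hence satisfied by every mapping. Consequently the two CSP instances have the very same set of solutions, and since adjacency in a solution graph is determined purely by that set (two solutions are adjacent iff they differ in exactly one coordinate), the solution graphs $\Sol{(G^{+},\Dom,\Ecns^{+})}$ and $\Sol{(G,\Dom,\Ecns)}$ coincide. Therefore $\mapf_\ini$ and $\mapf_\tar$ are reconfigurable in the new instance if and only if they were in the original one, so the new instance is a yes-instance iff the old one is, establishing PSPACE-hardness of \CSR{2} on complete graphs with $\doms=3$. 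Membership in PSPACE is inherited from \CSR{0} (the solution graph has at most $\doms^{n}$ vertices, adjacency can be tested in polynomial time, and reachability between two solutions can therefore be decided in polynomial space), which yields PSPACE-completeness. Finally, since complete graphs have bounded modular-width, this is exactly what is needed for the $\doms+\MW$ entry of \CSR{2} in \tablename~\ref{tab:result_graph}.

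I do not anticipate any genuine obstacle: the reduction is linear-size and its correctness is immediate from the definition of the solution graph. The only points that deserve an explicit sentence are (i) that attaching trivial constraints to the missing edges really does yield a bona fide \CSR{2} instance whose primal graph is complete and whose solution set is untouched, and (ii) that it is essential to start from Theorem~\ref{the:k3hard} rather than from the path instances of Theorem~\ref{the:pathhard}: the same padding applied to a path instance would only give complete graphs with $\doms=O(1)$, whereas going through Theorem~\ref{the:k3hard} pins the domain size down to $3$.
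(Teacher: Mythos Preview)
Your proposal is correct and is essentially identical to the paper's own proof: the paper also starts from the instance produced in Theorem~\ref{the:k3hard}, completes the graph by adding every missing edge with the trivial constraint $\Dom^{2}$, and observes that this leaves the solution graph (hence reconfigurability) unchanged. Your write-up merely spells out in more detail the points the paper leaves implicit (PSPACE membership and the link to $\MW$), but the reduction and its justification are the same.
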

	\begin{proof}
		Let $(G,\Dom,\Ecns,\mapf_\ini,\mapf_\tar)$ be an instance of \CSR{2} constructed in Theorem~\ref{the:k3hard}.
		We then add an edge between every non-adjacent pair, and give a trivial constraint $\Ecns(vw)=\Dom^2$ to every added edge $vw$.
		Notice that this modification does not change the solution graph, and thus the reconfigurability.
	\end{proof}

\section{Fixed-parameter algorithm with respect to graph parameters}
\label{sec:kernelFPT}

	We give the following theorems in this section.
	\begin{theorem}[*]
		\label{the:mweasy}
		\HR{L} is fixed-parameter tractable when parameterized by $\doms+\MW$.
	\end{theorem}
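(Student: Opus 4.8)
The plan is to generalize the fixed-parameter algorithm of Hatanaka et al.~\cite{HIZ18} for \CR{L} parameterized by $\doms+\MW$; the only genuinely new ingredient is that the complete \underlying{} graph $K_\doms$ of list colouring is replaced by an arbitrary \underlying{} graph $H$. First I would compute a modular decomposition tree $T$ of $G$ (recall that for \HR{L} the primal graph $\Prm{G}$ equals $G$, since $G$ is $2$-uniform): $T$ is rooted, its leaves are the vertices of $G$, and its internal nodes are \emph{union}, \emph{join}, or \emph{prime} nodes, a prime node having at most $\MW$ children, each child being a module of $G$. A union (resp.\ join) node of large arity can be replaced by a chain of binary union (resp.\ join) nodes, every internal node of which is again a module of $G$, so I may assume every internal node of $T$ has at most $\max\{2,\MW\}$ children. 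The algorithm is then a bottom-up dynamic program over $T$. The structural fact that drives everything is that each vertex outside a module $M$ is adjacent either to all of $M$ or to none of it, so from outside $M$ a solution $\mapf$ is visible only through its \emph{image set} $\mapf(M)\subseteq\Dom$: an external common neighbour of $M$ carrying a value $c$ is compatible with $\mapf$ precisely when $c\in\bigcap_{d\in\mapf(M)}N_H(d)$, a set depending only on $\mapf(M)$. Since there are at most $2^\doms$ possible image sets, the ``interface complexity'' of a module is a function of $\doms$ alone; this is where the first half of the parameter enters, whereas $\MW$ bounds the branching at prime nodes.

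For each node $t$ of $T$ with module $M_t$ I would maintain a \emph{reconfigurability summary} of size bounded by a function of $\doms$. For $R\subseteq\Dom$, let $\Gamma_t(R)$ denote the solution graph of the \HR{L} instance obtained from $G[M_t]$ by intersecting every list with $R$ (equivalently, the graph whose vertices are all solutions of $G[M_t]$ with image contained in $R$, two being adjacent when they differ on one vertex). The summary of $t$ records, for the two fixed restrictions $\rest{\mapf_\ini}{M_t}$ and $\rest{\mapf_\tar}{M_t}$ and for every pair $(R,R')$ of achievable image sets, exactly the connected-component information needed to answer queries of the form: ``from a solution of $G[M_t]$ with image $R$, can one reach a solution with image $R'$ through solutions whose images all belong to a prescribed family $\mathcal{F}\subseteq 2^\Dom$?'' Because $\mathcal{F}$ ranges over families of subsets of $\Dom$, the number of distinct queries --- hence the number of bits stored in the summary --- is bounded in $\doms$; keeping this invariant up the tree is the heart of the argument.

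The recurrence at a prime node $t$ with children $t_1,\dots,t_p$ ($p\le\MW$) arranged in a prime graph $P$ is the core step. I would enumerate all $p$-tuples $(R_1,\dots,R_p)$ of subsets of $\Dom$, the \emph{joint image configurations} --- at most $(2^{\doms})^{\MW}$ of them, a function of the parameter --- keeping only the consistent ones, for which $R_i\times R_j\subseteq E(H)$ whenever $t_it_j\in E(P)$ (in \cite{HIZ18} this condition reads ``$R_i\cap R_j=\emptyset$'', and this is essentially the only place where a general $H$ rather than $K_\doms$ is used). On the consistent configurations I build an auxiliary graph joining two of them that differ in a single coordinate (say $R_i$ becomes $R_i'$) whenever the summary of child $t_i$ certifies an internal reconfiguration realising that change whose intermediate images stay compatible with the frozen coordinates $R_j$, $j\ne i$; walks in this auxiliary graph, spliced with the children's summaries for the legs inside a fixed configuration, yield the summary of $t$. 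A binary union or join node is the case $p=2$ with, respectively, $P$ having no edge or one edge; at a union node one additionally uses that $\Gamma_t(R)$ is the Cartesian product of $\Gamma_{t_1}(R)$ and $\Gamma_{t_2}(R)$ and that $M_t$'s image is the union of its children's images. Reading the summary at the root with $R=\Dom$ then tells whether $\mapf_\ini$ and $\mapf_\tar$ lie in the same component of $\Sol{\Inst}$.

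The step I expect to be the main obstacle is proving the prime-node recurrence correct, particularly the direction that ``summarises'' an actual reconfiguration sequence of the whole graph: one must argue that such a sequence may, without loss of generality, be taken to visit only the bounded family of joint image configurations we enumerate --- so that arbitrarily long internal detours within a child can be shortcut by what the child's summary promises --- and that the summary of $t$ carries neither more nor less than what its parent will ever query. A secondary but unavoidable point is that, unlike proper colouring, the list-homomorphism constraint $E(H)\cap(\Vcns(v)\times\Vcns(w))$ is not ``all pairs of distinct values,'' so every use in \cite{HIZ18} of the fact that fully adjacent modules receive disjoint colour sets has to be rederived from $R_i\times R_j\subseteq E(H)$; I expect this to be mechanical, but it is precisely what needs checking throughout.
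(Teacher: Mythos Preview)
The paper takes a different and much shorter route: it does not do dynamic programming on the modular decomposition at all, but kernelizes via the reduction rule (Lemma~\ref{lem:reduce}). Two induced subgraphs $H_1,H_2$ are declared \emph{\CR{L}-identical} when a bijection $\isov\colon V(H_1)\to V(H_2)$ preserves the internal edge relation and, for every vertex, the outside neighbourhood, the pair $(\mapf_\ini(v),\mapf_\tar(v))$, and the list $\Vcns(v)$. Because in \HR{L} the constraint $\Ecns(vw)=E(H)\cap(\Vcns(v)\times\Vcns(w))$ is determined by the two lists alone, \CR{L}-identical already forces condition~(4) of Definition~\ref{def:identical}, so the general reduction rule applies and one subgraph may be deleted. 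This notion coincides literally with the one \cite{HIZ18} used for \CR{L}, so their kernelization argument transfers verbatim, producing an equivalent instance whose size is bounded in $\MW$, $\doms$, and the maximum clique size of $G$; the only new observation is that this clique size is at most $\doms$ since $G$ maps homomorphically to a simple $\doms$-vertex graph. Theorem~\ref{the:exact} applied to the kernel finishes. No analogue of your ``$R_i\times R_j\subseteq E(H)$ replaces $R_i\cap R_j=\emptyset$'' is ever needed, because the argument operates at the level of abstract constraints rather than their combinatorial description.

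Your DP route might be completable, but the prime-node step conceals a difficulty your summary does not yet resolve. You record image sets and their $\mathcal{F}$-reachability \emph{from the two anchored solutions} $\rest{\mapf_\ini}{M_t}$ and $\rest{\mapf_\tar}{M_t}$. At a prime node, however, the constraint $\mathcal{F}$ seen by a child $t_i$ changes each time another coordinate $R_j$ moves, so the child's internal walk runs under a \emph{sequence} of constraints; between two nodes of your auxiliary graph you are parked at some intermediate solution of $M_{t_i}$ that need not be either anchor, and two distinct solutions sharing the same image $R_i$ can lie in different components under the next constraint. Hence the image tuple $(R_1,\dots,R_p)$ does not determine the actual state of the children well enough for the auxiliary-graph edges to be sound in both directions. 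Closing this would require a richer per-child invariant --- roughly, for every $\mathcal{F}$, which $\mathcal{F}$-component of the child's solution graph one occupies --- together with a proof that the number of such types is bounded in $\doms$. That is substantial additional work which the paper's kernelization sidesteps entirely.
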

	\begin{theorem}[*]
		\label{the:tdeasy}
		\CSR{0} is fixed-parameter tractable when parameterized by $\doms+\TD$.
	\end{theorem}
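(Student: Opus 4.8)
The plan is to recurse on a tree-depth decomposition of the primal graph $\Prm{G}$, peeling off one vertex at a time and exploiting the product structure that appears once that vertex is fixed. First I would reduce to the case that $\Prm{G}$ (equivalently $G$) is connected: two solutions of $\Inst$ are reconfigurable if and only if their restrictions to each connected component are reconfigurable in the corresponding induced subinstance, so the components can be handled separately. A rooted tree-depth decomposition of height $\TD(\Prm{G})$ can be computed in $f(\TD)\cdot\mathrm{poly}(n)$ time by a standard algorithm. The only place the hypergraph structure enters is the observation that every hyperedge $X\in E(G)$ is a clique of $\Prm{G}$, hence a root-to-leaf chain of the decomposition; in particular each hyperedge either avoids the root $v$ or contains $v$, and in the latter case $X\setminus\{v\}$ lies entirely in one branch below $v$. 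Because of this, fixing $v$ to a value $a\in\Vcns(v)$ turns the instance on $G\setminus\{v\}$ into a disjoint union, over the components $C_1,\dots,C_m$ of $G\setminus\{v\}$, of \CSR{0}-instances $\Inst_i^{\,a}$ whose primal graphs $\Prm{G}[C_i]$ have tree-depth at most $\TD(\Prm{G})-1$ (constraints that contained $v$ are simply absorbed into the $\Inst_i^{\,a}$, adding no primal edges).

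For the recursion I would solve a slight generalization of \CSR{0}: an instance together with a set $U$ of \emph{portal} vertices (the ancestors peeled off so far) and a prescribed \emph{trace} $\boldsymbol\beta=(\beta^{(0)},\dots,\beta^{(L)})$ of tuples in $\Dom^{U}$, consecutive tuples differing in exactly one coordinate, where we ask for a reconfiguration sequence in which the restriction to $U$, read off and with repetitions collapsed, equals $\boldsymbol\beta$ while everything else moves freely; plain \CSR{0} is the case $U=\emptyset$. Peeling off the root $v$: its value follows some sequence during the reconfiguration, and interleaving that sequence with $\boldsymbol\beta$ yields a trace $\boldsymbol\gamma$ over $U\cup\{v\}$. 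The algorithm would \emph{guess} $\boldsymbol\gamma$ and recurse on $(\Inst_i,\,U\cup\{v\},\,\boldsymbol\gamma)$ for every $C_i$, accepting iff some guess makes all recursive calls accept; the base case (empty active graph) just checks that $\boldsymbol\beta$ is internally consistent. The key quantitative input — and the main obstacle, discussed below — is a lemma guaranteeing that \emph{if the instance is a yes-instance at all}, then $v$ can be assumed to change value at most $h(\doms,\TD)$ times; hence at every level $\boldsymbol\gamma$ can be taken of length bounded in $\doms+\TD$, leaving only $g(\doms,\TD)$ candidate traces to guess.

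Correctness of the recursion rests on a product-structure claim. Fix a maximal stretch of the sequence during which all of $U\cup\{v\}$ is constant (an \emph{epoch}): there the solutions of $\Inst$ form the direct product $\prod_i\Sol{\Inst_i^{\,a}}$, and the condition that the current configuration allows the next coordinate of $U\cup\{v\}$ to be switched is a conjunction of per-component conditions, since it only concerns hyperedges touching the switched vertex and each such hyperedge lies in a single $C_i$ or entirely among the portals. Consequently the set of configurations reachable from $\mapf_\ini$ by sequences whose $(U\cup\{v\})$-trace equals a given $\boldsymbol\gamma$ is exactly the product of the corresponding per-component reachable sets, so $\mapf_\tar$ is reachable iff for every $i$ the configuration $\mapf_\tar|_{C_i}$ is reachable from $\mapf_\ini|_{C_i}$ following $\boldsymbol\gamma$ — precisely the ``$\exists$ guess, $\forall i$'' form computed by the algorithm. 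The running time satisfies $T(d)\le g(\doms,\TD)\cdot n\cdot T(d-1)+\mathrm{poly}(n)$ with $T(0)$ polynomial, which solves to $g'(\doms,\TD)\cdot\mathrm{poly}(n)$; hence \CSR{0} is fixed-parameter tractable in $\doms+\TD$.

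The hard part is the lemma bounding the number of value changes of the root in a yes-instance (equivalently, that short portal traces suffice). I would prove it by an exchange argument on a shortest reconfiguration sequence: whenever the root returns to a previously seen value without ``essential progress'' in the branches below, splice out the detour. To make ``essential progress'' finitary one tracks, instead of the (unboundedly many) reachability classes of $\Sol{\Inst_i^{\,a}}$, only a bounded summary per component — for each possible next portal value $b$, whether the set of configurations of $C_i$ reached so far contains a $b$-ready one — whose range is bounded in $\doms+\TD$ by induction on the decomposition. The delicate point is that a splice must be valid for all components $C_1,\dots,C_m$ simultaneously, i.e.\ the synchronization at epoch boundaries must be preserved; reconciling the per-component summaries into one bound on the root's activity is where the real work lies. (That we route everything through the primal graph and treat each constraint as an opaque set is exactly what lets this argument, known for \HR{} on graphs of bounded tree-depth, go through for arbitrary hypergraph constraints.)
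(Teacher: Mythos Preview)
Your approach is genuinely different from the paper's. The paper never bounds trace lengths or guesses root-value sequences; it \emph{kernelizes}. After a preprocessing that replaces all hyperedges whose bottommost vertex is $v$ by the single hyperedge $\ances{v}\cup\{v\}$, it shows (Lemma~\ref{lem:IDtup}) that two children of a common node whose subtrees carry the same ``ID-tuples'' --- isomorphism type of the subtree, $(\mapf_\ini,\mapf_\tar)$-values on its vertices, and constraints up to the canonical bijection --- are identical in the sense of Definition~\ref{def:identical}, so one of the two subtrees may be deleted by the reduction rule (Lemma~\ref{lem:reduce}). Iterating from the leaves upward leaves an instance whose size is bounded purely in $\doms+\TD$ (Lemma~\ref{lem:kernel}), and Theorem~\ref{the:exact} finishes by brute force on that kernel.

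The gap you flag in your own sketch is not a detail but the crux. Your splicing argument tracks a monotone per-component summary of bounded range $s=s(\doms,\TD)$, so ``essential progress'' can happen at most $m\cdot s$ times, where $m$ is the number of components below the root. Since $m$ can be $\Theta(n)$, this yields a trace-length bound that depends on $n$, and hence $\doms^{\Theta(n)}$ candidate traces to enumerate --- not FPT. To collapse the bound to a function of $\doms+\TD$ alone you would need components of the same \emph{type} to make progress in lockstep, so that only the number of distinct types matters; but that is exactly the identical-subtree reduction on which the paper builds its whole argument. The lemma you want is true, yet the only visible route to it passes through the kernelization idea --- and once you have that idea, the trace-guessing recursion becomes superfluous.
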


\subsection{Reduction rule}
\label{sec:reduce}
	
	In order to prove the above theorems, we give fixed-parameter algorithms which are based on the concept of ``kernelization''.
	That is, we compute from the given instance into another instance whose size depends only on the parameter.
	After that, we can solve the problem by using Theorem~\ref{the:exact}.
	
	In this subsection, we show some useful lemma, which compresses an input hypergraph into a smaller hypergraph with keeping the reconfigurability.
	We note that this is the extension of the lemma given in \cite{HIZ18} to obtain a fixed-parameter algorithm for \CR{L} parameterized by $\doms+\MW$.
	The main idea is to ``identify'' two subgraphs which behave in the same way with respect to the reconfigurability.
	
	We now formally characterize such subhypergraphs and explain how to identify them.
	Let $\Inst=(G,\Dom,\Ecns,\mapf_\ini,\mapf_\tar)$ be an instance of \CSR{0}.
	For each vertex $v\in V(G)$, we define $\asgn{v}$ as a pair $(\mapf_{\ini}(v),\mapf_{\tar}(v))$ consisting of the initial and the target value assignments of $v$.
	Let $V_1$ and $V_2$ be two non-empty vertex subsets of $G$ such that $|V_1|=|V_2|$, and $V_1\cap V_2=\emptyset$.
	Assume that $\Neigh{G}{V_1}=\Neigh{G}{V_2}=W$.
	Let $H_1=G[V_1]$, $H_2=G[V_2]$, $H_1^\prime=G[V_1\cup W]$ and $H_2^\prime=G[V_2\cup W]$.
	\begin{definition}
		\label{def:identical}
		Two induced subhypergraphs $H_1$ and $H_2$ are \emph{identical} if there exist two bijections $\isov \colon V(H_1^\prime) \to V(H_2^\prime)$ and $\isoe \colon E(H_1^\prime) \to E(H_2^\prime)$ which satisfy the following four conditions:
		\begin{enumerate}[(1)]
			\item $H_1^\prime$ and $H_2^\prime$ are isomorphic under $\isov$ and $\isoe$.
			\item for every vertex $v\in W$, $\isov(v)=v$;
			\item for every vertex $v\in V_1$, $\asgn{v}=\asgn{\isov(v)}$, that is, $\mapf_{\ini}(v)=\mapf_{\ini}(\isov(v))$ and $\mapf_{\tar}(v)=\mapf_{\tar}(\isov(v))$; and
			\item for every hyperedge $X\in E(H_1)$, $\Ecns(\isoe(X))=\repcns{\Ecns}{\widehat{\isov}}(X)$, where $\widehat{\isov}=\rest{\phi}{X}$.
		\end{enumerate}
	\end{definition}
	See \figurename~\ref{fig:identical} for an example.

\begin{figure}[t]
	\begin{center}
		\includegraphics{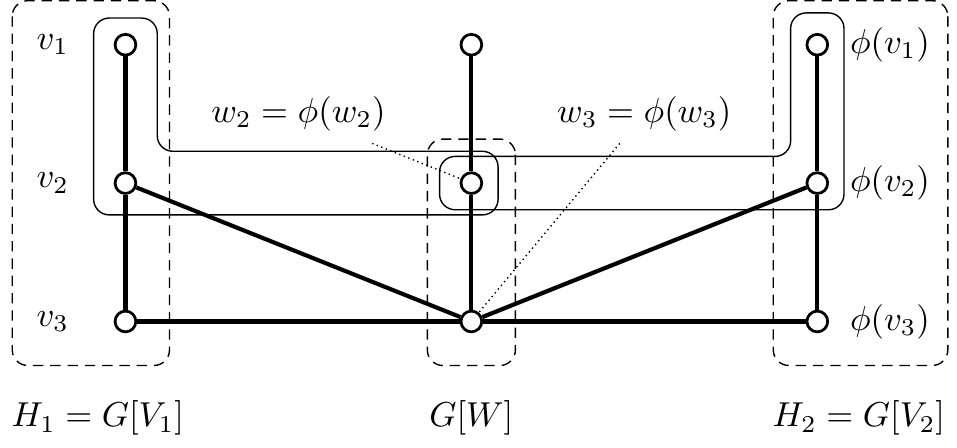}
	\end{center}
	\caption{An example of two subhypergraphs $H_1$ and $H_2$ of $G$ which satisfies the conditions (1) and (2).
		We draw each hyperedge of size two as a solid line, and omit the bijection $\isoe \colon E(H_1^\prime) \to E(H_2^\prime)$ since it is uniquely defined from $\isov \colon V(H_1^\prime) \to V(H_2^\prime)$.
		If $\asgn{0}$ and $\Ecns$ satisfy the conditions (3) and (4), $H_1$ and $H_2$ are identical.}
	\label{fig:identical}
\end{figure}

	\medskip
	We next define another instance $\Inst^\prime=(G^\prime,\Dom,\Ecns^\prime,\mapf_\ini^\prime,\mapf_\tar^\prime)$ as follows:
	\begin{itemize}
		\item $G^\prime=G \setminus V_2$;
		\item $\mapf_\ini^\prime=\rest{\mapf_\ini}{V(G^\prime)}$ and $\mapf_\ini^\prime=\rest{\mapf_\ini}{V(G^\prime)}$; and
		\item  for each $X^\prime \in E(G^\prime)$, $\Ecns^\prime(X^\prime)=\bigcap_{X\in E^\prime}\subcns{X}$, 
		where $E^\prime=\{X\in E(G) \colon X\setminus V_2=X^\prime \}$ and $\subcns{X}=\{\rest{\mapg}{X^\prime} \colon \mapg \in \Ecns(X)\}$.
	\end{itemize}
	Intuitively, $\Inst^\prime$ is obtained by restricting all components (hypergraphs, mappings in constraints, and two solutions) of $\Inst$ on $V(G)\setminus V_2$.
	We say that $\Inst^\prime$ is obtained from $\Inst$ by \emph{identifying} $H_1$ with $H_2$.
	
	\medskip
	Then, we have the following lemma, which says that $\Inst$ and $\Inst^\prime$ are equivalent with respect to the feasibility.
	\begin{lemma}[*]
		\label{lem:reduce_pre}
		Let $\mapf^\prime \colon V(G^\prime) \to \Dom$ be a mapping from $V(G^\prime)$ to $\Dom$.
		Then, $\mapf^\prime$ is a solution for $(G^\prime,\Dom,\Ecns^\prime)$ if and only if there exists a solution $\mapf$ for $(G,\Dom,\Ecns)$ such that $\mapf^\prime=\rest{\mapf}{V(G^\prime)}$.
	\end{lemma}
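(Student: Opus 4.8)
The plan is to prove the two implications separately. The ``only if'' direction is essentially immediate, so I would dispatch it first; the ``if'' direction is where Definition~\ref{def:identical} does the real work.

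For the ``only if'' direction I would take a solution $\mapf$ for $(G,\Dom,\Ecns)$, put $\mapf^\prime=\rest{\mapf}{V(G^\prime)}$, and check the constraint of an arbitrary $X^\prime\in E(G^\prime)$: for each $X\in E^\prime$ we have $X^\prime=X\setminus V_2\subseteq X$, so $\rest{\mapf^\prime}{X^\prime}=\rest{(\rest{\mapf}{X})}{X^\prime}\in\subcns{X}$ because $\rest{\mapf}{X}\in\Ecns(X)$; intersecting over $X\in E^\prime$ gives $\rest{\mapf^\prime}{X^\prime}\in\Ecns^\prime(X^\prime)$, so $\mapf^\prime$ is a solution.

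For the ``if'' direction I would start from a solution $\mapf^\prime$ for $(G^\prime,\Dom,\Ecns^\prime)$ and extend it to $V_2$ through the isomorphism: since $\isov$ fixes $W$ pointwise it restricts to a bijection $V_1\to V_2$, so I can set $\mapf(v)=\mapf^\prime(v)$ for $v\notin V_2$ and $\mapf(v)=\mapf^\prime(\isov^{-1}(v))$ for $v\in V_2$; then $\rest{\mapf}{V(G^\prime)}=\mapf^\prime$ automatically, and the task reduces to verifying that $\mapf$ satisfies every hyperedge constraint of $G$. First I would record two structural facts coming from $W=\Neigh{G}{V_1}=\Neigh{G}{V_2}$ and $W\cap(V_1\cup V_2)=\emptyset$: no hyperedge of $G$ meets both $V_1$ and $V_2$, and any hyperedge meeting $V_i$ lies inside $V_i\cup W$ and is therefore a genuine hyperedge of $H_i^\prime$ (not merely a restriction appearing in $E(H_i^\prime)$). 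Then I would split $E(G)$ according to whether a hyperedge $X$ is disjoint from $V_2$: if so, $X\subseteq V(G^\prime)$, $X\in E(G^\prime)$ with $X\setminus V_2=X$, and $\rest{\mapf}{X}=\rest{\mapf^\prime}{X}\in\Ecns^\prime(X)\subseteq\subcns{X}=\Ecns(X)$; if $X$ meets $V_2$, then $X\subseteq V_2\cup W$, so $Z:=\isoe^{-1}(X)$ is a hyperedge of $G$ inside $V_1\cup W\subseteq V(G^\prime)$ with $\rest{\isov}{Z}$ a bijection $Z\to X$ fixing $W$, from which a direct check gives $\rest{\mapf}{X}=(\rest{\mapf^\prime}{Z})\circ(\rest{\isov}{Z})^{-1}$; finally, using $\rest{\mapf^\prime}{Z}\in\Ecns^\prime(Z)\subseteq\subcns{Z}=\Ecns(Z)$ together with condition~(4), $\rest{\mapf}{X}\in\repcns{\Ecns}{\rest{\isov}{Z}}(Z)=\Ecns(X)$.

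The hard part will be the bookkeeping around induced subhypergraphs rather than any single computation: I expect the delicate points to be checking that $\isoe^{-1}(X)$ is well defined and is a true hyperedge of $G$ (using the two structural facts), that condition~(4) can be invoked for it even when it straddles $V_1$ and $W$, and that $\subcns{X}$ collapses to $\Ecns(X)$ precisely when $X\cap V_2=\emptyset$. I would also make sure the degenerate configurations — hyperedges lying entirely inside $W$, size-one hyperedges, and the case $W=\emptyset$ — are all absorbed by the ``disjoint from $V_2$'' branch of the case analysis.
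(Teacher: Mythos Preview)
Your proposal is correct and follows essentially the same route as the paper's proof: for one direction restrict a solution of $(G,\Dom,\Ecns)$ to $V(G')$ and check it lands in each $\Ecns'(X')=\bigcap_{X\in E'}\subcns{X}$, and for the other extend $f'$ to $V_2$ via $\isov^{-1}$ and verify $\rest{\mapf}{X}\in\Ecns(X)$ by the same two-case split (hyperedges disjoint from $V_2$ versus hyperedges inside $V_2\cup W$, the latter handled through $\isoe^{-1}$ and condition~(4)). One small slip to fix: you have the ``if'' and ``only if'' labels reversed --- starting from a solution $f$ and restricting is the \emph{if} direction of the statement (existence of $f$ $\Rightarrow$ $f'$ is a solution), while the extension argument is the \emph{only if} direction; the paper uses the labels in this standard way.
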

	Notice that Lemma~\ref{lem:reduce_pre} ensures that the ``restricted'' instance $\Inst^\prime$ is a valid instance of \CSR{0}.
	
	\medskip
	We now give the following key lemma, which says that $\Inst$ and $\Inst^\prime$ are equivalent with respect to even the reconfigurability.
	\begin{lemma}[*, Reduction rule]
		\label{lem:reduce}
		Let $\Inst$ and $\Inst^\prime$ be instances of \CSR{0} defined as above.
		Then, $\Inst^\prime$ is a yes-instance if and only if $\Inst$ is.
	\end{lemma}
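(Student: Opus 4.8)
The plan is to transport reconfiguration sequences between $\Inst$ and $\Inst^\prime$ through the bijection $\isov$ that witnesses the identification. Write $\rho(\mapf):=\rest{\mapf}{V(G^\prime)}$ for the restriction to $V(G^\prime)=V(G)\setminus V_2$, and let $\Phi$ be the reverse operation that copies the $V_1$-part onto $V_2$: for $\mapg^\prime\in \Dom^{V(G^\prime)}$, set $\Phi(\mapg^\prime)(u)=\mapg^\prime(u)$ for $u\in V(G^\prime)$ and $\Phi(\mapg^\prime)(\isov(v))=\mapg^\prime(v)$ for $v\in V_1$ (recall $\isov$ restricts to a bijection $V_1\to V_2$ since it fixes $W$). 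By Lemma~\ref{lem:reduce_pre}, $\rho$ maps solutions of $\Inst$ to solutions of $\Inst^\prime$, and $\Phi$ maps solutions of $\Inst^\prime$ to solutions of $\Inst$ (the mirror extension is the natural one produced there, using conditions (1), (2) and (4) of Definition~\ref{def:identical}); moreover $\rho\circ\Phi$ is the identity. The first point to record is that condition (3), which says $\asgn{v}=\asgn{\isov(v)}$ for every $v\in V_1$, means precisely that $\mapf_\ini$ and $\mapf_\tar$ are already symmetric under $\isov$; hence $\Phi(\mapf_\ini^\prime)=\mapf_\ini$ and $\Phi(\mapf_\tar^\prime)=\mapf_\tar$. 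This is what makes the endpoints line up when we lift a sequence back.

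For the direction ``$\Inst$ is a yes-instance $\Rightarrow$ $\Inst^\prime$ is'', I would take a reconfiguration sequence $\mapf_\ini=\mapg_0,\dots,\mapg_\ell=\mapf_\tar$ in $\Sol{\Inst}$ and apply $\rho$ termwise. Each $\rho(\mapg_i)$ is a solution of $\Inst^\prime$, and since $\mapg_i$ and $\mapg_{i+1}$ differ in a single vertex, $\rho(\mapg_i)$ and $\rho(\mapg_{i+1})$ differ in at most one vertex; deleting consecutive repetitions yields a reconfiguration sequence in $\Sol{\Inst^\prime}$ from $\mapf_\ini^\prime=\rho(\mapf_\ini)$ to $\mapf_\tar^\prime=\rho(\mapf_\tar)$.

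For the converse ``$\Inst^\prime$ is a yes-instance $\Rightarrow$ $\Inst$ is'', I would take a reconfiguration sequence $\mapf_\ini^\prime=\mapg_0^\prime,\dots,\mapg_\ell^\prime=\mapf_\tar^\prime$ in $\Sol{\Inst^\prime}$ and apply $\Phi$ termwise; by the remark above the endpoints $\Phi(\mapg_0^\prime)$ and $\Phi(\mapg_\ell^\prime)$ are exactly $\mapf_\ini$ and $\mapf_\tar$. The one point needing care is that consecutive lifts need not be adjacent: if the vertex $u$ in which $\mapg_i^\prime$ and $\mapg_{i+1}^\prime$ differ lies outside $V_1$, then $\Phi(\mapg_i^\prime)$ and $\Phi(\mapg_{i+1}^\prime)$ differ only at $u$, but if $u\in V_1$ they also differ at the mirror vertex $\isov(u)\in V_2$. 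In the latter case I would splice in the intermediate assignment $\mapf^\ast$ obtained from $\Phi(\mapg_i^\prime)$ by changing only $u$ to its new value (so $\isov(u)$ still holds the old value), giving a length-two walk $\Phi(\mapg_i^\prime)-\mapf^\ast-\Phi(\mapg_{i+1}^\prime)$. That $\mapf^\ast$ is a solution follows because $\Neigh{G}{V_1}=\Neigh{G}{V_2}=W$ and $V_1\cap V_2=\emptyset$ force every hyperedge of $G$ to lie inside $V(G^\prime)$ or inside $V_2\cup W$; on $V(G^\prime)$ the restriction of $\mapf^\ast$ equals $\mapg_{i+1}^\prime$, and on $V_2\cup W$ it equals the restriction of $\Phi(\mapg_i^\prime)$, and both of those are already known to satisfy all constraints confined to the respective part (the $V_2$-side being a faithful $\isoe$-translation of the $V_1$-side by condition (4)). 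Concatenating these short walks over $i$ produces a reconfiguration sequence from $\mapf_\ini$ to $\mapf_\tar$ in $\Sol{\Inst}$.

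The main obstacle I expect is the bookkeeping behind the two ``a restriction is a solution'' claims: one must spell out that the constraint of $\Inst^\prime$ on a hyperedge inside $V(G^\prime)$ is no stronger than the corresponding constraint of $\Inst$, that every hyperedge meeting $V_2$ is contained in $V_2\cup W$ and is the $\isoe$-image of a hyperedge inside $V_1\cup W$ whose constraint it copies via $\widehat{\isov}$, and hence that both $\Phi(\mapg^\prime)$ and the patched assignment $\mapf^\ast$ satisfy every constraint of $G$. Most of this is already contained in the proof of Lemma~\ref{lem:reduce_pre}, so the genuinely new content is the adjacency analysis of the lift and the single-vertex patching above.
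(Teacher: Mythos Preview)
Your proposal is correct and follows essentially the same approach as the paper: restrict termwise for one direction (deleting consecutive repeats), and for the converse apply the mirror extension $\Phi$ termwise and, whenever the changed vertex $u$ lies in $V_1$, splice in the intermediate assignment that has updated $u$ but not yet $\isov(u)$. Your justification that the intermediate $\mapf^\ast$ is a solution (via the dichotomy that every hyperedge lies in $V(G^\prime)$ or in $V_2\cup W$) is in fact more explicit than the paper's, which simply asserts ``Observe that $\widetilde{\mapf}_i$ is a solution''; one small wording slip in your last paragraph is that you need $\Ecns^\prime(X)\subseteq \Ecns(X)$ for $X\subseteq V(G^\prime)$, i.e.\ the $\Inst^\prime$-constraint is \emph{at least} as strong, not ``no stronger''.
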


\section{Vertex cover}
\label{sec:VC}

	In this section, we consider the size $\VC$ of a minimum vertex cover.
	Note that Theorem~\ref{the:tdeasy} implies \CSR{0} is fixed-parameter tractable when parameterized by $\doms+\VC$.
	We strengthen it as follows.
	\begin{theorem}[*]
		\label{the:vc}
		The shortest variant of \CSR{0} is fixed-parameter tractable when parameterized by $\doms+\VC$.
	\end{theorem}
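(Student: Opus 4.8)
The plan is to prove the theorem by \emph{kernelization}: from the given instance we produce, in fixed-parameter time, an equivalent instance whose number of vertices is bounded by a function of $\doms+\VC$, together with an integer correction term, and then solve the kernel. The enabling observation is that the brute-force algorithm of Theorem~\ref{the:exact} builds the solution graph explicitly and explores it by breadth-first search, so the same $O^*(\doms^{O(n)})$ procedure in fact returns the length of a \emph{shortest} reconfiguration sequence (or reports that none exists); on a kernel with boundedly many vertices this runs in time depending only on $\doms+\VC$.

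First I would compute a minimum vertex cover $C$ of the primal graph $\Prm{G}$; since $|C|=\VC$, this takes fixed-parameter time by the standard bounded-search-tree algorithm. Put $I=V(G)\setminus C$. Because $C$ is a vertex cover of $\Prm{G}$, no hyperedge contains two vertices of $I$, so every hyperedge meeting $I$ has the form $(X\cap C)\cup\{v\}$ for a unique $v\in I$. I then classify the vertices of $I$ into \emph{types}: $v$ and $v'$ have the same type when some bijection from the hyperedges through $v$ to those through $v'$ fixes $C$ pointwise and carries each constraint of $v$ to the corresponding constraint of $v'$ under the induced coordinate relabeling, and moreover $\asgn{v}=\asgn{v'}$. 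Since the hyperedges through a fixed $I$-vertex are indexed by subsets of $C$ and their constraints are subsets of $\Dom^{\le\VC+1}$, the number of distinct types is bounded by some $\tau=\tau(\doms,\VC)$; write $n_t$ for the number of $I$-vertices of type $t$.

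The technical heart is a \emph{reduction-and-cost lemma}: there is a threshold $\beta=\beta(\doms,\VC)$ such that if $n_t>\beta$, then deleting one vertex of type $t$ yields an instance that is reconfigurable exactly when the original is, and whose shortest reconfiguration length drops by exactly a constant $\kappa_t\ge 0$ depending only on the type. To see why this should hold, note that vertices of a common type are mutually interchangeable, so a reconfiguration sequence is faithfully and length-preservingly encoded by a walk in a ``macro state space'' whose states record only the assignment on $C$ and, for each type, the multiset of values currently used; once $\beta$ vertices of a type are present, each additional one merely contributes a fixed amount of parallel work that can be scheduled alongside a type-$t$ vertex that is moved minimally, so its marginal cost is the constant $\kappa_t$. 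Turning this intuition into a rigorous, two-sided statement — in particular pinning $\kappa_t$ down independently of the rest of the instance, and showing the shortest length is genuinely affine in each large $n_t$ rather than merely eventually periodic — is the step I expect to be the main obstacle; it plays for the shortest variant the role Lemma~\ref{lem:reduce} plays for reachability.

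Granting the lemma, the algorithm is: repeatedly apply the reduction to every type with $n_t>\beta$, recording $\delta_t:=n_t-\beta$. The resulting kernel has at most $\VC+\tau\beta$ vertices over a domain of size $\doms$. Using the shortest-aware version of Theorem~\ref{the:exact} on the kernel and on a few bounded-size variants of it (obtained by nudging one reduced type's count upward within the affine regime), one obtains, in time depending only on $\doms+\VC$: whether the kernel is reconfigurable, its shortest length $L$, and each constant $\kappa_t$ as a difference of two computed lengths. Then the original instance is reconfigurable iff the kernel is, and in that case its shortest reconfiguration length equals $L+\sum_t\delta_t\kappa_t$, which has polynomial bit-length and is computable in polynomial time; comparing it with $\ell$ decides the shortest variant. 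All steps take fixed-parameter time in $\doms+\VC$.
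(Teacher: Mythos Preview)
Your kernelization skeleton and the type classification are right; the gap is exactly where you flag it. The claim that $\kappa_t$ depends ``only on the type'' is not correct: the minimum number of times a type-$t$ vertex must be recolored in an optimal sequence depends on the constraints inside $C$ and on the other types present. You implicitly correct this by proposing to compute $\kappa_t$ from the kernel, but then the real assertion is that $n_t\mapsto\mathsf{OPT}$ is affine once $n_t>\beta$. That function is a minimum, over reconfiguration sequences, of linear functions of $n_t$, so it is concave piecewise linear and eventually affine; however, the last breakpoint is governed by lengths of sequences in the kernel's solution graph, which scale like $\doms^{|V(\text{kernel})|}$. Since your kernel has $\VC+\tau\beta$ vertices, bounding the breakpoint in terms of $\beta$ is circular, and you have not shown how to choose $\beta$ as a function of $\doms+\VC$ alone.

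The paper dissolves this difficulty by passing to a \emph{weighted} variant: give every vertex a positive integer weight $\omega(v)$ and define the length of a sequence as $\sum_v \omega(v)\cdot\#(\Seq,v)$. The weighted analogue of Lemma~\ref{lem:reduce}, specialised to identical single-vertex subgraphs $\{v_1\}$ and $\{v_2\}$, says that deleting $v_2$ and setting $\omega'(v_1)=\omega(v_1)+\omega(v_2)$ preserves $\mathsf{OPT}$ \emph{exactly}, with no threshold. One direction moves $v_1$ and $v_2$ in lockstep; the other projects an optimal sequence for the large instance onto whichever of $v_1,v_2$ is recolored \emph{fewer} times, which can only shorten the weighted length. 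Iterating over all identical pairs in $I$ yields a kernel with at most $\VC+\doms^2\cdot 2^{\doms^{\VC+1}}$ vertices and vertex weights bounded by $n$. The brute-force procedure of Theorem~\ref{the:exact} on this kernel builds a solution graph of size depending only on $\doms+\VC$, with edge weights at most $n$, and a weighted shortest-path search answers the shortest variant in fixed-parameter time. In short, the missing idea is to carry the multiplicities as weights rather than to extract an additive correction term after the fact.
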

	\begin{theorem}[*]
		\label{the:vcfast}
		There exists a fixed-parameter algorithm for \CSR{0} parameterized by $\doms+\VC$ which runs in time $O^*(\doms^{O(\VC)})$.
	\end{theorem}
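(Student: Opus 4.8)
The plan is to exploit the very restricted structure that a small vertex cover imposes on the hyperedges of $G$. First I would compute the primal graph $\Prm{G}$ and a vertex cover $C$ of it of size $O(\VC)$ (a minimum one can be found in time $O^*(2^{\VC})$, and a $2$-approximation already suffices); set $I=V(G)\setminus C$, an independent set of $\Prm{G}$. The crucial consequence is that every hyperedge of $G$ contains \emph{at most one} vertex of $I$: if some $X\in E(G)$ contained two distinct $u,w\in I$, then $uw\in E(\Prm{G})$, contradicting that $C$ is a vertex cover. Hence a solution $\mapf$ is determined by its ``base'' $\rest{\mapf}{C}\in\Dom^C$ together with its values on $I$, and, once a base $a\in\Dom^C$ is fixed, each $v\in I$ may be assigned exactly the values in the set $\mathsf{ok}_a(v)$ of all $i\in\Dom$ for which the assignment coinciding with $a$ on $C$ and mapping $v\mapsto i$ satisfies every hyperedge through $v$; moreover these choices are mutually independent across $I$, since no hyperedge meets two vertices of $I$.

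From here I would build an auxiliary \emph{base graph} $\mathcal{H}$ whose vertex set is the set of all $a\in\Dom^C$ that satisfy every hyperedge contained in $C$ and have $\mathsf{ok}_a(v)\neq\emptyset$ for all $v\in I$ (equivalently, that extend to at least one solution), with $a$ and $a'$ adjacent iff they differ in exactly one coordinate and $\mathsf{ok}_a(v)\cap\mathsf{ok}_{a'}(v)\neq\emptyset$ for every $v\in I$. The key claim is: $\mapf_\ini$ and $\mapf_\tar$ are reconfigurable in $\Sol{\Inst}$ if and only if $\rest{\mapf_\ini}{C}$ and $\rest{\mapf_\tar}{C}$ lie in the same connected component of $\mathcal{H}$. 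The algorithm then simply constructs $\mathcal{H}$ and runs a breadth-first search between these two bases.

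For correctness I would first record a within-base connectivity fact: for a fixed base $a$, all solutions of the form $(a,x)$ are pairwise reconfigurable \emph{without changing the base}, because each $v\in I$ can be recolored to any value of $\mathsf{ok}_a(v)$ in one step independently of the other $I$-vertices, so the corresponding induced subgraph of $\Sol{\Inst}$ is a Cartesian product of (nonempty) cliques and hence connected. The ``if'' direction then follows by walking along a path $a_0,\dots,a_m$ in $\mathcal{H}$: at $a_j$, use within-base moves to reach an $I$-configuration lying in $\mathsf{ok}_{a_j}(v)\cap\mathsf{ok}_{a_{j+1}}(v)$ for all $v$ (which exists by the edge condition), then recolor the single differing $C$-vertex to move to $a_{j+1}$. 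For the ``only if'' direction I would take an actual reconfiguration sequence and track the base along it: it changes only when some $C$-vertex is recolored, and at that step the current $I$-configuration simultaneously witnesses that the two relevant $\mathsf{ok}$-sets intersect; collapsing repeated bases yields a walk in $\mathcal{H}$ joining $\rest{\mapf_\ini}{C}$ to $\rest{\mapf_\tar}{C}$. Finally, the running time: there are at most $\doms^{\VC}$ candidate bases, and for each, checking validity and computing all sets $\mathsf{ok}_a(v)$ takes time polynomial in the input; there are at most $\VC\cdot(\doms-1)\cdot\doms^{\VC}$ ordered pairs of bases to test for adjacency, each again in polynomial time; and the final search is polynomial in $|\mathcal{H}|$. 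Altogether this is $O^*(\doms^{O(\VC)})$. I expect the part needing the most care to be the equivalence claim — specifically the observation that a base transition needs only \emph{some} witnessing $I$-configuration rather than a prescribed one (which is precisely what the within-base connectivity fact supplies), together with the bookkeeping of the two kinds of hyperedges (those inside $C$ versus those meeting $I$) when the constraints are restricted.
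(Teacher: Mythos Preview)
Your proposal is correct and takes essentially the same approach as the paper. What you call the ``base graph'' $\mathcal{H}$ is precisely the paper's \emph{contracted solution graph} with respect to the partition of solutions by their restriction to the vertex cover $C$; your within-base connectivity fact is the paper's lemma that this partition is proper, and your edge criterion $\mathsf{ok}_a(v)\cap\mathsf{ok}_{a'}(v)\neq\emptyset$ for all $v\in I$ coincides with the paper's condition that the two substitutions share a common solution (since those substitutions contain only unary constraints).
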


\subsection{Discussions}

	We conclude this section by discussing hitting sets on hypergraphs, which is a well-known generalization of vertex covers on graphs.
	Although a hitting set of a $2$-uniform hypergraph is equivalent to a vertex cover of the graph, such an equivalence does not hold for general hypergraphs.
	Thus, it is worth considering the complexity of \CSR{0} with respect to the size of a hitting set of a given hypergraph.
	We have the following theorem, which implies that a fixed-parameter algorithm for \CSR{0} is unlikely to exist when parameterized by the size of a hitting set plus $\doms$.
	\begin{theorem}
		\label{the:hitting}
		\CSR{3} is PSPACE-complete even for hypergraphs with a hitting set of size one and $\doms=3$.
	\end{theorem}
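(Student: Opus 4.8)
The plan is to reduce from the \CSR{2} instances produced in Theorem~\ref{the:k3hard}, which are PSPACE-complete already when $\doms=3$. Let $\Inst=(G,\Dom,\Ecns,\mapf_\ini,\mapf_\tar)$ be such an instance; here $G$ is a ($2$-uniform) bipartite planar graph and $|\Dom|=3$. First I would form a $3$-uniform hypergraph $G^\prime$ by introducing one new vertex $z\notin V(G)$, setting $V(G^\prime)=V(G)\cup\{z\}$, and replacing every edge $v_iv_j\in E(G)$ by the size-$3$ hyperedge $\{z,v_i,v_j\}$. The constraint of $\{z,v_i,v_j\}$ leaves $z$ completely unconstrained while forcing the pair $(v_i,v_j)$ to satisfy the original constraint, i.e.\ $\Ecns^\prime(\{z,v_i,v_j\})=\{h\in\Dom^{\{z,v_i,v_j\}}\colon (h(v_i),h(v_j))\in\Ecns(v_iv_j)\}$. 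Finally, let $\mapf_\ini^\prime$ and $\mapf_\tar^\prime$ agree with $\mapf_\ini$ and $\mapf_\tar$ on $V(G)$ and put $\mapf_\ini^\prime(z)=\mapf_\tar^\prime(z)=c$ for an arbitrary fixed $c\in\Dom$. Then $\{z\}$ is a hitting set of $G^\prime$ of size one, every hyperedge of $G^\prime$ has size $3$, and the domain still has exactly three values, so $\Inst^\prime=(G^\prime,\Dom,\Ecns^\prime,\mapf_\ini^\prime,\mapf_\tar^\prime)$ is a legitimate instance of \CSR{3} of the advertised form, and it is clearly computable in polynomial time.

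The core of the argument is to show that this construction preserves reconfigurability. The point is that $z$ lies in every hyperedge but is never restricted by any constraint: a mapping $\mapf^\prime\colon V(G^\prime)\to\Dom$ is a solution for $(G^\prime,\Dom,\Ecns^\prime)$ if and only if $\rest{\mapf^\prime}{V(G)}$ is a solution for $(G,\Dom,\Ecns)$, regardless of $\mapf^\prime(z)$. Consequently $\Sol{\Inst^\prime}$ is isomorphic to the Cartesian product of $\Sol{\Inst}$ with $K_3$, where $K_3$ is the graph on the three possible values of $z$ in which every two values are adjacent (recall that $z$ may be changed to any value in a single step). Since the Cartesian product of two graphs is connected exactly when both factors are, and $K_3$ is connected, $\mapf_\ini^\prime$ and $\mapf_\tar^\prime$ lie in a common component of $\Sol{\Inst^\prime}$ if and only if $\mapf_\ini$ and $\mapf_\tar$ do in $\Sol{\Inst}$; equivalently, $\Inst^\prime$ is a yes-instance if and only if $\Inst$ is. If one prefers to avoid the product language: for the ``only if'' direction, take any reconfiguration sequence for $\Inst^\prime$, delete the steps that change only $z$, and restrict every remaining solution to $V(G)$, obtaining a reconfiguration sequence for $\Inst$; for the ``if'' direction, take a reconfiguration sequence for $\Inst$ and keep $z$ equal to $c$ throughout, which is always valid since $z$ is unconstrained.

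For membership, \CSR{0}, and hence \CSR{3}, is in PSPACE: a reconfiguration sequence can be guessed one step at a time while keeping only the current solution in memory, which takes polynomial space, so the problem lies in NPSPACE and therefore in PSPACE by Savitch's theorem; this is unaffected by the restriction to instances with a size-one hitting set and $\doms=3$. I do not expect a genuine obstacle here, as the whole reduction is short. The only thing that needs a moment's care is the observation underlying the correctness proof --- that a single ``hub'' vertex, placed in every hyperedge but left completely free, does not change reachability --- which is precisely the Cartesian-product remark above; everything else is routine bookkeeping.
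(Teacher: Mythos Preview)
Your proposal is correct and follows essentially the same approach as the paper: reduce from the \CSR{2} instances of Theorem~\ref{the:k3hard} by adding a single hub vertex to every edge, turning each $2$-edge into a $3$-hyperedge and yielding a hitting set of size one. The only difference is cosmetic: the paper pins the hub to a single value (taking $\Ecns^\prime(\{u,v,w\})=\{1\}\times\Ecns(vw)$ and $\mapf^\prime_\iot(u)=1$), so the equivalence of the two instances is immediate, whereas you leave the hub completely free and handle it via the Cartesian-product observation---both variants are fine.
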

	\begin{proof}
		Let $\Inst=(G,\Dom,\Ecns,\mapf_\ini,\mapf_\tar)$ be an instance constructed in Theorem~\ref{the:k3hard}.
		Briefly speaking, we add a new vertex $u$ to $G$, include it in every edge $vw\in E(G)$, and modify the constraints so that a value assignment to $u$ does not affect any other vertices.
		More precisely, we construct a new instance $\Inst^\prime=(G^\prime,\Dom,\Ecns^\prime,\mapf^\prime_\ini,\mapf^\prime_\tar)$ as follows.
		Let $V(G^\prime):=V(G)\cup \{u\}$ and $E(G^\prime):=\{\{u,v,w\} \colon vw \in E(G)\}$, where $u$ is a new vertex which is not in $G$.
		Then, $G^\prime$ has a hitting set $\{u\}$ of size one.
		For each hyperedge $\{u,v,w\}\in E(G^\prime)$, we let $\Ecns^\prime(\{u,v,w\}):=\{1\} \times \Ecns(vw)$.
		We finally extend $\mapf_\ini$ (resp., $\mapf_\tar$) to $\mapf^\prime_\ini$ (resp., $\mapf^\prime_\tar$) by setting $\mapf^\prime_\ini(u)=1$ (resp., $\mapf^\prime_\tar(u)=1$).
		This completes the construction.
		 $\Inst^\prime$ is equivalent to $\Inst$ by ignoring a value assignment to $u$.
	\end{proof}

\section{Extension of Theorem~\ref{the:k2easy}}
\label{sec:otherFPT}

	In this section, we extend Theorem~\ref{the:k2easy} to a more general situation.
	\begin{theorem}[*]
		\label{the:nb}
		There exists a fixed-parameter algorithm for \CSR{2} parameterized by $\doms+\NB$ which runs in time $O^*(\doms^{O(\NB)})$.
	\end{theorem}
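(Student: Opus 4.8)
The plan is to extend the reduction behind Theorem~\ref{the:k2easy}: branch over the assignments of the non-Boolean vertices, solve each Boolean remainder with the bijunctive machinery of~\cite{GKMP09}, and glue the branches into a compact auxiliary graph whose connectivity decides the instance. By the normalization in Section~\ref{sec:pre} we may assume the input is a connected, $2$-uniform instance of \CSR{2}; let $\NBs$ denote the set of non-Boolean vertices (those $v$ with $|\Vcns(v)|>2$), so $|\NBs|=\NB$, and rename the at most two values in each Boolean vertex's list to $\{0,1\}$. Fix a \emph{configuration} $\alpha\in\prod_{u\in\NBs}\Vcns(u)$ of the non-Boolean vertices; there are at most $\doms^{\NB}$ of them. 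Substituting $\alpha$ turns each constraint on an edge joining two non-Boolean vertices into a truth test, each constraint on an edge joining a non-Boolean vertex $u$ to a Boolean vertex $v$ into a unary constraint on $v$, and leaves each Boolean--Boolean constraint as a binary Boolean relation; so, exactly as in the proof of Theorem~\ref{the:k2easy}, the residue $I_\alpha$ on the Boolean vertices (whose solutions are the $\beta$ with $\alpha\cup\beta$ a solution of the original instance) is an instance of \textsc{bijunctive} \BCSR{}, and by~\cite{GKMP09} we can in polynomial time test its feasibility and decide reconfigurability between two of its solutions.

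Next I would build a graph $\mathcal{H}$ whose vertices are the pairs $(\alpha,C)$ with $C$ a connected component of $\Sol{I_\alpha}$, each stored succinctly as $\alpha$ together with one witness solution in $C$ (two stored pairs denote the same vertex exactly when their witnesses are reconfigurable in $I_\alpha$, testable by~\cite{GKMP09}). Two vertices $(\alpha,C)$ and $(\alpha',C')$ are adjacent when $\alpha,\alpha'$ differ in exactly one non-Boolean vertex $u$, the constraints on edges between $u$ and the other non-Boolean vertices hold under both $\alpha$ and $\alpha'$, and the set $C^{(b)}$ of solutions in $C$ that additionally satisfy the unary constraints the new value $b=\alpha'(u)$ imposes on the Boolean neighbours of $u$ is nonempty; $C'$ is then the component of $\Sol{I_{\alpha'}}$ containing $C^{(b)}$. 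The point that makes this well defined is a structural fact about bijunctive solution spaces that I would establish (extending the analysis of~\cite{GKMP09} via median-closure): each connected component of a bijunctive solution graph is an isometric subgraph of the hypercube, so fixing the values of a set of variables cuts out a connected subgraph. Hence $C^{(b)}$, being such a ``face'' of $C$, lies inside a single component of $\Sol{I_{\alpha'}}$, and every candidate move out of $(\alpha,C)$ has a unique target.

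For correctness, put $\alpha_\ini=\rest{\mapf_\ini}{\NBs}$, $\alpha_\tar=\rest{\mapf_\tar}{\NBs}$, and let $C_\ini$, $C_\tar$ be the components of the Boolean parts of $\mapf_\ini$, $\mapf_\tar$ in $I_{\alpha_\ini}$, $I_{\alpha_\tar}$. I claim $\mapf_\ini$ and $\mapf_\tar$ are reconfigurable iff $(\alpha_\ini,C_\ini)$ and $(\alpha_\tar,C_\tar)$ lie in one component of $\mathcal{H}$. For the forward direction, project a reconfiguration sequence onto $\NBs$: a step changing a Boolean vertex fixes $\alpha$ and stays inside one component of $\Sol{I_\alpha}$, while a step changing a non-Boolean vertex fixes the Boolean part and witnesses an edge of $\mathcal{H}$ — so the sequence traces a walk in $\mathcal{H}$. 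For the converse, realize each edge of $\mathcal{H}$ by first reconfiguring the Boolean part freely inside the current component down to some solution in $C^{(b)}$ and then changing the single non-Boolean vertex $u$, and concatenate along a walk. Thus a breadth-first search of $\mathcal{H}$ from $(\alpha_\ini,C_\ini)$, using the routines of~\cite{GKMP09} to generate and identify neighbours, decides the instance; each vertex has at most $\NB(\doms-1)$ out-neighbours, each produced in time polynomial in $n$.

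The main obstacle — and where I expect the real work to be — is to bound the number of vertices of $\mathcal{H}$ reachable from $(\alpha_\ini,C_\ini)$ by $\doms^{O(\NB)}$, i.e.\ to show that for each of the at most $\doms^{\NB}$ configurations only $\doms^{O(\NB)}$ components of its residue are ever visited (so that the search, and hence the whole algorithm, runs in $O^*(\doms^{O(\NB)})$; one could alternatively hand the compressed structure to Theorem~\ref{the:exact}). I would attack this with the same isometry/convexity structure: a component of $\Sol{I_\alpha}$ becomes reachable only through one of the at most $\NB(\doms-1)$ ``interfaces'' to neighbouring configurations, and because each such interface meets a fixed component on the neighbouring side in a connected set, it selects a \emph{single} component of $\Sol{I_\alpha}$; organizing these selections should confine the reachable part of $\mathcal{H}$ to $\doms^{O(\NB)}$ vertices. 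Everything else is bookkeeping on top of Theorem~\ref{the:k2easy} and~\cite{GKMP09}.
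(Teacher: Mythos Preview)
Your skeleton is right and in fact matches the paper's: branch over the $\le \doms^{\NB}$ assignments $\alpha$ to $\NBs$, reduce each residue $I_\alpha$ to a bijunctive Boolean instance, and build a contracted solution graph whose nodes correspond to the $\alpha$'s. Your median/convexity lemma (fixing variables inside a bijunctive component yields a connected set) is also correct and makes the edge relation of $\mathcal{H}$ well defined. The genuine gap is exactly the step you flag as ``the main obstacle'': you never bound the number of \emph{components} of $\Sol{I_\alpha}$ that can be reached, and the sketch ``each interface selects a single component, so organizing these selections should confine the reachable part to $\doms^{O(\NB)}$'' is not an argument --- iterating the $\NB(\doms-1)$-fanout gives no polynomial bound, and paths like $\alpha\to\alpha'\to\alpha''\to\alpha$ can a priori return to a \emph{different} component of $\Sol{I_\alpha}$ once the triple intersection is empty. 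Nothing in the convexity picture rules this out.

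The paper closes this gap by a completely different device that you did not consider: a \emph{preprocessing} that forces every $I_\alpha$ to have a single connected solution space, so that the partition by $\alpha$ is proper and $\mathcal{H}$ has at most $\doms^{\NB}$ nodes outright. Concretely, one forms the instance $\Instcs^{\mathsf{res}}$ obtained by restricting all constraints to the Boolean vertices, builds its implication graph, and observes that any Boolean vertex lying on a directed cycle is \emph{frozen} in every reconfiguration of the original instance (the cycle argument uses only the Boolean--Boolean binary constraints, which are the same in $\Instcs$ and in every $I_\alpha$). If $\mapf_\ini$ and $\mapf_\tar$ disagree on such a vertex one answers \textsc{no}; otherwise one substitutes the common value and iterates. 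After this loop the implication graph of the Boolean restriction is acyclic, and then for \emph{every} $\alpha$ the substituted residue $I_\alpha$ also has an acyclic implication graph (its binary Boolean constraints coincide with those of the preprocessed restriction, and each vertex of a putative cycle would lift), hence $\Sol{I_\alpha}$ is connected by the acyclic case of Gopalan et~al. So after preprocessing the CSG has at most $\doms^{\NB}$ nodes and the algorithm of Theorem~\ref{the:vcfast} (with $\Cov$ replaced by $\NBs$) runs in $O^*(\doms^{O(\NB)})$. In short: rather than bounding how many components your BFS can visit, the paper eliminates the extra components up front via the implication-graph cycle analysis --- that is the missing idea.
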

	
\section{Conclusion}
	
	In this paper, we studied \CSR{0} and its special cases from the viewpoints of polynomial-time solvability and parameterized complexity, 
	and gave several interesting boundaries of tractable and intractable cases with respect to the size $\doms$ of a domain (\tablename~\ref{tab:result_k}), the size of lists (\tablename~\ref{tab:result_NB}), and the structure of an input hypergraph (\tablename s~\ref{tab:result_pw} and \ref{tab:result_graph}).
	Moreover, we gave some exact algorithms, and a lower bound of the computation time that meets the running times of our algorithms under ETH.
	
	We leave as an open question whether some of our algorithms can be extended for the shortest variant or not.
	In particular, the extensions of the fixed-parameter algorithms parameterized by $\doms+\MW$ (Theorem~\ref{the:mweasy}) and $\doms+\TD$ (Theorem~\ref{the:tdeasy}), respectively, are interesting.
	We note that the reduction rule (Lemma~\ref{lem:reduce}) does not preserve the shortest length of a reconfiguration sequence since we just remove vertices.

\newpage
\appendix
	
\section{Proof of Theorem~\ref{the:LB}}

	We give a polynomial-time reduction from \kkc{}, which is defined as follows.
	An instance of \kkc{} is a graph $H$ with the vertex set $\{\kkv{i}{p} \colon 1 \le i,p \le \kk \}$; we denote $\kkgp{i}=\{\kkv{i}{p} \colon 1 \le p \le \kk \}$ for each $i\in \{1,2,\ldots,\kk\}$.
	Then, the problem asks whether there exists a clique $Q\subseteq V(H)$ such that $|Q\cap \kkgp{i}|=1$ for every $i\in \{1,2,\ldots,\kk\}$.
	It is known that there exists no algorithm solving \kkc{} in time $\runtime{\kk}{\kk}$ under ETH~\cite[Theorem 14.12]{ETHlb15}.
	We will present a polynomial-time transformation from an instance $H$ of \kkc{} to an instance $\Inst$ of \CSR{2} such that
	\begin{itemize}
		\item a graph has $\kk+2$ vertices and a domain has $\kk+1$ values; and 
		\item $H$ is a yes-instance if and only if $\Inst$ is.
	\end{itemize}
	If there exists an algorithm $A$ solving \CSR{2} in time $\runtime{(\doms+n)}{\doms+n}$, an execution of $A$ for the transformed instance $\Inst$ yields an algorithm solving \kkc{} in time
	\[
	\runtime{(\doms+n)}{\doms+n}=\runtime{(2\kk+3)}{2\kk+3}=\runtime{(\kk^{2})}{2\kk+3}=\runtime{\kk}{\kk}.
	\]
	
	Before constructing $\Inst$, we first reformulate \kkc{} as a special case of $2$\textsc{-ary constraint satisfiability} by a similar idea of the proof of Theorem~\ref{the:W1hard}.
	Let $G^\prime$ be a complete graph $K_\kk$ with $\kk$ vertices $\{v_1,v_2,\ldots,v_\kk\}$, and let $\Dom^\prime=\{1,2,\ldots,\kk\}$.
	We construct each constraint so that assigning a value $p\in \Dom$ to a vertex $v_i\in V(G^\prime)$ corresponds to choosing a vertex $\kkv{i}{p}$ as a member of a clique $Q$. 
	That is, we define $\Ecns^\prime(v_i v_j):=\{(p,q) \colon \kkv{i}{p} \kkv{j}{q}\in E(H)\}$.
	Observe that we can simultaneously assign $p$ and $q$ to $v_i$ and $v_j$, respectively, if and only if $\kkv{i}{p}$ and $\kkv{j}{q}$ are adjacent in $H$.
	Therefore, $H$ and $(G^\prime,\Dom^\prime,\Ecns^\prime)$ are equivalent.
	Clearly, $|V(G^\prime)|=|\Dom^\prime|=\kk$ holds.
	
	We now construct an instance $\Inst=(G,\Dom=\Dom^\prime \cup \{0\},\Ecns,\mapf_\ini,\mapf_\tar)$ of \CSR{2} as follows.
	A graph $G$ is obtained from $G^\prime$ by adding two new vertices $w_1$ and $w_2$ and edges $\{w_1 w \colon w\in V(G^\prime) \cup \{w_2\}\}$.
	These added vertices will form a key component which links the existence of a solution of $(G^\prime,\Dom^\prime,\Ecns^\prime)$ with the reconfigurability of $\Inst$.
	Clearly, $|V(G)|=\kk+2$ and $|\Dom|=\kk+1$ hold.
	We first construct the constraints of the original graph $G^\prime$.
	For each edge $v_i v_j\in E(G^\prime)$, we add to a constraint $\Ecns^\prime(v_i v_j)$, all mappings which contain $0$; that is, $\Ecns(v_i v_j):=\Ecns^\prime(v_i v_j)\cup (\{0\}\times \Dom) \cup (\Dom \times \{0\})$.
	We have the following observation.
	\begin{observation}
		\label{obs:ETH1}
		Solutions of $(G^\prime,\Dom,\Ecns)$ in which no vertices are assigned $0$ one-to-one corresponds to solutions of the original instance $(G^\prime,\Dom^\prime,\Ecns^\prime)$.
	\end{observation}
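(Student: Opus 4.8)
The plan is to show that the identity map on assignments already witnesses the claimed one-to-one correspondence. First I would note that any solution $\mapf$ of $(G^\prime,\Dom,\Ecns)$ that assigns $0$ to no vertex has range contained in $\Dom^\prime=\Dom\setminus\{0\}$, hence is a well-formed element of $(\Dom^\prime)^{V(G^\prime)}$ and thus a candidate solution of $(G^\prime,\Dom^\prime,\Ecns^\prime)$; conversely every solution of $(G^\prime,\Dom^\prime,\Ecns^\prime)$ has range inside $\Dom^\prime\subseteq\Dom$ and so trivially assigns $0$ to no vertex. Consequently the only thing to verify is that, restricted to mappings whose range avoids $0$, being a solution of the first instance is equivalent to being a solution of the second.

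This equivalence I would check edge by edge. Fix $v_i v_j\in E(G^\prime)$ and a mapping $\mapf$ with $\mapf(v_i),\mapf(v_j)\in\Dom^\prime$. Then the pair $(\mapf(v_i),\mapf(v_j))$ has no coordinate equal to $0$, so it lies in neither $\{0\}\times\Dom$ nor $\Dom\times\{0\}$. Since $\Ecns(v_iv_j)=\Ecns^\prime(v_iv_j)\cup(\{0\}\times\Dom)\cup(\Dom\times\{0\})$ by construction, membership of $(\mapf(v_i),\mapf(v_j))$ in $\Ecns(v_iv_j)$ is therefore equivalent to its membership in $\Ecns^\prime(v_iv_j)$. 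Quantifying over all edges of $G^\prime$, such a $\mapf$ satisfies every constraint of $(G^\prime,\Dom,\Ecns)$ if and only if it satisfies every constraint of $(G^\prime,\Dom^\prime,\Ecns^\prime)$, which is exactly the assertion, with the correspondence being the identity map and hence automatically bijective between the two solution sets.

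There is essentially no obstacle here: the statement is a bookkeeping consequence of how the ``padding'' pairs were added to each constraint. The only point meriting a moment's attention is the observation that every mapping added in passing from $\Ecns^\prime$ to $\Ecns$ contains the coordinate $0$ and is therefore invisible to $0$-free assignments, which is immediate from the definition $\Ecns(v_iv_j):=\Ecns^\prime(v_iv_j)\cup(\{0\}\times\Dom)\cup(\Dom\times\{0\})$.
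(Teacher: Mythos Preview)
Your proof is correct and matches the paper's reasoning: the paper in fact states this observation without proof, treating it as immediate from the definition $\Ecns(v_iv_j):=\Ecns^\prime(v_iv_j)\cup(\{0\}\times\Dom)\cup(\Dom\times\{0\})$, and your argument is precisely the natural spelling-out of that immediacy via the identity map and an edge-by-edge check.
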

	We next define the constraints regarding $w_1$ and $w_2$.
	The constraint $\Ecns(w_1 w_2)$ of $w_1 w_2\in E(G)$ is defined as $\Ecns(w_1 w_2):=\{(0,1),(0,2),(1,2),(2,1)\}$.
	For each $v_i\in V(G^\prime)$, we let $\Ecns(w_1 v_i)=\Dom^2 \setminus \{(0,0)\}$.
	Then, we have the following observation.
	\begin{observation}
		\label{obs:ETH2}
		we can assign $0$ to $w_1$ if and only if no other vertices are assigned $0$.
	\end{observation}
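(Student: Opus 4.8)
My plan is a direct case check against the constraints incident to $w_1$, after first pinning down the behavior of $w_2$. The first step records that in every solution $\mapf$ of $(G,\Dom,\Ecns)$ we have $\mapf(w_2)\in\{1,2\}$: this is forced by the edge $w_1 w_2$, whose constraint $\Ecns(w_1 w_2)=\{(0,1),(0,2),(1,2),(2,1)\}$ has second coordinate always in $\{1,2\}$. In particular $w_2$ is never assigned $0$, so ``no vertex other than $w_1$ is assigned $0$'' is the same as ``no $v_i$ is assigned $0$''.

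For the forward implication, I would take a solution $\mapf$ with $\mapf(w_1)=0$ and note that the only edges of $G$ incident to $w_1$ are $w_1 v_1,\ldots,w_1 v_\kk$ and $w_1 w_2$. For each $i$, since $\rest{\mapf}{\{w_1,v_i\}}\in\Ecns(w_1 v_i)=\Dom^2\setminus\{(0,0)\}$ and $\mapf(w_1)=0$, necessarily $\mapf(v_i)\neq 0$; together with $\mapf(w_2)\in\{1,2\}$ this gives that no vertex besides $w_1$ carries $0$. For the converse, I read ``we can assign $0$ to $w_1$'' as the assertion that, from a solution $\mapf$ in which no vertex of $V(G)\setminus\{w_1\}$ is assigned $0$, replacing $\mapf(w_1)$ by $0$ and leaving all other values unchanged yields a solution --- i.e.\ recoloring $w_1$ to $0$ is a legal step in $\Sol{\Inst}$. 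Only the constraints on edges incident to $w_1$ can be affected: for $w_1 v_i$ the pair $(0,\mapf(v_i))$ avoids $(0,0)$ because $\mapf(v_i)\neq 0$, and for $w_1 w_2$ the pair $(0,\mapf(w_2))$ lies in $\{(0,1),(0,2)\}\subseteq\Ecns(w_1 w_2)$ because $\mapf(w_2)\in\{1,2\}$; hence the modified map is again a solution.

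The statement carries no real difficulty; it is exactly the bookkeeping consequence intended by the gadget, namely that $\Ecns(w_1 w_2)$ keeps $w_2$ out of $\{0\}$ while $\Ecns(w_1 v_i)$ forbids precisely the simultaneous pair $(0,0)$. The only place that needs a word of care is fixing the intended meaning of ``we can assign $0$ to $w_1$''; once it is phrased as ``the single-vertex recoloring $w_1:=0$ stays inside the solution graph'', both directions reduce to inspecting the constantly many types of edges meeting $w_1$.
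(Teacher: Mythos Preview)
Your proof is correct and matches the paper's intent. The paper itself does not give a separate argument for this observation: it is stated immediately after the definitions of $\Ecns(w_1 w_2)$ and $\Ecns(w_1 v_i)$ as a direct consequence of those constraints, and is then invoked exactly in the two ways you anticipate (forward direction to force $\mapf(v_i)\neq 0$ once $\mapf(w_1)=0$, and converse direction to justify the reconfiguration step changing $w_1$ from $1$ to $0$). Your explicit check of both directions, including the remark that $\mapf(w_2)\in\{1,2\}$ always, is precisely the verification the paper leaves implicit.
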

	Finally, we define two solutions $\mapf_\ini$ and $\mapf_\tar$ as follows:
	\begin{itemize}
		\item for each vertex $v_i \in V(G^\prime)$, let $\mapf_\ini(v_i)=\mapf_\tar(v_i)=0$; and
		\item let $\mapf_\ini(w_1)=\mapf_\tar(w_2)=1$ and $\mapf_\ini(w_2)=\mapf_\tar(w_1)=2$.
	\end{itemize}
	
	In order to show the correctness, it suffices to show the following lemma.
	\begin{lemma}
		$(G^\prime,\Dom^\prime,\Ecns^\prime)$ has a solution if and only if $\Inst$ is a yes-instance.
	\end{lemma}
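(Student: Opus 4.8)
The plan is to show both directions of the equivalence, using Observations~\ref{obs:ETH1} and~\ref{obs:ETH2} to control where the value $0$ can appear. The key structural fact is that $w_1$ is adjacent to every other vertex, so a reconfiguration step that changes the value of some $v_i$ (for example, from $0$ to a nonzero value, which is the only way to ``escape'' the all-zero assignment on $V(G^\prime)$) can only occur when $w_1$ itself is \emph{not} assigned $0$, by the constraint $\Ecns(w_1 v_i)=\Dom^2\setminus\{(0,0)\}$. But Observation~\ref{obs:ETH2} says $w_1$ can be assigned $0$ precisely when no other vertex is; and the constraint $\Ecns(w_1 w_2)=\{(0,1),(0,2),(1,2),(2,1)\}$ forces that whenever $w_1$ is nonzero, we have $\{\mapf(w_1),\mapf(w_2)\}=\{1,2\}$, so $w_2$ then pins down the remaining value. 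I will use this to argue that swapping the values of $w_1$ and $w_2$ (which is what $\mapf_\ini$ and $\mapf_\tar$ demand, since $\mapf_\ini(w_1)=\mapf_\tar(w_2)=1$ and $\mapf_\ini(w_2)=\mapf_\tar(w_1)=2$) is impossible unless at some intermediate step $w_1$ is freed to take value $0$, which in turn requires all of $v_1,\dots,v_\kk$ to be simultaneously nonzero at that moment.

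First I would prove the forward direction: assume $(G^\prime,\Dom^\prime,\Ecns^\prime)$ has a solution $\mapg^\prime$. By Observation~\ref{obs:ETH1}, $\mapg^\prime$ lifts to a solution $\mapg$ of $(G^\prime,\Dom,\Ecns)$ with no vertex assigned $0$. I then describe an explicit reconfiguration sequence from $\mapf_\ini$ to $\mapf_\tar$: starting from $\mapf_\ini$ (all $v_i$ at $0$, $w_1=1$, $w_2=2$), recolor the vertices $v_1,\dots,v_\kk$ one at a time to their $\mapg$-values — each such step is legal since $\Ecns(v_iv_j)\supseteq\Ecns^\prime(v_iv_j)$ contains $\mapg$'s restriction, and $\Ecns(w_1v_i)=\Dom^2\setminus\{(0,0)\}$ permits any nonzero assignment to $v_i$ while $w_1\ne 0$. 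Now every $v_i$ is nonzero, so by Observation~\ref{obs:ETH2} we may set $w_1:=0$; then, with $w_1=0$, the constraint $\Ecns(w_1w_2)$ allows $w_2\in\{1,2\}$, so we set $w_2:=1$; then restore $w_1:=2$ (legal: $(2,1)\in\Ecns(w_1w_2)$, and $(2,\cdot)\ne(0,0)$ against each $v_i$); finally recolor each $v_i$ back to $0$ one at a time (legal since $w_1=2\ne 0$). This reaches $\mapf_\tar$, so $\Inst$ is a yes-instance.

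For the reverse direction, I would argue contrapositively: suppose $(G^\prime,\Dom^\prime,\Ecns^\prime)$ has no solution, and show $\Inst$ is a no-instance. Along any reconfiguration sequence starting at $\mapf_\ini$, consider the first solution $\mapf$ at which $w_1$ is assigned $0$ (if none, then $w_1$ stays nonzero forever, hence $\{\mapf(w_1),\mapf(w_2)\}=\{1,2\}$ throughout by $\Ecns(w_1w_2)$, so $w_2$ is always nonzero too, so by Observation~\ref{obs:ETH2} no vertex is ever assigned $0$ at a step where $w_1$ is nonzero — but $\mapf_\ini$ has $v_i=0$, contradiction — so actually $w_1$ must hit $0$ before any $v_i$ can move; I will phrase this carefully). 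The step just before $\mapf$ changed $w_1$ from nonzero to $0$, so by Observation~\ref{obs:ETH2} at that step all of $v_1,\dots,v_\kk$ are assigned nonzero values, and these values satisfy every $\Ecns(v_iv_j)$; intersecting with the fact that they are all nonzero and $\Ecns(v_iv_j)\cap(\Dom^\prime)^2=\Ecns^\prime(v_iv_j)$, they form a solution of $(G^\prime,\Dom^\prime,\Ecns^\prime)$ — contradiction. The main obstacle, and the step deserving the most care, is the bookkeeping in this last paragraph: making precise that $w_1$ \emph{must} pass through value $0$ (equivalently, that the ``swap'' of $w_1$ and $w_2$ cannot be done while both stay in $\{1,2\}$, which follows because $\Ecns(w_1w_2)$ contains no transition within $\{(1,2),(2,1)\}$ differing in one coordinate), and then extracting the solution of $(G^\prime,\Dom^\prime,\Ecns^\prime)$ from exactly the right intermediate solution.
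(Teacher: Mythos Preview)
Your approach matches the paper's: the forward direction is the same explicit reconfiguration sequence, and for the reverse direction the paper likewise argues that any reconfiguration sequence must contain a solution with $\mapf(w_1)=0$ (since otherwise $(w_1,w_2)$ is trapped in $\{(1,2),(2,1)\}$, whose two elements differ in both coordinates), then applies Observation~\ref{obs:ETH2} and Observation~\ref{obs:ETH1} to extract a solution of $(G^\prime,\Dom^\prime,\Ecns^\prime)$. One caution: your parenthetical invokes Observation~\ref{obs:ETH2} in the wrong direction---it does \emph{not} say that $w_1\ne 0$ forces the other vertices to be nonzero (indeed $\mapf_\ini$ itself is a counterexample), and the conclusion ``$w_1$ must hit $0$ before any $v_i$ can move'' is also false---so simply discard that detour and use the clean argument you already give in your final paragraph.
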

	We first show the if direction.
	Assume that $\mapf_\ini$ and $\mapf_\tar$ are reconfigurable.
	Then, a reconfiguration sequence must contain a solution $\mapf$ such that $\mapf(w_1)=0$.
	Otherwise, values of $w_1$ and $w_2$ can never be changed because only allowed assignment to $\{w_1, w_2\}$ is either $(1,2)$ or $(2,1)$ in this case, which contradicts the reconfigurability.
	Since $\mapf$ assigns $0$ to $w_1$, no other vertices are assigned $0$ by Observation~\ref{obs:ETH2}.	
	In addition, by Observation~\ref{obs:ETH1}, $\rest{\mapf}{V(G^\prime)}$ is a solution of $(G^\prime,\Dom^\prime,\Ecns^\prime)$.
	
	We next prove the only-if direction.
	Let $\mapg$ be a solution of $(G^\prime,\Dom^\prime,\Ecns^\prime)$.
	We extend it to solutions $\mapf_\ini^\prime$ and $\mapf_\tar^\prime$ of $(G,\Dom,\Ecns)$ as follows.
	For each $\iot \in \{\ini,\tar\}$, 
	\begin{itemize}
		\item let $\mapf_\iot^\prime(v_i)=\mapg(v_i)$ for each $v_i\in V(G^\prime)$; and
		\item let $\mapf_\iot^\prime(w_1)=\mapf_\iot(w_1)$ and $\mapf_\iot^\prime(w_2)=\mapf_\iot(w_2)$.
	\end{itemize}
	Notice that $\mapf_\iot^\prime$ is a solution, and that $\mapf_\iot$ and $\mapf_\iot^\prime$ are reconfigurable by changing values of all vertices $v_i\in V(G^\prime)$ from $0$ to $\mapf_\iot^\prime(v_i)$.
	Furthermore, $\mapf_\ini^\prime$ and $\mapf_\tar^\prime$ are reconfigurable by the following three steps:
	\begin{itemize}
		\item change a value of $w_1$ from $1$ to $0$;
		\item change a value of $w_2$ from $2$ to $1$; and
		\item change a value of $w_1$ from $0$ to $2$.
	\end{itemize}
	We note that this yields a valid reconfiguration sequence: in particular, Observation~\ref{obs:ETH2} and the construction of $\mapf_\ini^\prime$  justify the first step.
	Therefore, $\mapf_\ini$ and $\mapf_\tar$ are reconfigurable.

\section{Proof of Theorem~\ref{the:pathhard}}
	
	We give a polynomial-time reduction from $\Hwg$\textsc{-word reconfiguration}, which is defined as follows.
	Let $\Hwg$ be a (possibly non-simple) directed graph.
	An $\Hwg$\emph{-word} (of \emph{length} $\wdl$) is a string $\Hw \in V(\Hwg)^\wdl$ such that $w_i w_{i+1}\in E(\Hwg)$ for every pair of consecutive symbols $w_i$ and $w_{i+1}$ in $\Hw$; in other words, $\Hw$ can be seen as a directed walk in $\Hwg$.
	For an integer $\wdl \ge 1$, an $\Hwg$\emph{-word graph} $\Hws{\wdl}{\Hwg}$ is a graph such that $V(\Hws{\wdl}{\Hwg})$ is a set of all $\Hwg$-word of length $\wdl$, and two $\Hwg$-words $\Hw$ and $\Hw^\prime$ are adjacent if and only if the hamming distance between them is exactly one.
	Then, $\Hwg$\textsc{-word reconfiguration} asks for a given integer $\wdl$, two $\Hwg$-words $\Hw_\ini$ and $\Hw_\tar$, whether there exists a walk between $\Hw_\ini$ and $\Hw_\tar$ in $\Hws{\wdl}{\Hwg}$ or not.
	Wrochna showed that there exists a directed graph $\Hwg$ such that $\Hwg$\textsc{-word reconfiguration} is PSPACE-complete~\cite{Wro18}.
	
	\begin{figure}[t]
		\begin{center}
			\includegraphics{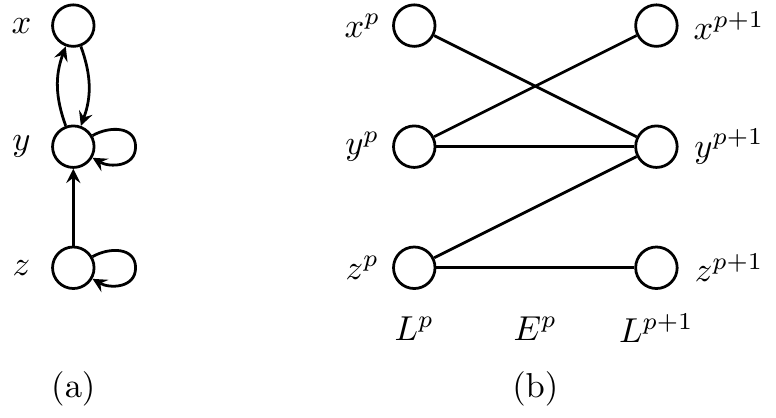}
		\end{center}
		\caption{(a) A directed graph $\Hwg$ and (b) the edge set $\Elay{p}$ between $\lay{p}$ and $\lay{p+1}$.}
		\label{fig:PWreduction}
	\end{figure}	
	
	We now construct an instance $(G,\Dom,\Ecns,\mapf_\ini,\mapf_\tar)$ of \HR{L} corresponding to an instance $(\wdl,\Hw_\ini,\Hw_\tar)$ of $\Hwg$\textsc{-word reconfiguration}.
	The idea is similar to the one used in the proof of the PSPACE-completeness of \HR{} for cycles~\cite{Wro18}.
	Let $G$ be a path with $\wdl$ vertices $v_1,v_2,\ldots,v_\wdl$.
	An \underlying{} graph $H$ is constructed as follows.
	The vertex set $V(H)$ is a union of three sets $\lay{0},\lay{1},\lay{2}$, where $\lay{p}=\{x^p \colon x\in V(\Hwg)\}$ for each $p\in \{0,1,2\}$; each $\lay{p}$ is called a \emph{layer}.
	For any integer $q>2$, we define $x^q=x^p$ and $\lay{q}=\lay{p}$ if $q\equiv p~(\mathrm{mod}~3)$.
	The edge set $E(H)$ is a union of three sets $\Elay{0},\Elay{1},\Elay{2}$, where $\Elay{p}=\{x^p y^{p+1} \colon xy\in E(\Hwg)\}$ for each $p\in \{0,1,2\}$.
	(See \figurename~\ref{fig:PWreduction} for an example of $\Elay{p}$.)
	Let $\Dom=V(H)$, and let $\Vcns(v_i)=\lay{i}$ for each vertex $v_i\in V(G)$.
	For each edge $v_i v_{i+1}\in E(G)$, we construct the constraint $\Ecns(v_i v_{i+1})$ so as to respect $H$, $\Vcns(v_i)$ and $\Vcns(v_{i+1})$.
	Finally, we define $\mapf_\iot$ for each $\iot \in \{\ini,\tar\}$ as follows.
	For each $v_i\in V(G)$, let $\mapf_\iot(v_i)=w_i^p$ if $w_i$ is the $i$-th symbol of $\Hw_\iot$.
	This completes the construction of $(G,\Dom,\Ecns,\mapf_\ini,\mapf_\tar)$, which can be done in polynomial time.
	
	In order to show the correctness, we show that there exists a bijection between all solutions of $(G,\Dom,\Ecns)$ and all $\Hwg$-words of length $\wdl$.
	Let $\mapf \colon V(G) \to \Dom$ be any mapping which respects all lists such that $(\mapf(v_1),\mapf(v_2),\allowbreak \ldots,\mapf(v_\wdl))=(w_1^1,w_2^2,\ldots,w_\wdl^\wdl)$ for some $w_1,w_2,\ldots,w_\wdl \in V(\Hwg)$.
	We now define $\omega(\mapf)$ as a string $w_1 w_2 \cdots w_\wdl$.
	For each $i \in \{1,2,\ldots,\wdl-1\}$, $\mapf(v_i) \mapf(v_{i+1})=x^i y^{i+1}\in E(H)$ if and only if $w_i w_{i+1}=xy \in E(\Hwg)$.
	Therefore, $\mapf$ is a solution of $(G,\Dom,\Ecns)$ if and only if $\omega(\mapf)$ is an $\Hwg$-word of length $\wdl$.
	Let $\omega^\prime$ be the restriction of $\phi$ on $V(\Sol{(G,\Dom,\Ecns)})$, that is, $\omega^\prime=\rest{\omega}{V(\Sol{(G,\Dom,\Ecns)})}$.
	Then, by the definition of $\omega$, $\omega^\prime$ is a bijection between $V(\Sol{(G,\Dom,\Ecns)})$ and $V(\Hws{\wdl}{\Hwg})$.
	Moreover, this bijection preserves the adjacency relation of the solution graph, and $\mapf_\iot=\omega^\prime(\Hw_\iot)$, $\iot \in \{\ini,\tar\}$.
	Thus, $(G,\Dom,\Ecns,\mapf_\ini,\mapf_\tar)$ is a yes-instance if and only if $(\wdl,\Hw_\ini,\Hw_\tar)$ is.

\section{Proofs omitted from Section~\ref{sec:reduce}}

\subsection{Proof of Lemma~\ref{lem:reduce_pre}}

	We first prove the if direction.
	Assume that $\mapf$ is a solution for $(G,\Dom,\Ecns)$.
	In order to show that $\mapf^\prime=\rest{\mapf}{V(G^\prime)}$ is a solution for $(G^\prime,\Dom,\Ecns^\prime)$, it suffices to check that $\mapf^\prime$ satisfies all constraints.
	For each $X^\prime \in E(G^\prime)$, consider the hyperedge set $E^\prime=\{X\in E(G) \colon X\setminus V_2=X^\prime \}$.
	Recall that $\Ecns^\prime(X^\prime)=\bigcap_{X\in E^\prime}\subcns{X}$, where $\subcns{X}=\{\rest{\mapg}{X^\prime} \colon \mapg \in \Ecns(X)\}$.
	For every hyperedge $X\in E^\prime$, $\rest{\mapf}{X} \in \Ecns(X)$ holds since $\mapf$ is a solution for $(G,\Dom,\Ecns)$.
	Notice that $\rest{\mapf^\prime}{X^\prime}=\rest{(\rest{\mapf}{X})}{X^\prime}$, and hence it is contained in $\subcns{X}$.
	Therefore, $\rest{\mapf^\prime}{X^\prime}\in \Ecns^\prime(X^\prime)$ holds, and hence $\mapf^\prime$ is a solution for $(G^\prime,\Dom,\Ecns^\prime)$.
	
	We next prove the only-if direction.
	Assume that $\mapf^\prime$ is a solution for $(G^\prime,\Dom,\Ecns^\prime)$.
	We claim that a mapping $\mapf \colon V(G) \to \Dom$ obtained by extending $\mapf^\prime$ as follows is a solution for $(G,\Dom,\Ecns)$:
	\[
	\mapf(v) =\left\{
	\begin{array}{ll}
	\mapf^\prime(\isov^{-1}(v)) & ~~~\mbox{if $v\in V_2$}; \\
	\mapf^\prime(v) & ~~~\mbox{otherwise}.
	\end{array} \right.
	\]
	To this end, we show that for each hyperedge $X\in E(G)$, $\rest{\mapf}{X}\in \Ecns(X)$ holds.
	Briefly, this follows the conditions (2) and (4) of Definition~\ref{def:identical}.
	If $X\in E(G^\prime)$, $\Ecns^\prime(X)$ contains $\rest{\mapf^\prime}{X}=\rest{\mapf}{X}$.
	Because $\Ecns^\prime(X)$ is a subset of $\Ecns(X)$, $\Ecns(X)$ also contains $\rest{\mapf}{X}$.
	Otherwise, $X$ is a hyperedge in $H_2^\prime=G[V_2\cup W]$.
	By the condition (4) of Definition~\ref{def:identical}, $\Ecns(X)=\repcns{\Ecns}{\widehat{\isov}}(X^\prime)$ holds, where $\widehat{\isov}=\rest{\phi}{X^\prime}$ and $\isoe(X^\prime)=X$.
	Because $W$, $V_1$ and $V_2$ are disjoint each other, $X^\prime\subseteq V_1\cup W$  does not intersect $V_2$.
	Therefore, $X^\prime$ is contained as a hyperedge in $G^\prime$, and hence $\mapf^\prime$ satisfies the constraint of $X^\prime$; that is, $\rest{\mapf^\prime}{X^\prime} \in \Ecns(X^\prime)$.
	Recall that $\repcns{\Ecns}{\widehat{\isov}}(X^\prime)=\{\mapg \circ \widehat{\isov}^{-1} \colon \mapg \in \Ecns(X^\prime)\}$.
	Thus, $(\rest{\mapf^\prime}{X^\prime})\circ \widehat{\isov}^{-1}$ is in $\Ecns(X)=\repcns{\Ecns}{\widehat{\isov}}(X^\prime)$.
	It now suffices to show that $(\rest{\mapf^\prime}{X^\prime})\circ \widehat{\isov}^{-1}=\rest{\mapf}{X}$.
	From the definition of $\mapf$, $\mapf(v)=\mapf^\prime(\isov^{-1}(v))$ holds for each $v\in V_2\cap X$, and $\mapf(v)=\mapf^\prime(v)$ holds or each $v\in W\cap X$.
	In addition, by the condition (2) of Definition~\ref{def:identical}, $\mapf(v)=\mapf^\prime(\isov^{-1}(v))$ also holds in the later case; and hence $\rest{\mapf}{X}=\rest{(\mapf^\prime \circ \isov^{-1})}{X}$.
	Since $\widehat{\isov}^{-1}=(\rest{\isov}{X^\prime})^{-1}=\rest{\isov^{-1}}{X}$ is a bijection from $X$ to $X^\prime$, we have $\rest{\mapf}{X}=\rest{(\mapf^\prime \circ \phi^{-1})}{X}=(\rest{\mapf^\prime}{X^\prime})\circ (\rest{\isov^{-1}}{X})=(\rest{\mapf^\prime}{X^\prime})\circ \widehat{\isov}^{-1}$ as required.

\subsection{Proof of Lemma~\ref{lem:reduce}}

	We first prove the if direction.
	Suppose that $\Inst$ is a yes-instance.
	Then, there exists a reconfiguration sequence $\seq{\mapf_0, \mapf_1, \ldots, \mapf_\ell}$ for $\Inst$, where $\mapf_\ell=\mapf_\tar$.
	For each $i\in \ISN{0}{\ell}$, $\rest{\mapf_i}{V(G^\prime)}$ is a solution for $(G^\prime,\Dom,\Ecns^\prime)$ by Lemma~\ref{lem:reduce_pre}.
	Therefore, by removing all duplicate solutions appearing consecutively in the resulting sequence, $\seq{\rest{\mapf_0}{V(G^\prime)}, \rest{\mapf_1}{V(G^\prime)}, \ldots,\rest{\mapf_{\ell^\prime}}{V(G^\prime)}}$ is a reconfiguration sequence for $\Inst^\prime$.
	Thus $\Inst^\prime$ is a yes-instance.
	
	We now prove the only-if direction.
	Suppose that $\Inst^\prime$ is a yes-instance.
	Then, there exists a reconfiguration sequence $\Seq^\prime=\seq{\mapf^\prime_0, \mapf^\prime_1, \ldots, \mapf^\prime_\ell}$ for $\Inst^\prime$ with $\mapf^\prime_0=\rest{\mapf_\ini}{V(G^\prime)}$ and $\mapf^\prime_\ell=\rest{\mapf_\tar}{V(G^\prime)}$.
	Our goal is to construct a reconfiguration sequence $\Seq$ for $\Inst$ from $\Seq^\prime$.
	For each $i\in \ISN{0}{\ell}$, we first extend $\mapf^\prime_i$ to $\mapf_i$ as follows:
	\[
	\mapf_i(v) =\left\{
	\begin{array}{ll}
	\mapf^\prime_i(\isov^{-1}(v)) & ~~~\mbox{if $v\in V(H_2)$}; \\
	\mapf^\prime_i(v) & ~~~\mbox{otherwise}.
	\end{array} \right.
	\]
	Notice that $\mapf_i$ corresponds to the mapping defined in the only-if proof of Lemma~\ref{lem:reduce_pre}, and hence it is a solution for $(G,\Dom,\Ecns)$.
	Therefore, $\seq{\mapf_0, \mapf_1, \ldots, \mapf_\ell}$ is a sequence of solutions for $(G,\Dom,\Ecns)$.		
	However, there may exist several indices $i\in \ISN{0}{\ell-1}$ such that $\mapf_i$ and $\mapf_{i+1}$ are not adjacent, that is, $|\diff{\mapf_i}{\mapf_{i+1}}|>1$ may hold.
	Recall that $\mapf^\prime_i$ and $\mapf^\prime_{i+1}$ are adjacent for each $i\in \ISN{0}{\ell-1}$, that is, $\diff{\mapf^\prime_i}{\mapf^\prime_{i+1}}=\{w\}$ for some vertex $w\in V(G^\prime)$.
	Therefore we know that
	\begin{itemize}
		\item if $w\notin V(H_1)$, then $\diff{\mapf_i}{\mapf_{i+1}}=\{w\}$ holds, that is, $\mapf_i$ and $\mapf_{i+1}$ are adjacent; and
		\item if $w\in V(H_1)$, then $\diff{\mapf_i}{\mapf_{i+1}}=\{w,\isov(w)\}$ holds, that is, $\mapf_i$ and $\mapf_{i+1}$ are not adjacent.
	\end{itemize}
	In the latter case, between $\mapf_i$ and $\mapf_{i+1}$, we insert a solution $\widetilde{\mapf}_i$ of $G$ defined as follows:
	\[
	\widetilde{\mapf}_i(v) =\left\{
	\begin{array}{ll}
	\mapf_{i+1}(v) & ~~~\mbox{if $v=w$}; \\
	\mapf_i(v) & ~~~\mbox{if $v=\isov(w)$}; \\
	\mapf_i(v) & ~~~\mbox{otherwise}.
	\end{array} \right.
	\]
	Observe that $\widetilde{\mapf}_i$ is a solution for $(G,\Dom,\Ecns)$.
	Moreover, both $\diff{\mapf_i}{\widetilde{\mapf}_i}=\{w\}$ and $\diff{\widetilde{\mapf}_i}{\mapf_{i+1}}=\{\isov(w)\}$ hold.
	Thus, we obtain a proper reconfiguration sequence $\Seq$ for $\Inst$ as claimed.

\subsection{Proof of Theorem~\ref{the:mweasy}}
	
	In order to prove Theorem~\ref{the:mweasy}, we first give a sufficient condition for which two subgraphs in an instance of \HR{L} satisfy Definition~\ref{def:identical}.	
	Let $\Inst=(G,\Dom,\Ecns,\mapf_\ini,\mapf_\tar)$ be a given instance of \HR{L} such that $H$ is the \underlying{} graph and $\Vcns$ is a list assignment.
	Let $H_1$ and $H_2$ are two induced subgraphs such that $|V(H_1)|=|V(H_2)|$, $V(H_1)\cap V(H_2)=\emptyset$.
	\begin{definition}
		\label{def:identicalLHR}
		$H_1$ and $H_2$ are \emph{\CR{L}-identical} if there exists a bijection $\isov \colon V(H_1) \to V(H_2)$ which satisfies the following two conditions:
		\begin{enumerate}[(i)]
			\item $vw\in E(H_1)$ if and only if $\isov(v)\isov(w)\in E(H_2)$.
			\item for every vertex $v\in V_1$,
			\begin{enumerate}[a]
				\item$\Neigh{G}{v}\setminus V(H_1)=\Neigh{G}{\isov(v)}\setminus V(H_2)$;
				\item $\asgn{v}=\asgn{\isov(v)}$, that is, $\mapf_{\ini}(v)=\mapf_{\ini}(\isov(v))$ and $\mapf_{\tar}(v)=\mapf_{\tar}(\isov(v))$; and
				\item $\Vcns(v)=\Vcns(\isov(v))$.
			\end{enumerate}
		\end{enumerate}
	\end{definition}
	\smallskip
	
	Then, we have the following lemma.
	\begin{lemma}
		\label{lem:idLHR}
		$H_1$ and $H_2$ are identical if they are \CR{L}-identical for a bijection $\isov$.
	\end{lemma}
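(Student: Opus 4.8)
The plan is to construct, from the \CR{L}-identical bijection $\isov \colon V(H_1) \to V(H_2)$, the two bijections $\isov' \colon V(H_1^\prime) \to V(H_2^\prime)$ and $\isoe' \colon E(H_1^\prime) \to E(H_2^\prime)$ required by Definition~\ref{def:identical}, and to verify its four conditions directly. First I would observe that, by condition (ii)a of Definition~\ref{def:identicalLHR}, every vertex $v\in V_1=V(H_1)$ has $\Neigh{G}{v}\setminus V(H_1) = \Neigh{G}{\isov(v)}\setminus V(H_2)$, so taking the union over all $v\in V_1$ and using $V(H_1)\cap V(H_2)=\emptyset$ yields $\Neigh{G}{V_1}=\Neigh{G}{V_2}=:W$; thus $H_1^\prime=G[V_1\cup W]$ and $H_2^\prime=G[V_2\cup W]$ are well-defined with the same $W$. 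I would then define $\isov'$ to agree with $\isov$ on $V_1$ and to be the identity on $W$ (this is forced by Definition~\ref{def:identical}(2)); since $V_1\cap W=\emptyset$ and $V_2\cap W=\emptyset$, this is a bijection $V(H_1^\prime)\to V(H_2^\prime)$. Conditions (2) and (3) of Definition~\ref{def:identical} are then immediate: (2) holds by construction, and (3) is exactly condition (ii)b of Definition~\ref{def:identicalLHR}.

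Next I would establish the graph isomorphism (condition (1)) and define $\isoe'$. Since $H_1^\prime, H_2^\prime$ are $2$-uniform (we are in \HR{L}, so $G$ is a graph), it suffices to show that $\isov'$ preserves edges, i.e.\ $vw\in E(H_1^\prime)$ iff $\isov'(v)\isov'(w)\in E(H_2^\prime)$, and then set $\isoe'(vw)=\isov'(v)\isov'(w)$. I would split into cases by where $v,w$ lie: if both are in $V_1$ this is condition (i) of Definition~\ref{def:identicalLHR}; if both are in $W$ then $\isov'$ is the identity on them and the edge $vw$ lies in $G[W]\subseteq H_1^\prime\cap H_2^\prime$, so it is present in one iff in the other; if $v\in V_1$ and $w\in W$ then $w\in\Neigh{G}{v}$ iff $w\in\Neigh{G}{v}\setminus V(H_1)$ (as $w\notin V(H_1)$) iff $w\in\Neigh{G}{\isov(v)}\setminus V(H_2)$ (by (ii)a) iff $\isov'(v)w=\isov(v)w\in E(H_2^\prime)$. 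This also confirms that the domain and codomain of $\isoe'$ are exactly $E(H_1^\prime)$ and $E(H_2^\prime)$.

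Finally I would verify condition (4): for every hyperedge $X=vw\in E(H_1)$, $\Ecns(\isoe'(X))=\repcns{\Ecns}{\widehat{\isov'}}(X)$, where $\widehat{\isov'}=\rest{\isov'}{X}$. Here I would invoke the structure of \HR{L}: there is an \underlying{} graph $H$ on $\Dom$ with $\Ecns(vw)=E(H)\cap(\Vcns(v)\times\Vcns(w))$ for every edge. Writing $v'=\isov'(v)$, $w'=\isov'(w)$, the set $\repcns{\Ecns}{\widehat{\isov'}}(vw)$ is obtained from $\Ecns(vw)$ by relabelling coordinates via $\isov'$; since $\Ecns(vw)$ is the set of $H$-edges respecting $\Vcns(v)\times\Vcns(w)$ and condition (ii)c gives $\Vcns(v)=\Vcns(v')$, $\Vcns(w)=\Vcns(w')$, the relabelled set is exactly $E(H)\cap(\Vcns(v')\times\Vcns(w'))=\Ecns(v'w')=\Ecns(\isoe'(vw))$. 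A mild bookkeeping point is handling the two possible orientations of the edge $X$ (i.e.\ which endpoint is the ``first'' coordinate), but because $H$ is undirected and edges of $G$ are unordered pairs, $\Ecns(vw)$ is symmetric under swapping coordinates, so both orderings give the same conclusion.

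The main obstacle I expect is the careful handling of condition (4): one must be precise about how the ``translation'' operator $\repcns{\Ecns}{\cdot}$ acts on ordered tuples versus how edges of $G$ are unordered, and in particular check that the equality $\Ecns(vw)=E(H)\cap(\Vcns(v)\times\Vcns(w))$ interacts correctly with the coordinate permutation induced by $\isov'$. Everything else (the equality of neighbourhoods forcing a common $W$, and the case analysis for the edge-preservation of $\isov'$) is routine set manipulation.
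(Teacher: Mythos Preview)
Your proposal is correct and follows essentially the same approach as the paper: both first derive $\Neigh{G}{V(H_1)}=\Neigh{G}{V(H_2)}$ from condition (ii)a, then extend $\isov$ to $\isov'$ by the identity on $W$, set $\isoe'(vw)=\isov'(v)\isov'(w)$, verify the isomorphism by the same three-way case split on edge endpoints, and finally obtain condition~(4) from the fact that in \HR{L} each constraint is determined by the pair of lists together with (ii)c. Your discussion of ordered versus unordered coordinates in condition~(4) is slightly more explicit than the paper's, which simply remarks that the two constraints coincide ``when we see them as sets of vectors''.
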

	\begin{proof}
		We first prove that the assumption that $\Neigh{G}{V(H_1)}=\Neigh{G}{V(H_2)}$ is satisfied.
		By the condition (ii)-a, we have $\Neigh{G}{V(H_1)}=\bigcup_{v\in V(H_1)}(\Neigh{G}{v} \setminus V(H_1))=\bigcup_{v\in V(H_1)}(\Neigh{G}{\isov(v)} \setminus V(H_2))$.
		Since $\isov$ is a bijection, $\bigcup_{v\in V(H_1)}(\Neigh{G}{\isov(v)} \setminus V(H_2))=\bigcup_{v\in V(H_2)}(\Neigh{G}{v} \setminus V(H_2))=\Neigh{G}{V(H_2)}$, as required.
		
		Let $W=\Neigh{G}{V(H_1)}=\Neigh{G}{V(H_2)}$, $H_1^\prime=G[V(H_1)\cup W]$ and $H_2^\prime=G[V(H_2)\cup W]$.
		In order to prove the lemma, it suffices to give two bijections which satisfy all conditions of Definition~\ref{def:identical}.
		We define two mappings $\isov^\prime$ and $\isoe^\prime$ as follows.
		\begin{itemize}
			\item For each $v\in V(H_1^\prime)$, $\isov^\prime(v)=\isov(v)$ if $v\in V(H_1)$ and $\isov^\prime(v)=v$ if $v\in W$.
			\item For each $vw\in E(H_1^\prime)$, $\isoe^\prime(vw)=\isov^\prime(v) \isov^\prime(w)$.
		\end{itemize}
		We now prove the condition (1) for $\isov^\prime$ and $\isoe^\prime$.
		Since $\isov$ is a bijection between $V(H_1)$ and $V(H_2)$, $\isov^\prime$ is a bijection between $V(H_1^\prime)$ and $V(H_2^\prime)$.
		By the definition of $\isoe^\prime$, we suffice to show that $\isoe^\prime$ is a bijection between $E(H_1^\prime)$ and $E(H_2^\prime)$.
		From the definition and the condition (i), $\rest{\isoe^\prime}{E(G[W])}$ is a bijection (the identity mapping) between $E(G[W])$ and $E(G[W])$ and $\rest{\isoe^\prime}{E(H_1)}$ is a bijection between $E(H_1)$ and $E(H_2)$.
		Let $E_1^\prime=E(H_1^\prime)\setminus(E(H_1) \cup E(G[W]))$ and $E_2^\prime=E(H_2^\prime)\setminus(E(H_2) \cup E(G[W]))$.
		Then, it suffices to show that $\rest{\isoe^\prime}{E_1^\prime}$ is a bijection between $E_1^\prime$ and $E_2^\prime$.
		For each vertices $v\in V(H_1)$ and $w\in W$, $\isov^\prime(v) \isov^\prime(w)=\isov(v) w$ holds.
		From the condition (ii)-a, $vw\in E_1^\prime$ if and only if $\isov(v) w\in E_2^\prime$.
		Therefore, $\rest{\isoe^\prime}{E_1^\prime}$ is a bijection between $E_1^\prime$ and $E_2^\prime$, and hence $\isoe^\prime$ is a bijection between $E(H_1^\prime)$ and $E(H_2^\prime)$.
		
		The conditions (2) and (3) directly follows the definition of $\isov^\prime$ and the condition (ii)-b.
		
		We finally prove the condition (4).
		Recall that each constraint $\Ecns(vw)=\{\mapg \in \Dom^{\{v,w\}} \colon \allowbreak \mapg(v)\mapg(w) \in E(H)\cap(\Vcns(v) \times \Vcns(w))\}$ depends only on $\Vcns(v)$ and $\Vcns(w)$.
		From the condition (ii)-c, for every $vw\in E(H_1)$, $\Vcns(v) \times \Vcns(w)=\Vcns(\isov^\prime(v)) \times \Vcns(\isov^\prime(w))$ holds.
		Therefore, $\Ecns(vw)$ and $\Ecns(\isoe^\prime(vw))$ are the same when we see them as the sets of vectors.
	\end{proof}
	
	We note that Definition~\ref{def:identicalLHR} and Lemma~\ref{lem:reduce} are equivalent to the definition of ``identical subgraphs'' and a ``reduction rule'' given for \CR{L}~\cite{HIZ18}.
	Therefore, by using the same argument as \cite{HIZ18}, we have the following proposition.
	\begin{proposition}
		There exists a polynomial-time algorithm which computes from a given instance $\Inst$ of \HR{L} an instance $\Inst^\prime$ such that:
		\begin{itemize}
			\item[$\bullet$] $\Inst^\prime$ can be obtained by repeatedly applying Lemma~\ref{lem:reduce}; and
			\item[$\bullet$] the size of $\Inst^\prime$ depends only on the modular-width and the size of a maximum clique of a given graph $G$, and $\doms$.
		\end{itemize}
	\end{proposition}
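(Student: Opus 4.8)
The plan is to follow the kernelization argument of~\cite{HIZ18} for \CR{L}, now driven by the reduction rule of Lemma~\ref{lem:reduce} together with the sufficient condition in Lemma~\ref{lem:idLHR}. First I would compute in polynomial time the modular decomposition tree $T$ of the input graph $G$. Recall that every internal node of $T$ is of one of three types: a \emph{parallel} node (its children's modules form a disjoint union), a \emph{series} node (they form a complete join), or a \emph{prime} node (they are substituted into a prime quotient graph, which has at most $\MW$ vertices by the definition of modular-width); recall also that no parallel node is the parent of a parallel node and no series node is the parent of a series node. I would then process $T$ bottom-up: having already reduced all modules strictly below an internal node $t$, I apply Lemma~\ref{lem:reduce} exhaustively to pairs of \emph{sibling} modules at $t$ that are \CR{L}-identical.

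The argument hinges on three observations about a node $t$ of $T$ whose children are the modules $M_1,\dots,M_m$. (a)~If $t$ is a \emph{parallel} node, then all $M_i$ have one and the same external neighborhood $W$ --- the external neighborhood of the whole module represented by $t$ --- and, since $M_i$ is non-adjacent to every other $M_\ell$, condition (ii)-a of Definition~\ref{def:identicalLHR} between $M_i$ and $M_j$ holds automatically; hence Lemma~\ref{lem:reduce} (via Lemma~\ref{lem:idLHR}) applies to any $M_i,M_j$ that are isomorphic as graphs through a bijection preserving $\asgn{0}$ and $\Vcns$. Once no such identification is possible, the surviving children are pairwise non-isomorphic as graphs decorated by $\asgn{0}$ and $\Vcns$, so if each reduced child has at most $g$ vertices then $t$ retains at most $2^{g^2}\cdot(\doms^2\cdot 2^{\doms})^{g}$ children. (b)~If $t$ is a \emph{series} node, then selecting one vertex from each $M_i$ yields a clique of $G$, so $t$ has at most $\kappa$ children, where $\kappa$ denotes the maximum clique size of $G$. (c)~If $t$ is a \emph{prime} node, then $t$ has at most $\MW$ children. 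Combining (a)--(c), the number of vertices of any reduced module rooted at a node of height $d$ is bounded by a function $g_d$ of $d$, $\MW$, $\kappa$ and $\doms$ with $g_0 = 1$ and $g_d \le \max\{\MW,\ \kappa,\ 2^{g_{d-1}^2}(\doms^2 2^{\doms})^{g_{d-1}}\}\cdot g_{d-1}$.

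It remains to bound the height of $T$ by a function of $\kappa$ alone, so that the recursion for $g_d$ closes. Along any root-to-leaf path of $T$ the clique number of the successive modules is non-increasing, and it strictly decreases whenever the path descends through a series node (some sibling contributes at least one vertex to a clique) or through a prime node (a prime quotient graph has at least four vertices and hence no isolated vertex, so again some sibling contributes). Since parallel nodes cannot occur consecutively on a path, at least every second node on the path is a series or a prime node, while the clique number starts at $\kappa$ and never drops below $1$; hence $T$ has height $O(\kappa)$. Consequently the reduced graph $G^\prime$, which is the module at the root of $T$ after processing, has at most $g_{O(\kappa)} = f(\MW,\kappa,\doms)$ vertices, and the remaining data of $\Inst^\prime$ (lists, binary constraints, and the two solutions) then has total size bounded by $f(\MW,\kappa,\doms)\cdot\mathrm{poly}(\doms)$. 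As in~\cite{HIZ18}, the modular decomposition and each individual step --- testing \CR{L}-identity of two siblings and performing the corresponding application of Lemma~\ref{lem:reduce} --- run in polynomial time, and only polynomially many such steps are performed, so the whole procedure is a polynomial-time algorithm.

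The step I expect to be the main obstacle is reconciling observations (a) and (b)--(c): the reduction rule of Lemma~\ref{lem:reduce} genuinely applies only at parallel nodes, because the siblings of a series or prime node have distinct external neighborhoods (each containing the others), so their number must instead be controlled by the separate combinatorial argument above --- precisely where the maximum clique size and the definition of modular-width enter --- after which one must check that these two mechanisms fit together through the height bound. Verifying that the \HR{L}-specific rule (Lemmas~\ref{lem:idLHR} and~\ref{lem:reduce}) plays exactly the role of the reduction rule used for \CR{L} in~\cite{HIZ18} is then routine.
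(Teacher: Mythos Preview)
Your proposal is essentially correct and is a faithful reconstruction of the argument the paper defers to~\cite{HIZ18}: process the modular decomposition tree bottom-up, apply the reduction rule (via Lemma~\ref{lem:idLHR}) between sibling modules at parallel nodes, and bound the branching at series and prime nodes by the clique number and the modular-width respectively; then close the recursion by bounding the height of the decomposition tree by $O(\kappa)$ using the strict decrease of the clique number at every series or prime node. All three observations (a)--(c) and the height bound are correct as you state them.

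One small point worth tightening: your claim that ``testing \CR{L}-identity of two siblings runs in polynomial time'' hides a graph-isomorphism test on decorated graphs of size at most $g_{d-1}$. Done naively this costs $g_{d-1}!\cdot\mathrm{poly}$ per pair, which is $f(\MW,\kappa,\doms)\cdot\mathrm{poly}(n)$ overall --- FPT time, which is all that Theorem~\ref{the:mweasy} needs, but not literally polynomial in the input unless you additionally maintain canonical labels for the reduced modules bottom-up (so that identity testing becomes a string comparison). The paper's proposition uses ``polynomial-time'' in the same informal sense, so this is not a gap in your argument relative to the paper, just something to make explicit if you write it out in full.
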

	
	Because a given graph $G$ has a homomorphism to a simple graph with $\doms$ vertices, the size of a maximum clique of $G$ must be at most $\doms$.
	Thus, the size of $\Inst^\prime$ depends only on the modular-width and $\doms$, and hence we are done.

\subsection{Proof of Theorem~\ref{the:tdeasy}}
	
	We first define the tree-depth and introduce some notation.
	Let $G$ be a connected graph.
	A \emph{tree-depth decomposition} of $G$ is a rooted tree $T$ such that $V(T)=V(G)$ and if $vw \in E(G)$ then one of two endpoint is an ancestor of the other in $T$.
	The \emph{depth} of $T$ is the maximum number of vertices of a path in $T$ between the root and a leaf.
	The \emph{tree-depth} $\TD(G)$ of $G$ is the minimum depth of a tree-depth decomposition of $G$.
	It is known that there exists an algorithm computing the tree-depth decomposition of depth $\TD$ in time $O^*(2^{O(\TD^2)})$~\cite{treedepth14}.
	For a tree-depth decomposition $T$ of a graph $G$ and for a vertex $v\in V(T)=V(G)$, $\subtree{v}$ denote the subtree of $T$ rooted at $v$, and $\ances{v}$ denote the set of all ancestors of $v$.
	From the definition of a tree-depth decomposition, $\Neigh{G}{v}\subseteq V(\subtree{v})\cup \ances{v}$ holds.

\medskip
	Let $\Inst=(G,\Dom,\Ecns,\mapf_\ini,\mapf_\tar)$ be a given instance of \CSR{0}, and let $T$ be a tree-depth decomposition of $\Prm{G}$ with depth $\TD$.
	We assume that all vertices of $V(G)$ are totally ordered in the pre-order of the depth-first search on $T$ starting with its root; let $\prec$ be the binary relation defined by this ordering.
	
	We now explain a preprocessing of our kernelization algorithm, which will simplify the discussion, and then give a sufficient condition that two subhypergraphs are identical.
	
	Let $X\in E(G)$ be a hyperedge.
	From the definition of the primal graph, any two distinct vertices $v$ and $w$ in $X$ are adjacent in $\Prm{G}$, and hence they are in ancestor-descendant relationship in $T$.
	Therefore, there exists the unique vertex which is farthest from the root.
	We call $v$ the \emph{bottommost vertex} of $X$.
	Then, for each vertex $v\in V(G)$, we modify the given instance as follows. 
	\begin{itemize}
		\item Remove all hyperedges $X$ whose bottommost vertices are $v$ from $G$.
		\item Add a hyperedge $X_v=\ances{v}\cup \{v\}$ to $G$, and let $\Ecns(X_v)$ be the set of all mappings $\mapg \in \Dom^{X_v}$ which satisfy the constraints of all removed hyperedges.
	\end{itemize}
	Observe that this modification can be done in time $O^*(\doms^{\TD})$, since each vertex has at most $\TD$ ancestors, and does not change the set of solutions.
	Therefore, in the remainder of this section, we assume that there exists a one-to-one correspondence between $V(G)$ and $E(G)$ such that $v\in V(G)$ corresponds to the hyperedge $\ances{v}\cup \{v\}$.
	
	Let $v\in V(T)$ be a vertex, and assume that all vertices in $V(\subtree{v})$ are labeled as $v_1,v_2,\ldots,v_p$ according to $\prec$.
	Then, we define three $p$-tuples $\IDN{v}$, $\IDA{v}$ and $\IDC{v}$ as follows.
	\begin{itemize}
		\item The $i$-th component of $\IDN{v}$ is $\Neigh{T}{v_i}$.
		\item The $i$-th component of $\IDA{v}$ is $\asgn{v_i}$.
		\item The $i$-th component of $\IDC{v}$ is a set of vectors of length $|X_i|$ which corresponding to $\Ecns(X_i)$ under the total order $\prec$, where $X_i=\ances{v_i}\cup \{v_i\}$.
	\end{itemize}
	We call these tuples \emph{ID-tuples} of $v$.
	Then, we have the following lemma.
	\begin{lemma}
		\label{lem:IDtup}
		Let $v$ and $w$ be two children of a vertex $u$ in $T$ such that $|V(\subtree{v})|=|V(\subtree{w})|$.
		If $(\IDN{v},\IDA{v},\IDC{v})=(\IDN{w},\IDA{w},\IDC{w})$ holds, then $G[V(\subtree{v})]$ and $G[V(\subtree{w})]$ are identical under some pair of two bijections.
	\end{lemma}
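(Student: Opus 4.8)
The plan is to produce, from the equality of the three ID-tuples, the pair of bijections demanded by Definition~\ref{def:identical} for the choice $V_1 := V(\subtree{v})$ and $V_2 := V(\subtree{w})$. I work in the post-preprocessing normal form throughout, so $E(G)=\{X_x : x\in V(G)\}$ with $X_x=\ances{x}\cup\{x\}$, and I label $V_1=\{v_1,\dots,v_p\}$, $V_2=\{w_1,\dots,w_p\}$ in increasing $\prec$-order; thus $v_1=v$, $w_1=w$, and the parent in $T$ of both $v$ and $w$ is $u$.

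First I would check that we are in the situation of Definition~\ref{def:identical}. As $v$ and $w$ are distinct children of $u$, the subtrees $\subtree{v}$ and $\subtree{w}$ are vertex-disjoint and $\ances{v}=\ances{w}$; write $W$ for this common set, which is nonempty since $u\in W$. For $x\in V_1$ the tree-depth property gives $\Neigh{G}{x}\subseteq V(\subtree{x})\cup\ances{x}\subseteq V_1\cup W$, while after preprocessing $x$ lies in the hyperedge $X_x=\ances{x}\cup\{x\}$ with $\ances{x}\supseteq W$, so $x$ is adjacent to every vertex of $W$; hence $\Neigh{G}{x}\setminus V_1=W$, and taking the union over $x\in V_1$ and deleting $V_1$ gives $\Neigh{G}{V_1}=W$, and symmetrically $\Neigh{G}{V_2}=W$. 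Together with $V_1\cap V_2=\emptyset$ and $|V_1|=|V_2|=p$ this is exactly the hypothesis of Definition~\ref{def:identical}, taking $H_i=G[V_i]$ and $H_i^\prime=G[V_i\cup W]$.

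I would then set $\isov(v_i)=w_i$ for each $i$ and $\isov(x)=x$ for $x\in W$, and let $\isoe$ send a hyperedge $Y$ to $\{\isov(y):y\in Y\}$. Condition~(2) of Definition~\ref{def:identical} holds by construction, and condition~(3) is exactly $\IDA{v}=\IDA{w}$. For condition~(1), the content of $\IDN{v}=\IDN{w}$ is that $v_i\mapsto w_i$, extended by $u\mapsto u$, is a rooted-tree isomorphism of $\subtree{v}$ onto $\subtree{w}$; and since in the normal form $X_{v_i}$ equals $W$ together with the $v$-to-$v_i$ path in $\subtree{v}$, this isomorphism carries $X_{v_i}$ onto $X_{w_i}$. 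The remaining hyperedges of $H_1^\prime$ are those contained in $W$, namely the nonempty sets $X_x\cap W$ for $x\notin V_1$; this family equals the analogous one for $H_2^\prime$, because a vertex outside $V_1\cup V_2$ contributes the same set to both, a vertex of $V_2$ has all of $W$ among its ancestors and so contributes $W$ itself to the $H_1^\prime$-family (and symmetrically a vertex of $V_1$ contributes $W$ to the $H_2^\prime$-family), and $W$ already belongs to both families as $V_1,V_2\neq\emptyset$. Hence $\isoe$ restricts to a bijection $E(H_1^\prime)\to E(H_2^\prime)$ and condition~(1) holds.

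Finally, condition~(4) comes from $\IDC{v}=\IDC{w}$. For a hyperedge $X=X_{v_i}\cap V_1$ of $H_1$, its image $\isoe(X)=X_{w_i}\cap V_2$ is the matching hyperedge of $H_2$, and $\exX{\isov}=\rest{\isov}{X}$ preserves the order $\prec$, because in the pre-order $\prec$ all ancestors of $v$ precede every vertex of $\subtree{v}$ and, along any root-to-vertex path, ancestors precede their descendants, so the $\prec$-listings of $X$ and of $\isoe(X)$ agree index-for-index under $v_j\leftrightarrow w_j$. Since the $i$-th component of $\IDC{v}$ is precisely the vector encoding of $\Ecns(X_{v_i})$ relative to $\prec$, the equality $\IDC{v}=\IDC{w}$ yields $\Ecns(X_{w_i})=\repcns{\Ecns}{\exX{\isov}}(X_{v_i})$, and passing to the constraints inherited by the induced subhypergraphs gives condition~(4). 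I expect the main obstacle to be the bookkeeping around $W$ — establishing $\Neigh{G}{V_1}=\Neigh{G}{V_2}=W$ and the coincidence of the two families of $W$-internal hyperedges — and making rigorous the index-identification reading of $\IDN{v}=\IDN{w}$ and $\IDC{v}=\IDC{w}$ so that these become honest statements about $\isov$, $\isoe$ and $\prec$.
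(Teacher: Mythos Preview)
Your proposal is correct and follows essentially the same route as the paper's proof: both take $\isov$ to be the $\prec$-order-preserving bijection between $V(H_1^\prime)$ and $V(H_2^\prime)$ (fixing $W$ pointwise), let $\isoe$ be the induced map on hyperedges, and then read conditions~(2), (3), (1), (4) of Definition~\ref{def:identical} off the equalities of the three ID-tuples $\IDN{\cdot}$, $\IDA{\cdot}$, $\IDC{\cdot}$ respectively. Your write-up is in fact more thorough than the paper's in two places---you explicitly verify $\Neigh{G}{V_1}=\Neigh{G}{V_2}=W$ from the post-preprocessing hyperedge structure, and you take care of the hyperedges of $H_1^\prime$ lying entirely in $W$ (arising from $X_x$ with $x\notin V_1\cup W$), which the paper's proof passes over---and you spell out why $\isov$ preserves $\prec$, which is what makes the vector encoding in $\IDC{\cdot}$ translate into the constraint equality required for condition~(4).
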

	\begin{proof}
		By the preprocessing, $\Neigh{G}{V(\subtree{v})}=\Neigh{G}{V(\subtree{w})}=\ances{v}=\ances{w}$ holds.
		We denote $H_1=G[V(\subtree{v})]$, $H_2=G[V(\subtree{w})]$, $H_1^\prime=G[V(\subtree{v})\cup \ances{v}]$ and $H_2^\prime=G[V(\subtree{w})\cup \ances{w}]$.
		We now define a pair of mappings $\isov$ and $\isoe$ as follows.
		$\isov$ maps the $i$-th vertex in $V(H_1^\prime)$ to the $i$-th vertex in $V(H_2^\prime)$ according to $\prec$.
		$\isoe(X)=\{\isov(x_1),\isov(x_2),\ldots,\isov(x_\ari)\}$ holds for each hyperedge $X=\{x_1,x_2,\ldots,x_\ari\}\in E(H_1^\prime)$.
		
		Then, it suffices to show that $\isov$ and $\isoe$ satisfy the conditions (1) through (4) of Definition~\ref{def:identical}.		
		From the definition of $\isov$, the condition (2) holds; that is, $\isov(x)=x$ if $x\in \ances{v}$.
		Since $\IDA{v}=\IDA{w}$, the condition (3) holds.
		In order to verify the condition (1), we show that $\isoe$ is a bijection from $E(H_1^\prime)$ to $E(H_2^\prime)$.
		The assumption that $\IDN{v}=\IDN{w}$ implies that $\subtree{v}$ and $\subtree{w}$ isomorphic under $\isov$.
		Furthermore, $T[V(\subtree{v})\cup \ances{v}]=T[V(H_1^\prime)]$ and $T[V(\subtree{w})\cup \ances{w}]=T[V(H_2^\prime)]$ are isomorphic, too.
		Recall that for any vertex $x\in V(G)$, there exists the corresponding hyperedge $X\in E(G)$ such that $X=\ances{v}\cup \{v\}$.
		Therefore, for each hyperedge $X\in E(H_1^\prime)$ corresponding to $x\in V(H_1^\prime)$, $\isoe(X)$ is a hyperedge corresponding to $\isov(x) \in V(H_2^\prime)$.
		Thus, condition (1) holds.
		Finally, since $\IDC{v}=\IDC{w}$, the condition (3) holds.
	\end{proof}

\medskip
	We now describe our kernelization algorithm.
	Our algorithm traverses $T$ from leaves to the root, that is, the algorithm processes a vertex of $T$ after its all children are processed.
	
	Let $u\in V(G)$ be a vertex which is currently visited.
	We check if there is a pair of children $v$ and $w$ which satisfies the conditions of Lemma~\ref{lem:IDtup}.
	If such a pair is found, we apply Lemma~\ref{lem:reduce} to remove $V(\subtree{w})$.
	We note that the one-to-one correspondence between the vertex set and the hyperedge set is preserved during this process.
	Therefore, we repeat this as long as such a pair is left.
	
	The running time of this kernelization can be estimated as follows.
	For each pair of two children of a vertex, Lemma~\ref{lem:IDtup} can be checked in time polynomial in $|V(G)|$ and the maximum size of a constraint (i.e., $O(\doms^{\TD})$).
	Since there exist at most $|V(G)|^2$ pairs to be checked, the algorithm runs in time polynomial in $|V(G)|$ and $O(\doms^{\TD})$.
	
\medskip
	Finally, we prove that the obtained instance $(G^\prime,\Dom,\Ecns^\prime,\mapf^\prime_\ini,\mapf^\prime_\tar)$ has bounded size.
	We have the following lemma.
	\begin{lemma}
		\label{lem:kernel}
		The graph $G^\prime$ has at most $\tdkern{\TD}$ vertices, where $\tdkern{j}$ is recursively defined for an integer $j\ge 1$ as follows{\rm :}
		\[
		\tdkern{j} =\left\{
		\begin{array}{ll}
		1 & ~~~\mbox{if $j=1$}; \\
		\alpha^2 \cdot {(2^{\alpha} \cdot \doms^2 \cdot 2^{\doms^{\alpha}})}^{\alpha} & ~~~\mbox{otherwise},
		\end{array} \right.
		\]
		where $\alpha=\tdkern{j-1}$.
		In particular, $\tdkern{\TD}$ depends only on $\doms+\TD$.
	\end{lemma}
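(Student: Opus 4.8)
The plan is to prove the lemma by induction on the height of subtrees of $T$, establishing the stronger claim that every $x\in V(G^\prime)$ whose subtree $\subtree{x}$ in $T$ has height $j$ satisfies $|V(\subtree{x})|\le\tdkern{j}$; the lemma then follows by applying this to the root of $T$, whose subtree has height at most $\TD$, together with the obvious monotonicity of $\tdkern{\cdot}$. The base case $j=1$ is immediate, since such an $x$ is a leaf and $|V(\subtree{x})|=1=\tdkern{1}$.

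For the inductive step, fix $x$ with $\subtree{x}$ of height $j\ge 2$ and let $w_1,\dots,w_C$ be the children of $x$ in $T$. Each $\subtree{w_i}$ has height at most $j-1$, so by the induction hypothesis $|V(\subtree{w_i})|\le\alpha$, where $\alpha:=\tdkern{j-1}$; hence $|V(\subtree{x})|\le 1+C\alpha$, and it remains to bound the number $C$ of children. Here I would use that the kernelization has terminated: by Lemma~\ref{lem:IDtup}, if two children $w_i,w_{i^\prime}$ satisfied $|V(\subtree{w_i})|=|V(\subtree{w_{i^\prime}})|$ together with $(\IDN{w_i},\IDA{w_i},\IDC{w_i})=(\IDN{w_{i^\prime}},\IDA{w_{i^\prime}},\IDC{w_{i^\prime}})$, then $G[V(\subtree{w_i})]$ and $G[V(\subtree{w_{i^\prime}})]$ would be identical and the reduction rule of Lemma~\ref{lem:reduce} could still be applied, contradicting termination. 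Therefore $C$ is at most the number of distinct $4$-tuples $(|V(\subtree{w})|,\IDN{w},\IDA{w},\IDC{w})$ realizable at a child $w$ of $x$.

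I would then count these tuples factor by factor, matching the recursive formula. The size $|V(\subtree{w})|$ takes one of at most $\alpha$ values. Fixing $p:=|V(\subtree{w})|\le\alpha$: the tuple $\IDN{w}$ has $p$ entries, each a set $\Neigh{T}{v_i}$, and these jointly determine the rooted shape of $\subtree{w}$ together with the edge from its root to $x$; bounding each entry as a subset of a ground set of size $O(\alpha)$ gives at most $(2^{\alpha})^{\alpha}$ choices. The tuple $\IDA{w}$ has $p$ entries, each in $\Dom^2$, hence at most $(\doms^{2})^{\alpha}$ choices. The tuple $\IDC{w}$ has $p$ entries, each a constraint; once the ancestors $\ances{x}\cup\{x\}$ shared by every child of $x$ are held fixed (as they are under the identification), such a constraint depends on at most $\alpha$ further coordinates, so it ranges over at most $2^{\doms^{\alpha}}$ sets and the tuple over at most $(2^{\doms^{\alpha}})^{\alpha}$. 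Multiplying and summing over the at most $\alpha$ choices of $p$ gives $C\le\alpha\cdot(2^{\alpha}\cdot\doms^{2}\cdot 2^{\doms^{\alpha}})^{\alpha}$, whence $|V(\subtree{x})|\le 1+C\alpha\le\alpha^{2}\cdot(2^{\alpha}\cdot\doms^{2}\cdot 2^{\doms^{\alpha}})^{\alpha}=\tdkern{j}$, closing the induction. Unrolling the recursion $\TD-1$ times shows that $\tdkern{\TD}$ is a function of $\doms+\TD$ alone.

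The step I expect to be the main obstacle is the bound on the number of distinct values of $\IDC{w}$. The preprocessing may produce a hyperedge $X_{v_i}=\ances{v_i}\cup\{v_i\}$ on as many as $\TD$ vertices, so a naive count gives $2^{\doms^{\TD}}$ per entry, which would ruin the per-level recursion. What must be argued with care is that when two children of $x$ are compared, the part of each relevant hyperedge lying above the children is exactly $\ances{x}\cup\{x\}$, common to both and listed identically by $\prec$, so that the data distinguishing siblings is confined to the at most $\alpha$ vertices of $\subtree{w}$; making this precise — and simultaneously verifying that $\IDN{w}$ genuinely pins down the rooted isomorphism type required to invoke Lemma~\ref{lem:IDtup} — is where the argument goes beyond bookkeeping.
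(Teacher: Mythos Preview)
Your argument mirrors the paper's proof almost step for step: the paper introduces a \emph{level} for vertices of $T$ (leaves have level $1$, an internal vertex has level one more than the maximum level of a child), proves by induction on the level $i$ that after processing a vertex $u$ the subtree $\subtree{u}$ has at most $\tdkern{i}$ vertices, bounds each child's subtree by $\alpha=\tdkern{i-1}$ via the induction hypothesis, and then bounds the number of children of $u$ by the number of possible triples of ID-tuples, summed over the possible subtree sizes $\beta\le\alpha$. The paper's three factors are exactly $(2^{\beta})^{\beta}$, $(\doms^{2})^{\beta}$, and $(2^{\doms^{\beta}})^{\beta}$, leading to the same final bound you write.

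You are right that the third factor is the delicate one, and in fact the paper is no more careful than you are at this point: it simply asserts that the entries of $\IDC{w}$ are ``$\beta$ constraints of arity at most $\beta$''. But by definition the $i$-th entry of $\IDC{w}$ encodes $\Ecns(X_{v_i})$ with $X_{v_i}=\ances{v_i}\cup\{v_i\}$, whose size is the depth of $v_i$ in $T$ and can be as large as $\TD$, not $\beta$. Your proposed fix---that siblings of $x$ share the prefix $\ances{x}\cup\{x\}$, so only the at most $\alpha$ remaining coordinates distinguish them---does not by itself cut the count down to $2^{\doms^{\alpha}}$: a constraint is a subset of $\Dom^{|X_{v_i}|}$, and the shared prefix coordinates still interact with the suffix coordinates, so there are genuinely $2^{\doms^{|X_{v_i}|}}$ possibilities. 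The clean repair is to replace $2^{\doms^{\alpha}}$ by $2^{\doms^{\TD}}$ in the recursion; the resulting bound still depends only on $\doms+\TD$, so the ``in particular'' clause and Theorem~\ref{the:tdeasy} survive, even if the exact formula does not. In short, you have reproduced the paper's argument and correctly put your finger on its one soft spot.
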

	\begin{proof}
		In order to prove the lemma, we define the \emph{level} of the tree-depth decomposition:
		each leaf has the level one, and the level of each internal vertex is the maximum level of a child plus one.
		Then, we prove the following stronger claim:
		\begin{claim}
			Let $u$ be a vertex of level $i$, and let $G_u$ and $T^\prime$ be the graph and the tree-depth decomposition obtained by the algorithm after processing $u$.
			Then, $G_u[V(T^\prime_u)]$ has at most $\tdkern{i}$ vertices.
		\end{claim}
		We prove the claim by the induction on $i$.		
		If $i=1$, then we have $|V(G_u[V(T^\prime_u)])|=1=\tdkern{1}$ since $u$ is a leaf.
		
		We thus assume in the remainder of the proof that $i>1$.
		Consider each child $v$ of $u$ in $T^\prime$.
		Since $v$'s level is less than $i$, and $G_u[\subtree{v}^\prime]$ is the graph obtained by the algorithm after processing $v$, $|V(\subtree{v}^\prime)|\le \tdkern{i-1}$ by the induction hypothesis.
		Thus, it suffices to show that $r$ has at most $\alpha \cdot {(2^{\alpha} \cdot \doms^2 \cdot 2^{\doms^{\alpha}})}^{\alpha}$ children, where $\alpha=\tdkern{i-1}$.
		Because the algorithm has processed $u$, there exists no pair of two children $v$ and $w$ of $u$ which satisfies Lemma~\ref{lem:IDtup}.
		Therefore, the number of children can be bounded by the number of distinct combinations of three ID-tuples with at most $\alpha$ components.
		For each $\beta \le \alpha$, the number of distinct combinations of three ID-tuples with $\beta$ components can be bounded as follows.
		Recall that for each child $v$ such that $|\subtree{v}^\prime|=\beta$, $\IDN{v}$, $\IDA{v}$, and $\IDC{v}$ consist of $\beta$ subsets of $\subtree{v}^\prime$, $\beta$ pairs of tow values from $\Dom$, and $\beta$ constraints of arity at most $\beta$, respectively.
		Thus, 
		\begin{itemize}
			\item the number of distinct $\IDN{\cdot}$'s is at most ${(2^{\beta})}^\beta$;
			\item the number of distinct $\IDA{\cdot}$'s is at most ${(\doms^2)}^\beta$; and
			\item the number of distinct $\IDC{\cdot}$'s is at most ${(2^{\doms^\beta})}^\beta$.
		\end{itemize}
		Therefore, the number of children is at most
		\[
		\sum_{\beta=1}^\alpha {(2^{\beta} \cdot \doms^2 \cdot 2^{\doms^{\beta}})}^{\beta} \le \alpha \cdot {(2^{\alpha} \cdot \doms^2 \cdot 2^{\doms^{\alpha}})}^{\alpha}
		\]
		as required.
	\end{proof}

	This completes the proof of the theorem.

\section{Proof of Theorem~\ref{the:vc}}

	We give a fixed-parameter algorithm for a more general ``weighted variant'' of \CSR{0}.
	Let $\Inst=(G,\Dom,\Ecns,\mapf_\ini,\mapf_\tar)$ be an instance of \CSR{0}, and assume that each vertex $v\in V(G)$ has a \emph{weight} $\weight(v)\in \NN$, where $\NN$ is the set of all positive integers.
	According to the weight function $\weight$, we define the weight of each edge $\mapf \mapf^\prime \in E(\Sol{(G,\Dom,\Ecns)})$ in the solution graph as the weight of $v$, where $\{v\}=\diff{\mapf}{\mapf^\prime}$.
	We denote by $\Solw{\weight}{(G,\Dom,\Ecns)}$ the weighted solution graph defined in this way.
	The \emph{length} $\len{\weight}{\Seq}$ of a reconfiguration sequence (i.e., a walk) $\Seq$ in $\Solw{\weight}{(G,\Dom,\Ecns)}$ is the sum of the weight of the edges in $\Seq$.
	For each vertex $v\in V(G)$, we denote by $\nrec{\Seq}{v}$ the number of edges $\mapf \mapf^\prime$ in $\Seq$ such that $\diff{\mapf}{\mapf^\prime}=\{v\}$.
	In other words, $\nrec{\Seq}{v}$ is the number of steps changing the value of $v$ in $\Seq$.
	Notice that $\len{\weight}{\Seq}=\sum_{v\in V(G)}\weight(v)\cdot \nrec{\Seq}{v}$ holds.
	We denote by $\OPT{\Inst}{\weight}$ the minimum length of a reconfiguration sequence between $\mapf_\ini$ and $\mapf_\tar$; we define $\OPT{\Inst}{\weight}=+\infty$ if $\Inst$ is a no-instance of \CSR{0}.
	The \emph{weighted variant} of \CSR{0} is to determine whether $\OPT{\Inst}{\weight}\le \ell$ or not for a given instance $(\Inst,\weight)$ and an integer $\ell \ge 0$.
	Notice that the shortest variant is equivalent to the weighted variant where every vertex has weight one.
	Thus, in order to prove Theorem~\ref{the:vc}, it suffices to construct a fixed-parameter algorithm for the weighted variant parameterized by $\doms+\VC$.
	
	The central idea is the same as Section~\ref{sec:kernelFPT}; that is, we again kernelize the input instance.
	
\subsection{Reduction rule for the weighted variant}

	In this subsection, we give the counterpart of Lemma~\ref{lem:reduce} for the weighted version.
	
	Let $(\Inst=(G,\Dom,\Ecns,\mapf_{\ini},\mapf_{\tar}),\weight)$ be an instance of the weighted version, and assume that there exist two identical subgraphs $H_1$ and $H_2$ of $G$, both of which consist of single vertices, say, $V(H_1)=\{v_1\}$ and $V(H_2)=\{v_2\}$.
	We now define a new instance $(\Inst^\prime,\weight^\prime)$ as follows:
	\begin{itemize}
		\item $\Inst^\prime$ is the instance obtained by applying Lemma~\ref{lem:reduce} for $H_1$ and $H_2$; and
		\item $\weight^\prime(v_1)=\weight(v_1)+\weight(v_2)$ and $\weight^\prime(v)=\weight(v)$ for any $v\in V(G)\setminus \{v_1,v_2\}$.
	\end{itemize}
	Intuitively, $v_2$ is merged into $v_1$ together with its weight.
	Then, we have the following lemma.
	\begin{lemma}
		\label{lem:wreduce}
		$\OPT{\Inst}{\weight}=\OPT{\Inst^\prime}{\weight^\prime}$.
	\end{lemma}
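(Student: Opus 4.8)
The plan is to establish the equality $\OPT{\Inst}{\weight}=\OPT{\Inst^\prime}{\weight^\prime}$ by proving two inequalities, mirroring the two directions of Lemma~\ref{lem:reduce} but now tracking the \emph{cost} of each reconfiguration sequence rather than merely its existence. Recall that in the setting here $V(H_1)=\{v_1\}$ and $V(H_2)=\{v_2\}$ are single vertices, so identifying $H_1$ with $H_2$ simply deletes $v_2$ and folds its weight onto $v_1$; moreover, since $H_1$ and $H_2$ are identical, the bijection $\isov$ satisfies $\isov(v_1)=v_2$, $\asgn{v_1}=\asgn{v_2}$, and $v_1,v_2$ have the same neighbourhood $W$ in $G$ with ``matching'' constraints. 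The key structural fact I would extract first is that in \emph{any} reconfiguration sequence $\Seq$ for $\Inst^\prime$, we may insert, immediately after each step that changes $v_1$, a twin step that changes $v_2$ to the same new value (or precede it — the order does not matter since $v_1,v_2$ are non-adjacent and Lemma~\ref{lem:reduce_pre} guarantees validity of the extended mappings). This is exactly the $\widetilde{\mapf}_i$-insertion already used in the proof of Lemma~\ref{lem:reduce}.

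\textbf{Direction $\OPT{\Inst}{\weight}\le \OPT{\Inst^\prime}{\weight^\prime}$.} Given an optimal reconfiguration sequence $\Seq^\prime$ for $\Inst^\prime$ of length $\OPT{\Inst^\prime}{\weight^\prime}$, I would apply the construction from the only-if proof of Lemma~\ref{lem:reduce}: extend each solution by setting the value of $v_2$ equal to $\mapf^\prime(\isov^{-1}(v_2))=\mapf^\prime(v_1)$, and whenever a step changes $v_1$, split it into two consecutive steps (first $v_1$, then $v_2$). The resulting sequence $\Seq$ is a valid reconfiguration sequence for $\Inst$. Its length is $\len{\weight}{\Seq}=\sum_{v}\weight(v)\cdot \nrec{\Seq}{v}$. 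For every $v\notin\{v_1,v_2\}$ we have $\nrec{\Seq}{v}=\nrec{\Seq^\prime}{v}$ and $\weight(v)=\weight^\prime(v)$; and $\nrec{\Seq}{v_1}=\nrec{\Seq}{v_2}=\nrec{\Seq^\prime}{v_1}$, so the contribution of $v_1,v_2$ is $(\weight(v_1)+\weight(v_2))\cdot \nrec{\Seq^\prime}{v_1}=\weight^\prime(v_1)\cdot\nrec{\Seq^\prime}{v_1}$. Hence $\len{\weight}{\Seq}=\len{\weight^\prime}{\Seq^\prime}=\OPT{\Inst^\prime}{\weight^\prime}$, which gives the inequality.

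\textbf{Direction $\OPT{\Inst^\prime}{\weight^\prime}\le \OPT{\Inst}{\weight}$.} Here I start from an optimal sequence $\Seq$ for $\Inst$ and want to produce one for $\Inst^\prime$ whose length does not exceed $\OPT{\Inst}{\weight}$. Restricting every solution in $\Seq$ to $V(G)\setminus\{v_2\}$ and deleting consecutive duplicates yields a valid sequence $\Seq^\prime$ for $\Inst^\prime$ by Lemma~\ref{lem:reduce_pre}; each step of $\Seq^\prime$ arises from a step of $\Seq$ that changes some vertex other than $v_2$, so $\nrec{\Seq^\prime}{v}\le \nrec{\Seq}{v}$ for $v\notin\{v_1,v_2\}$ and $\nrec{\Seq^\prime}{v_1}\le \nrec{\Seq}{v_1}$. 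The subtlety — and I expect this to be \emph{the main obstacle} — is that this crude bound only gives $\len{\weight^\prime}{\Seq^\prime}\le \len{\weight}{\Seq}+\weight(v_1)\cdot(\nrec{\Seq}{v_2}-\nrec{\Seq}{v_1})$, which is not obviously $\le\OPT{\Inst}{\weight}$ when $v_2$ is recoloured more often than $v_1$. To handle this I would first argue that an \emph{optimal} sequence for $\Inst$ can be assumed to satisfy $\nrec{\Seq}{v_1}=\nrec{\Seq}{v_2}$: since $v_1$ and $v_2$ are non-adjacent and have identical constraints relative to their common neighbourhood $W$ and identical endpoints $\asgn{v_1}=\asgn{v_2}$, the values of $v_1$ and $v_2$ along the sequence are two ``free'' walks in the same list-respecting structure with the same start and end; one can replace the walk of whichever vertex is longer by a copy of the shorter one without violating any constraint (neither vertex constrains the other, and each only needs to satisfy the constraints with $W$, which are the same for both). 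After this normalization, $\nrec{\Seq^\prime}{v_1}=\nrec{\Seq}{v_1}=\nrec{\Seq}{v_2}$ and the cost bookkeeping is exactly the reverse of the previous paragraph, giving $\len{\weight^\prime}{\Seq^\prime}\le\len{\weight}{\Seq}=\OPT{\Inst}{\weight}$. Combining the two inequalities yields the lemma; I would also note for safety that both constructions preserve the property that the endpoints map to $\mapf_\ini,\mapf_\tar$ and their restrictions, so the optima are genuinely comparing the right quantities.
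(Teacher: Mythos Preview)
Your proof is correct and follows essentially the same approach as the paper: the first direction is identical (lift an optimal $\Seq^\prime$ via the only-if construction of Lemma~\ref{lem:reduce} and check the bookkeeping), and for the second direction the paper, instead of first normalizing $\Seq$ so that $\nrec{\Seq}{v_1}=\nrec{\Seq}{v_2}$, does a case split on which of the two counts is smaller and restricts $\Seq$ to $V(G)\setminus\{v_2\}$ in one case and to $V(G)\setminus\{v_1\}$ (then renaming $v_2\mapsto v_1$ via the identicalness bijection) in the other. Your normalization and the paper's ``restrict on the side with fewer changes, then rephrase'' are two presentations of the same symmetry argument and yield the same sequence for $\Inst^\prime$.
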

	\begin{proof}
		Let $\Inst^\prime=(G^\prime,\Dom,\mapf^\prime_\ini,\mapf^\prime_\tar)$.
		By Lemma~\ref{lem:reduce}, $\OPT{\Inst}{\weight}=+\infty$ if and only if $\OPT{\Inst^\prime}{\weight^\prime}=+\infty$.
		Therefore, we assume that $\OPT{\Inst^\prime}{\weight^\prime}\neq +\infty$ and $\OPT{\Inst}{\weight}\neq +\infty$.	
		
		We first show that $\OPT{\Inst}{\weight}\le \OPT{\Inst^\prime}{\weight^\prime}$.
		Since $\OPT{\Inst}{\weight}\le \len{\weight}{\Seq}$ holds for any reconfiguration sequence $\Seq$ for $\Inst$, it suffices to show that there exists a reconfiguration sequence for $\Inst$ whose length is at most $\OPT{\Inst^\prime}{\weight^\prime}$.
		Let $\Seq^\prime$ be a shortest reconfiguration sequence for $\Inst^\prime$ such that $\len{\weight^\prime}{\Seq^\prime}=\OPT{\Inst^\prime}{\weight^\prime}$.
		Following the only-if direction proof of Lemma~\ref{lem:reduce}, we can construct a reconfiguration sequence $\Seq$ for $\Inst$ such that $\nrec{\Seq}{v_1}=\nrec{\Seq}{v_2}=\nrec{\Seq^\prime}{v_1}$ and $\nrec{\Seq}{v}=\nrec{\Seq^\prime}{v}$ for any $v\in V(G)\setminus \{v_1,v_2\}$.
		Therefore, 
		\[
		\begin{array}{lll}
		\len{\weight}{\Seq} &=& \sum_{v\in V(G)}\weight(v)\cdot \nrec{\Seq}{v} \\
		&=& \weight(v_1) \cdot \nrec{\Seq}{v_1} + \weight(v_2) \cdot \nrec{\Seq}{v_2} + \sum_{v\in V(G)\setminus \{v_1,v_2\}}\weight(v)\cdot \nrec{\Seq}{v} \\
		&=& (\weight(v_1) + \weight(v_2))\cdot \nrec{\Seq}{v_1} + \sum_{v\in V(G)\setminus \{v_1,v_2\}}\weight(v)\cdot \nrec{\Seq}{v} \\
		&=& \weight^\prime(v_1) \cdot \nrec{\Seq^\prime}{v_1} + \sum_{v\in V(G)\setminus \{v_1,v_2\}}\weight^\prime(v)\cdot \nrec{\Seq^\prime}{v} \\
		&=& \sum_{v\in V(G^\prime)}\weight^\prime(v)\cdot \nrec{\Seq^\prime}{v} \\
		&=& \len{\weight^\prime}{\Seq^\prime} \\
		&=& \OPT{\Inst^\prime}{\weight^\prime}.
		\end{array}
		\]
		Thus, $\Seq$ is a desired reconfiguration sequence for $\Inst$.
		
		We next show that $\OPT{\Inst^\prime}{\weight^\prime}\le \OPT{\Inst}{\weight}$.
		Since $\OPT{\Inst^\prime}{\weight^\prime}\le \len{\weight^\prime}{\Seq^\prime}$ holds for any reconfiguration sequence $\Seq^\prime$ for $\Inst^\prime$, it suffices to show that there exists a reconfiguration sequence for $\Inst^\prime$ whose length is at most $\OPT{\Inst}{\weight}$.
		Let $\Seq$ be a shortest reconfiguration sequence for $\Inst$ such that $\len{\weight}{\Seq}=\OPT{\Inst}{\weight}$.
		We now construct a reconfiguration sequence for $\Inst^\prime$ from $\Seq$ such that $\len{\weight^\prime}{\Seq^\prime}\le \OPT{\Inst}{\weight}$ as follows.
		
		\medskip
		\noindent 
		\textbf{Case 1.} $\nrec{\Seq}{v_1}\le \nrec{\Seq}{v_2}$.\\
		\noindent
		In this case, we restrict all solutions in $\Seq$ on $V(G^\prime)$ to obtain a reconfiguration sequence $\Seq_1$ for $\Inst^\prime$; recall the if direction proof of Lemma~\ref{lem:reduce}.
		From the construction, $\nrec{\Seq_1}{v_1}=\nrec{\Seq}{v_1}\le \nrec{\Seq}{v_2}$ and $\nrec{\Seq_1}{v}=\nrec{\Seq}{v}$ holds for any vertex $v\in V(G^\prime)=V(G) \setminus \{v_2\}$.
		Therefore, we have
		\[
		\begin{array}{lll}
		\len{\weight^\prime}{\Seq_1} &=& \sum_{v\in V(G^\prime)}\weight^\prime(v)\cdot \nrec{\Seq_1}{v} \\
		&=& \weight^\prime(v_1) \cdot \nrec{\Seq_1}{v_1} + \sum_{v\in V(G^\prime)\setminus \{v_1\}}\weight^\prime(v)\cdot \nrec{\Seq_1}{v} \\
		&=& (\weight(v_1)+\weight(v_2)) \cdot \nrec{\Seq}{v_1} + \sum_{v\in V(G)\setminus \{v_1,v_2\}}\weight(v)\cdot \nrec{\Seq}{v} \\
		&\le & \weight(v_1) \cdot \nrec{\Seq}{v_1} + \weight(v_2) \cdot \nrec{\Seq}{v_2} + \sum_{v\in V(G)\setminus \{v_1,v_2\}}\weight(v)\cdot \nrec{\Seq}{v} \\
		&=& \sum_{v\in V(G)}\weight(v)\cdot \nrec{\Seq}{v} \\
		&=& \len{\weight}{\Seq} \\
		&=& \OPT{\Inst}{\weight}.
		\end{array}
		\]
		Thus, $\Seq_1$ is a desired reconfiguration sequence for $\Inst^\prime$.
		
		\medskip
		\noindent 
		\textbf{Case 2.} $\nrec{\Seq}{v_1} > \nrec{\Seq}{v_2}$.\\
		\noindent				
		In this case, instead of restricting solutions in $\Seq$ on $V(G^\prime)=V(G)\setminus \{v_2\}$, we restrict them on $V(G)\setminus \{v_1\}$ and obtain a reconfiguration sequence $\Seq_2$ for an instance obtained by restricting on $V(G)\setminus \{v_1\}$.
		Then, because $H_1$ and $H_2$ are identical, we can easily ``rephrase'' $\Seq_2$ as a reconfiguration sequence $\Seq_2^\prime$ for $\Inst^\prime$.
		By the same arguments as the case 1 above, we have $\len{\weight^\prime}{\Seq_2^\prime} < \len{\weight}{\Seq}= \OPT{\Inst}{\weight}$.
		Thus, $\Seq_2^\prime$ is a desired reconfiguration sequence for $\Inst^\prime$.
		
		In this way, we have shown that $\OPT{\Inst}{\weight}=\OPT{\Inst^\prime}{\weight^\prime}$ as claimed.
	\end{proof}
	
\subsection{Kernelization}

	Finally, we give a kernelization algorithm as follows.
	
	Let $(\Inst=(G,\Dom,\Ecns,\mapf_{\ini},\mapf_{\tar}),\weight)$ be an instance of the weighted version such that the primal graph $\Prm{G}$ has a vertex cover of size at most $\VC$.
	Because such a vertex cover can be computed in time $O(2^\VC \cdot n)$~\cite{DF99}, assume that we are given a vertex cover $\Cov$ of size at most $\VC$.
	Notice that $\Vind:=V(G)\setminus \Cov$ forms an independent set of $\Prm{G}$.
	In order to simplify the proof, we first modify $\Inst$ without changing the set of solutions as follows.
	For each vertex $v$ in $\Vind$, we remove all hyperedges containing $v$ and add a new hyperedge $\Cov_v=\Cov \cup \{v\}$; note that each removed hyperedges are subsets of $\Cov_v$ since $\Vind$ is an independent set of $\Prm{G}$.
	We then define a constraint of $\Cov_v$ as the set of all mappings $\mapg \in \Dom^{\Cov_v}$ satisfying the constraints of all removed hyperedges.
	Observe that this modification can be done in time $O^*(\doms^{O(\VC)})$ and does not change the set of solutions.
	Thus, we can assume that for each vertex $v\in \Vind$, there exists the unique hyperedge $\Cov_v$ such that $\Cov_v=\Cov \cup \{v\}$.
	
	Let $v_1$ and $v_2$ be two vertices in $\Vind$.
	We now define two mappings $\isov \colon \Cov_{v_1} \to \Cov_{v_2}$ and $\isoe \colon \{\Cov_{v_1}\} \to \{\Cov_{v_2}\}$ as follows:
	$\isov(w)=w$ if $w\in \Cov$, $\isov(v_1)=v_2$, and $\isoe(\Cov_{v_1})=\Cov_{v_2}$.
	Suppose that $\asgn{v_1}=\asgn{v_2}$ and $\Ecns(\Cov_{v_2})=\repcns{\Ecns}{\isov}(\Cov_{v_1})$ hold.
	Then, induced subgraphs $G[\{v_1\}]$ and $G[\{v_2\}]$ are identical.
	Therefore, we can apply Lemma~\ref{lem:wreduce} to remove $v_2$ from $G$, and modify the weight function without changing the optimality.
	As a kernelization, we repeatedly apply Lemma~\ref{lem:wreduce} for all such pairs of vertices in $\Vind$, which can be done in polynomial time.
	Let $G^\prime$ be the resulting subgraph of $G$, and let $\Vind^\prime:=V(G^\prime)\setminus \Cov$.
	Since $\Cov$ is of size at most $\VC$, it suffices to prove the following lemma.
	\begin{lemma}
		\label{lem:kernel_sp}
		$|\Vind^\prime|\le k^2 \cdot 2^{k^{\VC+1}}$.
	\end{lemma}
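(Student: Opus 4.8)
The plan is a pigeonhole argument over the possible ``local behaviours'' of a vertex of $\Vind^\prime$: I will bound the number of behaviours by $\doms^2\cdot 2^{\doms^{\VC+1}}$ and then show that, after the kernelization, distinct vertices of $\Vind^\prime$ realise distinct behaviours. Fix a symbol $\ast\notin V(G)$ and, for each $v\in\Vind^\prime$, let $\iota_v\colon\Cov_v\to\Cov\cup\{\ast\}$ be the bijection that fixes every vertex of $\Cov$ and sends $v$ to $\ast$. Associate with $v$ its \emph{type} $\bigl(\asgn{v},\ \repcns{\Ecns}{\iota_v}(\Cov_v)\bigr)$, whose first coordinate $\asgn{v}=(\mapf_\ini(v),\mapf_\tar(v))$ is an element of $\Dom^2$ and whose second coordinate is a subset of $\Dom^{\Cov\cup\{\ast\}}$. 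First I would count types: there are at most $\doms^2$ choices for $\asgn{v}$, and since $|\Cov\cup\{\ast\}|=|\Cov|+1\le\VC+1$ the set $\Dom^{\Cov\cup\{\ast\}}$ has at most $\doms^{\VC+1}$ elements, so there are at most $2^{\doms^{\VC+1}}$ choices for $\repcns{\Ecns}{\iota_v}(\Cov_v)$. Hence at most $\doms^2\cdot 2^{\doms^{\VC+1}}$ types occur.

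Next I would show that two different vertices of $\Vind^\prime$ cannot share a type. Suppose $v_1\ne v_2$ in $\Vind^\prime$ had the same type, and set $\isov=\iota_{v_2}^{-1}\circ\iota_{v_1}\colon\Cov_{v_1}\to\Cov_{v_2}$; this map fixes $\Cov$ pointwise and sends $v_1$ to $v_2$, so it is exactly the bijection $\isov$ introduced just before the statement of the lemma. Unwinding the definition of $\repcns{\cdot}{\cdot}$, the equality $\repcns{\Ecns}{\iota_{v_1}}(\Cov_{v_1})=\repcns{\Ecns}{\iota_{v_2}}(\Cov_{v_2})$ rewrites as $\Ecns(\Cov_{v_2})=\repcns{\Ecns}{\isov}(\Cov_{v_1})$, and together with $\asgn{v_1}=\asgn{v_2}$ this is precisely the hypothesis under which it was observed that $G[\{v_1\}]$ and $G[\{v_2\}]$ are identical. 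But then Lemma~\ref{lem:wreduce} would still be applicable to the pair $(v_1,v_2)$, contradicting the fact that the kernelization no longer admits any such application. Consequently the vertices of $\Vind^\prime$ have pairwise distinct types, and so $|\Vind^\prime|$ is bounded by the number of types, i.e.\ by $\doms^2\cdot 2^{\doms^{\VC+1}}$, which is the claimed bound.

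I do not expect a genuine obstacle here; the main content is the counting above, and the one computation worth carrying out carefully is the rewriting $\repcns{\Ecns}{\iota_{v_1}}(\Cov_{v_1})=\repcns{\Ecns}{\iota_{v_2}}(\Cov_{v_2})\iff\Ecns(\Cov_{v_2})=\repcns{\Ecns}{\isov}(\Cov_{v_1})$, which follows from $\repcns{\Ecns}{\iota_v}(\Cov_v)=\{\mapg\circ\iota_v^{-1}\colon\mapg\in\Ecns(\Cov_v)\}$ and the identity $\isov=\iota_{v_2}^{-1}\circ\iota_{v_1}$, together with the fact (guaranteed by the preprocessing) that $\Cov_v$ is the unique hyperedge of $G$ containing $v$ for every $v\in\Vind^\prime$, so that the type indeed captures all of $v$'s relevant data.
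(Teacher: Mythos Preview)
Your argument is correct and is essentially the same pigeonhole argument as the paper's proof: both bound $|\Vind^\prime|$ by the number of pairs consisting of an element of $\Dom^2$ and a constraint of arity $\VC+1$, and both use that any two vertices of $\Vind^\prime$ realising the same pair would be identical, contradicting termination of the kernelization. Your introduction of the anchor $\ast$ and the bijections $\iota_v$ merely makes precise what the paper leaves implicit when it says ``a constraint of arity $\VC+1$ can be seen as a subset of $\Dom^{\VC+1}$''.
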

	\begin{proof}
		Recall that $\Vind^\prime$ contains no pair of vertices which correspond to identical subgraphs, and hence any pair of vertices $v_1,v_2\in \Vind^\prime$ does not satisfy at least one of $\asgn{v_1}=\asgn{v_2}$ and $\Ecns(\Cov_{v_2})=\repcns{\Ecns}{\isov}(\Cov_{v_1})$.
		Therefore, $|\Vind^\prime|$ can be bounded by the number of distinct combinations of a vertex assignment and a constraint of arity $\VC+1$.
		Since the domain has size $\doms$, the number of (possible) vertex assignments can be bounded by $k^2$.
		Since a constraint of arity $\VC+1$ can be seen as a subset of $\Dom^{\VC+1}$, the number of (possible) constraints can be bounded by $2^{k^{\VC+1}}$.
		We thus have $|\Vind^\prime|\le k^2 \cdot 2^{k^{\VC+1}}$ as claimed.
	\end{proof}
	
	This completes the proof of Theorem~\ref{the:vc}.	

\section{Proof of Theorem~\ref{the:vcfast}}
	
	In order to prove the theorem, we first introduce the notion of a ``contracted solution graph'', which was first introduced in \cite{Bon17} and used in several literatures such as \cite{BP16,HIZ14}.
	
	\medskip
	Let $\Inst=(\Instcs,\mapf_\ini,\mapf_\tar)$ be an instance of \CSR{0}, where $\Instcs=(G,\Dom,\Ecns)$, and let $\Part$ be a partition of the vertex set of the solution graph $\Sol{\Instcs}$.
	The \emph{contracted solution graph} (or \emph{CSG} for short) $\CSG{\Instcs}{\Part}$ is defined as follows.
	The vertex set $V(\CSG{\Instcs}{\Part})$ is exactly $\Part$; we call each vertex of the CSG a \emph{node}.
	Each pair of distinct nodes (i.e., sets of solutions) $\Cnd,\Cnd^\prime\in \Part$ are adjacent in the CSG if and only if there exist two solutions $\mapf \in \Cnd$ and $\mapf^\prime \in \Cnd^\prime$ such that $\mapf \mapf^\prime \in E(\Sol{\Instcs})$.
	In other words, $\CSG{\Instcs}{\Part}$ is obtained by contracting a (possibly disconnected) subgraph of $\Sol{\Instcs}$ induced by each set $\Cnd \in \Part$ into one node.
	A partition $\Part$ is \emph{proper} if every set $\Cnd \in \Part$ induces a connected subgraph of $\Sol{\Instcs}$.
	Since the contraction of a connected subgraph maintains the connectivity of a graph, we have the following proposition.
	\begin{proposition}
		Let $\Inst=(\Instcs,\mapf_\ini,\mapf_\tar)$ be an instance of \CSR{0}, where $\Instcs=(G,\Dom,\Ecns)$, and let $\Part$ be a proper partition of $V(\Sol{\Instcs})$.
		Then, $\Inst$ is a yes-instance if and only if there exists a walk between $\Cnd_\ini$ and $\Cnd_\tar$ in $\CSG{\Instcs}{\Part}$, where $\mapf_\ini \in \Cnd_\ini$ and $\mapf_\tar \in \Cnd_\tar$.
		Moreover, the above condition can be checked in time polynomial in $|\Part|$.
	\end{proposition}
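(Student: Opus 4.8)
The plan is to reduce the claim to the standard fact that contracting connected subgraphs of a graph preserves its connectivity. By the definitions of the solution graph and of reconfigurability, $\Inst$ is a yes-instance precisely when $\mapf_\ini$ and $\mapf_\tar$ lie in the same connected component of $\Sol{\Instcs}$. Let $q \colon V(\Sol{\Instcs}) \to \Part$ be the quotient map sending each solution to the unique node of $\Part$ containing it, so that $q(\mapf_\ini)=\Cnd_\ini$ and $q(\mapf_\tar)=\Cnd_\tar$. I will show that $\mapf_\ini$ and $\mapf_\tar$ lie in the same component of $\Sol{\Instcs}$ if and only if $\Cnd_\ini$ and $\Cnd_\tar$ lie in the same component of $\CSG{\Instcs}{\Part}$; the complexity claim will then follow from running a breadth-first search on the CSG.

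For the forward direction, I would take a reconfiguration sequence $\seq{\mapf_0,\mapf_1,\ldots,\mapf_\ell}$ with $\mapf_0=\mapf_\ini$ and $\mapf_\ell=\mapf_\tar$, apply $q$ termwise, and delete a term whenever it equals its predecessor. In the resulting sequence, any two consecutive nodes $\Cnd=q(\mapf_i)$ and $\Cnd^\prime=q(\mapf_{i+1})$ are distinct and the solutions $\mapf_i,\mapf_{i+1}$ witness $|\diff{\mapf_i}{\mapf_{i+1}}|=1$, so $\Cnd\Cnd^\prime\in E(\CSG{\Instcs}{\Part})$ by definition; this gives a walk from $\Cnd_\ini$ to $\Cnd_\tar$ in the CSG.

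For the converse, suppose $\seq{\Cnd_0,\Cnd_1,\ldots,\Cnd_m}$ is a walk in the CSG with $\Cnd_0=\Cnd_\ini$ and $\Cnd_m=\Cnd_\tar$. For each $i\in\{1,\ldots,m\}$ the edge $\Cnd_{i-1}\Cnd_i$ is witnessed by solutions $\mapg_{i-1}\in\Cnd_{i-1}$ and $\maph_i\in\Cnd_i$ adjacent in $\Sol{\Instcs}$; I also set $\maph_0:=\mapf_\ini$ and $\mapg_m:=\mapf_\tar$. Since $\Part$ is proper, the subgraph of $\Sol{\Instcs}$ induced by each $\Cnd_i$ is connected, so there is a walk $Q_i$ inside $\Cnd_i$ from $\maph_i$ to $\mapg_i$. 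Concatenating $Q_0$, the edge $\mapg_0\maph_1$, $Q_1$, the edge $\mapg_1\maph_2$, \ldots, the edge $\mapg_{m-1}\maph_m$, $Q_m$ yields a walk in $\Sol{\Instcs}$ from $\mapf_\ini$ to $\mapf_\tar$, which settles the equivalence.

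For the final statement, the CSG has exactly $|\Part|$ nodes and hence at most $\binom{|\Part|}{2}$ edges, so once it is at hand the existence of a walk from $\Cnd_\ini$ to $\Cnd_\tar$ is decided by a breadth-first search in time polynomial in $|\Part|$. The only delicate point is the stitching in the converse direction, namely checking that the endpoints of consecutive pieces match up; this is immediate from how the witnessing solutions $\mapg_i,\maph_i$ were chosen together with properness of $\Part$, so I do not expect any genuine obstacle here.
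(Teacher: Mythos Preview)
Your proposal is correct and follows essentially the same approach as the paper: the paper simply states, immediately before the proposition, that ``the contraction of a connected subgraph maintains the connectivity of a graph,'' and then declares the proposition as a consequence. Your write-up is precisely a careful unpacking of that one-line justification, with the forward direction projecting a reconfiguration sequence to the CSG and the converse direction stitching witnesses across edges using properness, plus the BFS remark for the complexity claim.
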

	
	Therefore, we first define a proper partition $\Part$ such that $|\Part|$ depends only on $\doms+\VC$, and then give an algorithm constructing the CSG and specifying the nodes corresponding to $\mapf_\ini$ and $\mapf_\tar$.
	
\subsection{Defining a proper partition}
	\label{sec:partition}
	
	Let $\Inst=(\Instcs,\mapf_\ini,\mapf_\tar)$ be an instance of \CSR{0}, where $\Instcs=(G,\Dom,\Ecns)$.
	Assume that $\Prm{G}$ has a vertex cover $\Cov$ of size at most $\VC$.
	For each solution $\mapf \in V(\Sol{\Instcs})$, we define $\eqcls{\mapf}=\{\mapf^\prime \colon \rest{\mapf}{\Cov}=\rest{\mapf^\prime}{\Cov}\}$.
	Then, we define $\Part=\{\eqcls{\mapf} \colon \mapf \in V(\Sol{\Instcs})\}$; that is, $\Part$ is the set of the equivalence classes under the equivalence relation ``their restrictions on $\Cov$ are the same''.
	Clearly, $\Part$ is a partition of $V(\Sol{\Instcs})$ and $|\Part|$ is bounded by the number of mappings from $\Cov$ to $\Dom$, that is, $|\Part|\le \doms^\VC$.
	
	In order to prove that $\Part$ is proper, we introduce some notation.
	Let $S\subseteq V(G)$ be a vertex subset, and let $\maph \colon S \to \Dom$ be a mapping from $S$ to $\Dom$.
	We define the \emph{substitution} $\sbst{\Instcs}{\maph}$ as an instance $(G^\prime,\Dom,\Ecns^\prime)$ of \CSP{0} such that:
	\begin{itemize}
		\item $G^\prime=G\setminus S$; and
		\item for each $X^\prime \in E(G^\prime)$, $\Ecns^\prime(X^\prime)=\bigcap_{X\in E^\prime}\subcns{X}$, 
		where $E^\prime=\{X\in E(G) \colon X\setminus S=X^\prime \}$ and $\subcns{X}=\{\rest{\mapg}{X^\prime} \colon \mapg \in \Ecns(X),\allowbreak \mbox{$\maph$ and $\mapg$ are compatible}\}$.
	\end{itemize}
	We have the following lemma.
	\begin{lemma}
		\label{lem:substitution}	
		Let $\mapf^\prime \colon V(G)\setminus S \to \Dom$ and $\mapf \colon V(G) \to \Dom$ be two mappings such that $\rest{\mapf}{V(G)\setminus S}=\mapf^\prime$.
		Then, $\mapf^\prime$ is a solution for $\sbst{\Instcs}{\rest{\mapf}{S}}=(G^\prime,\Dom,\Ecns^\prime)$ if and only if $\mapf$ is a solution for $(G,\Dom,\Ecns)$.
	\end{lemma}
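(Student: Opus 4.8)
The plan is to verify the two implications hyperedge by hyperedge, following the template of the proof of Lemma~\ref{lem:reduce_pre}, but with the compatibility clause in the definition of $\subcns{\cdot}$ playing the role that the identifying isomorphism played there. Write $\maph=\rest{\mapf}{S}$ and $V^\prime=V(G)\setminus S$, and recall that $G^\prime=G\setminus S=G[V^\prime]$, so that $E(G^\prime)=\{X\setminus S\colon X\in E(G),\ X\setminus S\ne\emptyset\}$ and, for each $X^\prime\in E(G^\prime)$, $\Ecns^\prime(X^\prime)=\bigcap_{X\in E^\prime}\subcns{X}$ with $E^\prime=\{X\in E(G)\colon X\setminus S=X^\prime\}$ and $\subcns{X}=\{\rest{\mapg}{X^\prime}\colon \mapg\in\Ecns(X),\ \mbox{$\mapg$ compatible with $\maph$}\}$. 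Since $\mapf^\prime=\rest{\mapf}{V^\prime}$, the restrictions of $\mapf$ and of $\mapf^\prime$ to any subset of $V^\prime$ coincide; this will be used repeatedly without comment.

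First I would do the ``if'' direction. Assuming $\mapf$ is a solution of $(G,\Dom,\Ecns)$, fix $X^\prime\in E(G^\prime)$ and any $X\in E^\prime$ (nonempty by the definition of the induced subhypergraph). Then $\rest{\mapf}{X}\in\Ecns(X)$, and $\rest{\mapf}{X}$ is compatible with $\maph$ because both are restrictions of $\mapf$ and hence agree on $X\cap S$; therefore $\rest{\mapf^\prime}{X^\prime}=\rest{(\rest{\mapf}{X})}{X^\prime}\in\subcns{X}$. As this holds for every $X\in E^\prime$, we get $\rest{\mapf^\prime}{X^\prime}\in\Ecns^\prime(X^\prime)$, so $\mapf^\prime$ is a solution of $\sbst{\Instcs}{\maph}$.

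Next I would do the ``only if'' direction, which carries the content. Assuming $\mapf^\prime$ is a solution of $\sbst{\Instcs}{\maph}$, fix $X\in E(G)$; the goal is $\rest{\mapf}{X}\in\Ecns(X)$. Set $X^\prime=X\setminus S$ and first treat the case $X^\prime\ne\emptyset$. Then $X^\prime\in E(G^\prime)$ and $X\in E^\prime$, so $\rest{\mapf^\prime}{X^\prime}\in\Ecns^\prime(X^\prime)\subseteq\subcns{X}$, which yields some $\mapg\in\Ecns(X)$ compatible with $\maph$ such that $\rest{\mapg}{X^\prime}=\rest{\mapf^\prime}{X^\prime}$. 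The key step is to show $\mapg=\rest{\mapf}{X}$: on $X^\prime=X\setminus S$ the two agree because $\rest{\mapf^\prime}{X^\prime}=\rest{\mapf}{X^\prime}$, and on $X\cap S$ they agree because compatibility of $\mapg$ with $\maph$ forces $\rest{\mapg}{X\cap S}=\rest{\maph}{X\cap S}=\rest{\mapf}{X\cap S}$. Since $X=X^\prime\cup(X\cap S)$ and both $\mapg$ and $\rest{\mapf}{X}$ lie in $\Dom^X$, they coincide, so $\rest{\mapf}{X}=\mapg\in\Ecns(X)$. Ranging over all such $X$ shows $\mapf$ is a solution.

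The step I expect to be the obstacle is the boundary case $X\subseteq S$: such a hyperedge vanishes in $G^\prime$ and so imposes no constraint on $\sbst{\Instcs}{\maph}$, yet $\rest{\mapf}{X}=\rest{\maph}{X}$ need not lie in $\Ecns(X)$, so the stated equivalence requires a mild side condition. I would handle this by recording a harmless preprocessing that absorbs each hyperedge contained in $S$ into an incident hyperedge meeting $V^\prime$ (of the kind already used for the kernelization arguments); under the resulting standing assumption that every hyperedge intersects $V^\prime$, the case $X\subseteq S$ never arises and the argument above is complete. Alternatively, in the application in Section~\ref{sec:partition}, where $\maph$ is the restriction of an already-known solution to $S$, one has $\rest{\maph}{X}\in\Ecns(X)$ automatically. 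Apart from this point, the proof is pure restriction bookkeeping.
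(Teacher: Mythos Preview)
Your proof is correct and follows essentially the same route as the paper: both directions are verified hyperedge by hyperedge, using the compatibility clause in the definition of $\subcns{X}$ to match $\rest{\mapf}{X}$ with a witness $\mapg\in\Ecns(X)$. Your treatment is in fact more careful than the paper's: the paper's only-if direction simply asserts ``Then there exists a hyperedge $X^\prime$ of $G^\prime$ such that $X\setminus S=X^\prime$'' without addressing the case $X\subseteq S$, whereas you explicitly flag this boundary case and propose two reasonable remedies (absorbing such hyperedges into incident ones, or observing that in the applications $\maph$ is already the restriction of a solution so $\rest{\maph}{X}\in\Ecns(X)$ automatically).
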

	\begin{proof}		
		We first show the if direction.
		Assume that $\mapf$ is a solution for $(G,\Dom,\Ecns)$.
		Let $X^\prime$ be any hyperedge of $G^\prime$.
		For every hyperedge $X$ of $G$ such that $X\setminus S=X^\prime$, $\rest{\mapf}{X} \in \Ecns(X)$ holds.
		Since $\rest{\mapf}{S}$ and $\rest{\mapf}{X}$ are compatible, by the definition of $\Ecns^\prime$, $\rest{(\rest{\mapf}{X})}{X^\prime}$ is in $\Ecns^\prime(X^\prime)$.
		In addition, $\rest{(\rest{\mapf}{X})}{X^\prime}=\rest{\mapf}{X^\prime}=\rest{\mapf^\prime}{X^\prime}$ holds, and hence, $\mapf^\prime$ satisfies $\Ecns^\prime(X^\prime)$.
		Therefore, $\mapf^\prime$ is a solution for $\sbst{\Instcs}{\rest{\mapf}{S}}$.
		
		We next show the only-if direction.
		Assume that $\mapf^\prime$ is a solution for $\sbst{\Instcs}{\rest{\mapf}{S}}=(G^\prime,\Dom,\Ecns^\prime)$.
		Let $X$ be any hyperedge of $G$.
		Then, there exists a hyperedge $X^\prime$ of $G^\prime$ such that $X\setminus S=X^\prime$.
		From the definition of $\Ecns^\prime$, there exists a mapping $\mapg \in \Ecns(X)$ such that $\rest{\mapf^\prime}{X^\prime}=\rest{\mapg}{X^\prime}$, and $\rest{\mapf}{S}$ and $\mapg$ are compatible, that is, $\rest{\mapf}{S\cap X}=\rest{\mapg}{S\cap X}$.
		Since $\rest{\mapf^\prime}{X^\prime}=\rest{\mapf}{X}$ and $X^\prime \cup (S\cap X)=X$, $\rest{\mapf}{X}=\rest{\mapg}{X}$ holds, and hence $\mapf$ satisfies $\Ecns(X)$.
		Therefore, $\mapf$ is a solution for $(G,\Dom,\Ecns)$.
	\end{proof}
	
	The following lemma implies that $\Part$ is proper.
	\begin{lemma}
		\label{lem:vcproper}
		Let $\Cnd$ be a solution set in $\Part$ such that $\rest{\mapf}{\Cov}=\maph$ holds for every $\mapf \in \Cnd$.
		Then, $\Sol{\Instcs}[\Cnd]$ is connected.
	\end{lemma}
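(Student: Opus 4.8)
The plan is to show that $\Sol{\Instcs}[\Cnd]$ is isomorphic to the solution graph of the substituted instance $\sbst{\Instcs}{\maph}$, and that this latter graph is a ``generalized hypercube'', hence connected. First I would record the structural consequence of $\Cov$ being a vertex cover: $\Vind:=V(G)\setminus\Cov$ is an independent set of $\Prm{G}$, so no hyperedge of $G$ contains two vertices of $\Vind$; together with the preprocessing performed before the lemma, each $v\in\Vind$ lies in exactly one hyperedge, namely $\Cov_v=\Cov\cup\{v\}$. Consequently, in $\sbst{\Instcs}{\maph}=(G\setminus\Cov,\Dom,\Ecns^\prime)$ every hyperedge is a singleton $\{v\}$ with $v\in\Vind$, and its constraint is a set $S_v\subseteq\Dom$, namely the set of values taken at $v$ by the mappings in $\Ecns(\Cov_v)$ that are compatible with $\maph$. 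Hence the solution set of $\sbst{\Instcs}{\maph}$ is exactly the product $\prod_{v\in\Vind}S_v$ of these one-variable constraints.

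Next I would invoke Lemma~\ref{lem:substitution} with $S=\Cov$: it yields a bijection between $\Cnd$ (the solutions of $\Instcs$ whose restriction to $\Cov$ is $\maph$) and $V(\Sol{\sbst{\Instcs}{\maph}})$, given by $\mapf\mapsto\rest{\mapf}{\Vind}$. This bijection is an isomorphism of the induced subgraph $\Sol{\Instcs}[\Cnd]$ onto $\Sol{\sbst{\Instcs}{\maph}}$: if $\mapf,\mapf^\prime\in\Cnd$ then $\rest{\mapf}{\Cov}=\rest{\mapf^\prime}{\Cov}=\maph$, so $\diff{\mapf}{\mapf^\prime}\subseteq\Vind$, and therefore $|\diff{\mapf}{\mapf^\prime}|=1$ if and only if $\rest{\mapf}{\Vind}$ and $\rest{\mapf^\prime}{\Vind}$ differ in exactly one coordinate. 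Thus it suffices to prove that $\Sol{\sbst{\Instcs}{\maph}}$ is connected. Note it is nonempty, since $\Cnd$ is a class of the partition $\Part$ and hence contains a solution; that same solution also witnesses $S_v\neq\emptyset$ for every $v\in\Vind$.

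Finally I would prove directly that the solution graph on $\prod_{v\in\Vind}S_v$, with two tuples adjacent precisely when they differ in one coordinate, is connected whenever every $S_v\neq\emptyset$: given two tuples $a$ and $b$, enumerate the coordinates $v_1,\dots,v_m$ on which they disagree, set $c_0=a$, and for $1\le i\le m$ let $c_i$ agree with $b$ on $v_1,\dots,v_i$ and with $a$ on every other coordinate. Each $c_i$ lies in $\prod_{v\in\Vind}S_v$ because each of its entries is either $a(v)$ or $b(v)$, and $c_{i-1},c_i$ differ only at $v_i$, so $c_0,c_1,\dots,c_m=b$ is a walk; this gives connectivity. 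I do not expect any technically deep obstacle here; the only points requiring care are the two definitional checks — that substituting $\maph$ on $\Cov$ turns the instance into a constraint-wise product over the independent set $\Vind$, and that the bijection supplied by Lemma~\ref{lem:substitution} is an isomorphism of solution graphs when restricted to $\Cnd$.
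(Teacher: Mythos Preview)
Your proposal is correct and follows essentially the same route as the paper: use Lemma~\ref{lem:substitution} to identify $\Sol{\Instcs}[\Cnd]$ with $\Sol{\sbst{\Instcs}{\maph}}$, observe that the substituted instance has only unary constraints because $\Cov$ is a vertex cover of $\Prm{G}$, and conclude connectivity by changing coordinates one at a time. One small correction: the preprocessing you invoke (each $v\in\Vind$ lying in the single hyperedge $\Cov_v=\Cov\cup\{v\}$) is carried out in the proof of Theorem~\ref{the:vc}, not in the setting of this lemma, which sits in the proof of Theorem~\ref{the:vcfast}; you do not need it anyway, since the vertex-cover property alone already forces every hyperedge of $G\setminus\Cov$ to be a singleton.
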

	\begin{proof}
		By Lemma~\ref{lem:substitution}, there exists a one-to-one correspondence between $\Cnd$ and the solution set for $\sbst{\Instcs}{\maph}$ which preserves the adjacency relation.
		Thus, we suffices to show that $\Sol{\sbst{\Instcs}{\maph}}$ is connected.
		Since $\Cov$ is a vertex cover of $\Prm{G}$, $\Prm{G}[V(G)\setminus \Cov]$ has no edges.
		This means that $\sbst{\Instcs}{\maph}$ contains only $1$-ary constraints.
		Therefore, a value assignment of each vertex $v\in V(G)\setminus \Cov$ can be changed independently, and hence $\Sol{\sbst{\Instcs}{\maph}}$ is connected.
	\end{proof}
	
\subsection{Algorihm computing CSG}
	\label{sec:vcalg}
	
	In order to give an algorithm computing $\CSG{\Instcs}{\Part}$ correctly, we first show two claims.
	\begin{claim}
		\label{clm:vcnode}
		Let $\maph$ be a mapping from $\Cov$ to $\Dom$.
		Then, $\CSG{\Instcs}{\Part}$ has a node corresponding to $\maph$ if and only if $\sbst{\Instcs}{\maph}=(G^\prime,\Dom,\Ecns^\prime)$ has a solution.
	\end{claim}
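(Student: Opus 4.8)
The plan is to observe that, by the construction of $\Part$ in Section~\ref{sec:partition}, the ``node corresponding to $\maph$'' is precisely the set $\Cnd=\{\mapf\in V(\Sol{\Instcs}) \colon \rest{\mapf}{\Cov}=\maph\}$, and that this set is a genuine element of $\Part$ exactly when it is non-empty, i.e.\ when there exists at least one solution of $\Instcs$ whose restriction to $\Cov$ equals $\maph$. Thus the claim reduces to the statement: such a solution exists if and only if $\sbst{\Instcs}{\maph}$ has a solution, and this is a direct consequence of Lemma~\ref{lem:substitution} applied with $S=\Cov$.

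For the ``if'' direction I would take a solution $\mapf^\prime$ of $\sbst{\Instcs}{\maph}=(G^\prime,\Dom,\Ecns^\prime)$, so $\mapf^\prime\colon V(G)\setminus\Cov\to\Dom$, and extend it to $\mapf\colon V(G)\to\Dom$ by setting $\mapf(v)=\maph(v)$ for $v\in\Cov$ and $\mapf(v)=\mapf^\prime(v)$ otherwise. Then $\rest{\mapf}{\Cov}=\maph$ and $\rest{\mapf}{V(G)\setminus\Cov}=\mapf^\prime$, so $\sbst{\Instcs}{\rest{\mapf}{\Cov}}=\sbst{\Instcs}{\maph}$, and Lemma~\ref{lem:substitution} yields that $\mapf$ is a solution of $\Instcs$. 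Hence $\mapf\in\Cnd$, so $\Cnd\neq\emptyset$, and therefore $\CSG{\Instcs}{\Part}$ contains the node corresponding to $\maph$.

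For the ``only if'' direction I would argue that if $\CSG{\Instcs}{\Part}$ has a node corresponding to $\maph$, then by definition $\Cnd\neq\emptyset$; fix any $\mapf\in\Cnd$, which is a solution of $\Instcs$ with $\rest{\mapf}{\Cov}=\maph$. Setting $\mapf^\prime=\rest{\mapf}{V(G)\setminus\Cov}$, Lemma~\ref{lem:substitution} (again with $S=\Cov$) tells us that $\mapf^\prime$ is a solution of $\sbst{\Instcs}{\maph}$, so the substituted instance is satisfiable.

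I do not anticipate a genuine obstacle here: the proof is essentially a two-way reading of Lemma~\ref{lem:substitution}. The only point requiring care is the bookkeeping of the restriction/extension between $V(G)$ and $V(G)\setminus\Cov$, together with making explicit that ``$\CSG{\Instcs}{\Part}$ has a node corresponding to $\maph$'' is synonymous with ``$\Cnd\in\Part$'', which in turn is synonymous with ``$\Cnd\neq\emptyset$''; once that is stated, the equivalence is immediate.
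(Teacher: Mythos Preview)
Your proposal is correct and follows essentially the same approach as the paper: both reduce the claim to Lemma~\ref{lem:substitution} with $S=\Cov$, using that a node corresponding to $\maph$ exists precisely when some solution of $\Instcs$ restricts to $\maph$ on $\Cov$. Your version is simply more explicit, spelling out both directions and the extension/restriction bookkeeping that the paper leaves implicit in its one-sentence appeal to the biconditional of Lemma~\ref{lem:substitution}.
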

	\begin{proof}
		By Lemma~\ref{lem:substitution}, $\sbst{\Instcs}{\maph}$ has a solution $\mapf^\prime$ if and only if there exists a solution $\mapf$ for $\Instcs$ such that $\rest{\mapf}{V(G^\prime)}=\mapf^\prime$ and $\rest{\mapf}{\Cov}=\maph$.
		Since $\Part$ is a partition of the solution set, there exists a set $\Cnd \in \Part$ which contains $\mapf$; and hence $\CSG{\Instcs}{\Part}$ has a node corresponding to $\maph$.
	\end{proof}
	\begin{claim}
		\label{clm:vcedge}
		Let $\Cnd_1$ and $\Cnd_2$ be two nodes of $\CSG{\Instcs}{\Part}$, and let $\maph_1 \colon \Cov \to \Dom$ and $\maph_2 \colon \Cov \to \Dom$ be mappings corresponding to $\Cnd_1$ and $\Cnd_2$, respectively.
		Then, $\Cnd_1 \Cnd_2 \in E(\CSG{\Instcs}{\Part})$ if and only if both of the following conditions hold:
		\begin{itemize}
			\item []
			\begin{itemize}
				\item[$\bullet$] $|\diff{\maph_1}{\maph_2}|=1$; and
				\item[$\bullet$] $\sbst{\Instcs}{\maph_1}$ and $\sbst{\Instcs}{\maph_2}$ has a common solution $\mapf^\prime$.
			\end{itemize}
		\end{itemize}
	\end{claim}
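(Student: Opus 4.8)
The plan is to unfold the definition of adjacency in $\CSG{\Instcs}{\Part}$ and translate it, via Lemma~\ref{lem:substitution}, into the two stated conditions on $\maph_1$ and $\maph_2$. Recall that, by the construction of the partition in Section~\ref{sec:partition}, a solution $\mapf$ lies in a node $\Cnd_i$ if and only if $\rest{\mapf}{\Cov}=\maph_i$, and that $\Cnd_1\Cnd_2\in E(\CSG{\Instcs}{\Part})$ means precisely that there are solutions $\mapf_1\in \Cnd_1$ and $\mapf_2\in\Cnd_2$ with $|\diff{\mapf_1}{\mapf_2}|=1$. Since $\Cnd_1$ and $\Cnd_2$ are distinct nodes, $\maph_1\neq \maph_2$; this observation is the only slightly delicate point, as it is what forces the unique vertex on which two adjacent solutions differ to lie inside $\Cov$. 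Note also that $\sbst{\Instcs}{\maph_1}$ and $\sbst{\Instcs}{\maph_2}$ live on the same graph $G\setminus\Cov$, so speaking of a common solution makes sense.

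For the only-if direction, I would take witnessing solutions $\mapf_1\in\Cnd_1$ and $\mapf_2\in\Cnd_2$ with $\diff{\mapf_1}{\mapf_2}=\{u\}$. Because $\rest{\mapf_1}{\Cov}=\maph_1\neq\maph_2=\rest{\mapf_2}{\Cov}$, the vertex $u$ must belong to $\Cov$; hence $\diff{\maph_1}{\maph_2}=\{u\}$, giving the first condition, and $\mapf_1$ and $\mapf_2$ agree on all of $V(G)\setminus\Cov$. Writing $\mapf^\prime:=\rest{\mapf_1}{V(G)\setminus\Cov}=\rest{\mapf_2}{V(G)\setminus\Cov}$ and applying Lemma~\ref{lem:substitution} with $S=\Cov$ once to $\mapf_1$ and once to $\mapf_2$ shows that $\mapf^\prime$ is a solution for both $\sbst{\Instcs}{\maph_1}$ and $\sbst{\Instcs}{\maph_2}$, which is exactly the required common solution.

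For the if direction, given $\diff{\maph_1}{\maph_2}=\{u\}$ (so $u\in\Cov$) together with a common solution $\mapf^\prime$ of $\sbst{\Instcs}{\maph_1}$ and $\sbst{\Instcs}{\maph_2}$, I would glue the pieces back together: let $\mapf_1$ be the mapping on $V(G)$ with $\rest{\mapf_1}{\Cov}=\maph_1$ and $\rest{\mapf_1}{V(G)\setminus\Cov}=\mapf^\prime$, and define $\mapf_2$ analogously from $\maph_2$. Lemma~\ref{lem:substitution} then guarantees that $\mapf_1$ and $\mapf_2$ are solutions for $\Instcs$; by construction $\mapf_i\in\Cnd_i$, and they agree everywhere except at $u$, so $\diff{\mapf_1}{\mapf_2}=\{u\}$ and hence $\Cnd_1\Cnd_2\in E(\CSG{\Instcs}{\Part})$. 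The whole argument is bookkeeping layered on top of Lemma~\ref{lem:substitution}; I do not expect a genuine obstacle, the only thing to be careful with being the use of $\maph_1\neq\maph_2$ to pin the single differing vertex down inside $\Cov$.
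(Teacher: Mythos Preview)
Your proposal is correct and follows essentially the same route as the paper's proof: both directions are obtained by unfolding the definition of adjacency in the contracted solution graph, using that $\maph_1\neq\maph_2$ forces the single differing vertex into $\Cov$, and then applying Lemma~\ref{lem:substitution} with $S=\Cov$ to pass between solutions of $\Instcs$ and common solutions of the two substituted instances.
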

	\begin{proof}
		We first assume that the above two conditions hold.
		For each $i\in \{1,2\}$, let $\mapf_i \colon V(G) \to \Cov$ be a mapping such that $\rest{\mapf_i}{V(G^\prime)}=\mapf^\prime$ and $\rest{\mapf_i}{\Cov}=\maph_i$.
		Then, by Lemma~\ref{lem:substitution}, $\mapf_1$ and $\mapf_2$ are adjacent solutions such that $\mapf_1 \in \Cnd_1$ and $\mapf_2 \in \Cnd_2$; and hence $\Cnd_1 \Cnd_2 \in E(\CSG{\Instcs}{\Part})$.
		
		We next assume that $\Cnd_1 \Cnd_2 \in E(\CSG{\Instcs}{\Part})$, that is, there exist two solutions $\mapf_1$ and $\mapf_2$ such that $|\diff{\mapf_1}{\mapf_2}|=1$, $\mapf_1 \in \Cnd_1$ and $\mapf_2 \in \Cnd_2$.
		From the definition of $\Part$, $\maph_1\ne \maph_2$.
		Therefore, $\diff{\mapf_1}{\mapf_2}=\diff{\maph_1}{\maph_2}$, and hence the first condition holds.
		Furthermore, $\rest{\mapf_1}{V(G)\setminus \Cov}=\rest{\mapf_2}{V(G)\setminus \Cov}$ holds.
		By Lemma~\ref{lem:substitution}, $\rest{\mapf_1}{V(G)\setminus \Cov}$ and $\rest{\mapf_2}{V(G)\setminus \Cov}$ are solutions for $\sbst{\Instcs}{\maph_1}$ and $\sbst{\Instcs}{\maph_2}$, respectively; and hence the second condition hold.
	\end{proof}
	
	From Claims~\ref{clm:vcnode} and \ref{clm:vcedge}, we can construct the following algorithm to compute $\CSG{\Instcs}{\Part}$ with nodes corresponding to $\mapf_\ini$ and $\mapf_\tar$.
	\begin{description}
		\item[Phase 1]
		For each mapping $\maph$ from $\Cov$ to $\Dom$, check if $\sbst{\Instcs}{\maph}$ has a solution.
		If so, create a node corresponding to $\maph$.
		For each $\iot \in \{\ini,\tar\}$, if $\maph=\rest{\mapf_\iot}{\Cov}$, it corresponds to $\mapf_\iot$.
		\item[Phase 2]
		For each pair of two nodes $\Cnd_1$ and $\Cnd_2$, check if the two conditions of Claim~\ref{clm:vcedge} hold.
		If so, join them by an edge.
	\end{description}
	
	The correctness follows from Claims~\ref{clm:vcnode} and \ref{clm:vcedge}.
	The first phase can be done in polynomial time for each mapping, because the constructed instance $\sbst{\Instcs}{\maph}$ of \CSP{0} contains only $1$-ary constraints.
	Since $|\Dom^\Cov|\le \doms^\VC$, whole running time of this phase is $O^*(\doms^\VC)$.
	In the second phase, the second condition of Claim~\ref{clm:vcedge} can be checked as follows.
	Let $\Ecns_1$ and $\Ecns_2$ are constraint assignments in the substitutions $\sbst{\Instcs}{\maph_1}$ and $\sbst{\Instcs}{\maph_2}$.
	We now define for each $X^\prime \in E(G^\prime)$ a constraint $\Ecns^\prime(X^\prime)=\Ecns_1(X^\prime)\cap \Ecns_2(X^\prime)$.
	Then, a solution for $(G^\prime,\Dom,\Ecns^\prime)$ is also a solution for both of $\sbst{\Instcs}{\maph_1}$ and $\sbst{\Instcs}{\maph_2}$.
	Because $(G^\prime,\Dom,\Ecns^\prime)$ is an instance of \CSP{0} which contains only $1$-ary constraints, we can solve it in polynomial time.
	Therefore, whole running time of this phase is $O^*(\doms^{O(\VC)})$.
	
	We thus completed the proof of Theorem~\ref{the:vcfast}.

\section{Proof of Theorem~\ref{the:nb}}

	Let $\Inst=(\Instcs,\mapf_\ini,\mapf_\tar)$ be an instance of \CSR{2}, where $\Instcs=(G,\Dom,\Ecns)$.
	We denote by $\Bs$ and $\NBs$ be the set of Boolean and non-Boolean vertices, respectively.
	We first define a partition similarly to Section~\ref{sec:partition} as follows.
	For each solution $\mapf \in V(\Sol{\Instcs})$, we define $\eqcls{\mapf}=\{\mapf^\prime \colon \rest{\mapf}{\NBs}=\rest{\mapf^\prime}{\NBs}\}$.
	Then, we define $\Part=\{\eqcls{\mapf} \colon \mapf \in V(\Sol{\Instcs})\}$.
	Clearly, $\Part$ is a partition of $V(\Sol{\Instcs})$ and $|\Part|\le \doms^\NB$.
	In contrast to Section~\ref{sec:partition}, however, $\Part$ may be improper in this case.\footnote{
		As a simple example, let us consider $2$-colorings of $K_2$.
		Clearly $\NBs$ is empty set, and hence all (indeed, only two) $2$-colorings of $G$ are in the same set of $\Part$.
		On the other hand, they are not reconfigurable each other.}
	
	Therefore, as the first step of our algorithm, we modify the given instance so that $\Part$ is proper by some preprocessing.
	More formally, we show the following lemma.
	\begin{lemma}
		\label{lem:nbcore}
		Let $\Inst=(\Instcs,\mapf_\ini,\mapf_\tar)$ be an instance of \CSR{2}, where $\Instcs=(G,\Dom,\Ecns)$.
		We can compute in polynomial time an instance $\Inst^\ppsup=(\Instcs^\ppsup,\mapf_\ini^\ppsup,\mapf_\tar^\ppsup)$ of \CSR{2} such that:
		\begin{enumerate}
			\item the number of non-Boolean vertices in $\Instcs^\ppsup$ is at most that of $\Instcs$;
			\item $\Inst$ is a yes-instance if and only if $\Inst^\ppsup$ is; and
			\item the partition for $\Instcs^\ppsup$ is proper.
		\end{enumerate}
	\end{lemma}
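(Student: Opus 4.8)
The plan is to exploit the special structure of \CSR{2}. Fix any assignment $\maph\colon\NBs\to\Dom$ to the non-Boolean vertices; once the non-Boolean endpoint of a constraint is set according to $\maph$, that constraint becomes a relation on Boolean vertices only, and since every Boolean vertex has a list of size at most two, relabelling each such vertex's two admissible values as $0,1$ turns the induced instance on $\Bs$ into a $2$-CNF formula $\phi_{\maph}$, exactly as in the proof of Theorem~\ref{the:k2easy}. The class $\eqcls{\mapf}\in\Part$ — the solutions agreeing with $\mapf$ on $\NBs$ — induces a connected subgraph of $\Sol{\Instcs}$ precisely when the single-flip graph of $\phi_{\maph}$ is connected, so the goal is to rewrite $\Inst$ so that $\phi_{\maph}$ has connected solution space for every extendable $\maph$, without creating new non-Boolean vertices. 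The crucial observation is that the part of the Boolean interaction that can disconnect $\phi_{\maph}$ is independent of $\maph$: a constraint on an edge $\{v,x\}$ with $v\in\Bs$, $x\in\NBs$ only restricts the admissible values of $v$ once $x$ is fixed, so it can force $v$, delete a value of $v$, or do nothing, but it can never tie two Boolean vertices together; hence the only constraints that can rigidly link distinct Boolean vertices are the ``linear'' binary Boolean--Boolean constraints (those equivalent to $v=w$ or $v\neq w$) and the size-one constraints, all of which are fixed. In $2$-SAT language: the implication digraph on the literals of $\Bs$ coming from the Boolean--Boolean and size-one constraints is fixed, and moving to a fiber only adds unit clauses.

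The preprocessing is therefore a $2$-SAT-style saturation of the Boolean part. First I compute the strongly connected components (SCCs) of this fixed implication digraph. If some variable has its two literals in one SCC, or if the identifications and forcings dictated by the SCC structure are inconsistent with $\mapf_\ini$ or $\mapf_\tar$, then $\Inst$ is a no-instance; in that case I output instead a fixed tiny no-instance whose unique non-trivial fiber is a single solution, so that its partition is trivially proper — such an instance exists whenever $\NB\ge 1$, and when $\NB=0$ every list has size at most two and Theorem~\ref{the:k2easy} applies directly. Otherwise, for every SCC of size larger than one I merge its Boolean vertices into one vertex and fix its value to the common value taken by $\mapf_\ini$; I likewise fix every Boolean vertex that the SCC structure forces, and replace parallel edges by the intersection of their relations. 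These steps are sound for reconfigurability because the vertices being fixed are \emph{globally} frozen in $\Sol{\Instcs}$ — flipping just one of them would violate the rigid link — so $\mapf_\ini$ and $\mapf_\tar$ must agree on them whenever they are reconfigurable; moreover they only merge or fix Boolean vertices, so the number of non-Boolean vertices never increases. Merging can create new rigid links or forced values among the merged vertices, so I iterate; each round strictly decreases $|\Bs|$, hence the process terminates in polynomial time and yields $\Inst^\ppsup$.

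It remains to check that the saturated instance has a proper partition. Fix an extendable $\maph$ and unit-propagate $\phi_{\maph}$. After propagation, no rigid link or forcing constraint survives among the still-unassigned (``active'') Boolean vertices — it would have been contracted or propagated away — so the constraints among active vertices are only implications or trivial, the implication digraph they induce is acyclic (no SCC survived the saturation), and every implication incident to a forced vertex points out of a forced-$0$ vertex or into a forced-$1$ vertex (otherwise the active endpoint would itself have been propagated). Consequently, keeping the forced vertices at their forced values and setting every active vertex to $1$ is a solution of $\phi_{\maph}$, and from any solution of $\phi_{\maph}$ one reaches it by repeatedly flipping from $0$ to $1$ a vertex that is a sink of the sub-digraph induced on the currently-$0$ active vertices: acyclicity guarantees such a sink exists, and being a sink — together with the description of cross-implications just given — guarantees the flip is safe. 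Hence $\phi_{\maph}$ is connected, so every class of $\Part$ induces a connected subgraph, i.e.\ $\Part$ is proper, and $\Inst^\ppsup$ is equivalent to $\Inst$, as required.

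The main obstacle I expect is the structural analysis of $2$-CNF solution spaces underlying the last paragraph — establishing that an acyclic conjunction of implications is always connected while any surviving rigid link disconnects — and, hand in hand with it, designing the saturation so that it eliminates exactly the obstructions to connectivity; the bookkeeping that merging SCC-vertices preserves reconfigurability is routine once one notes that those vertices are globally frozen. It is the $2$-SAT machinery combined with the $\maph$-independence of the rigid structure that makes both points go through.
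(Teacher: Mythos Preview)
Your high-level approach coincides with the paper's: build the 2-SAT implication digraph of the Boolean--Boolean constraints, identify the Boolean vertices whose literals lie in nontrivial SCCs (equivalently, in directed cycles) as frozen, substitute their values from $\mapf_\ini$ (returning NO if $\mapf_\tar$ disagrees), iterate until the digraph is acyclic, and then argue that every fiber is connected because substituting $\maph$ into an arity-$\le 2$ instance only adds unit clauses and no new Boolean--Boolean arcs. The paper does exactly this, packaging the frozen-vertex step and the connectivity step together as Lemma~\ref{lem:imp}, and packaging the $\maph$-independence as the observation that in arity at most two the Boolean--Boolean constraints are identical in $\sbst{\Instcs^\ppsup}{\maph}$ and in the restriction $\Instcs^{\ppsup\resup}$.

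The gap is in your final connectivity argument. You claim that once the implication digraph on the active vertices is acyclic, the all-$1$ assignment on the active vertices is a solution and every solution reaches it by $0\to 1$ flips at sinks. This already fails for the single clause $(\neg u\vee\neg v)$: its literal implication digraph contains only the arcs $u[1]\to v[0]$ and $v[1]\to u[0]$, hence is acyclic and survives your saturation untouched, yet $(1,1)$ is not a solution. Your sink test watches only arcs among positive literals of currently-$0$ variables, so it is blind to arcs into negative literals; starting from $(0,0)$ both $u$ and $v$ are sinks in your sense, and flipping them in turn walks you straight into the forbidden $(1,1)$. The statement you need --- acyclic literal implication graph implies connected solution space --- is true, but its proof has to work on the full literal graph rather than a monotone ``flip up to all-ones'' argument; the paper does not reprove it from scratch and instead invokes the argument of Lemma~4.9 in~\cite{GKMP09}. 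You need that lemma (or a correct reproof of it), not the all-ones target.
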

	
	Then, we can compute the CSG in the same way as Section~\ref{sec:vcalg}; we just replace $\Cov$ with $\NBs$.
	In order to check the existence of vertices and edges in the CSG, we solve instances of \textsc{Boolean} \CSP{2} which are constructed by the substitution operation.
	Since \textsc{Boolean} \CSP{2} can be solved in polynomial time~\cite{Schaefer}, the whole running time of the algorithm is $O^*(\doms^{O(\NB)})$.
	In the remainder of this section, we prove Lemma~\ref{lem:nbcore}.
	
\subsection{Implication graphs}
	
	In order to describe a preprocessing, we first introduce the notion of ``implication graph'', which was first introduced in~\cite{GKMP09} in order to prove the tractability of some variant of \textsc{Boolean} \CSR{2}.
	Let $\Instcs=(G,\Dom,\Ecns)$ be an instance of \CSP{0} where $\{0,1\}$ is a domain; we consider the values $0$ and $1$ as Boolean values.
	We define the \emph{implication graph} $\Imp{\Instcs}$ for $\Instcs$ as follows.
	For each vertex $v\in V(G)$ and for each value $i\in \Dom$ such that there exists a solution $\mapf$ with $\mapf(v)=i$, we add a vertex $\vimp{v}{i}$ to $\Imp{\Instcs}$.
	For each adjacent vertices $v,w\in V(G)$, add two arcs $\vimp{v}{i} \to \vimp{w}{\neg{j}}$ and $\vimp{w}{j} \to \vimp{v}{\neg{i}}$ if and only if $(i,j)\notin \Ecns(vw)$, where $\neg{\ }$ denote a negation of a Boolean value.
	Intuitively, an arc $\vimp{v}{i} \to \vimp{w}{\neg{j}}$ means that if $v$ is assigned $i$, then $w$ must be assigned $\neg{j}$ in any solution.
	We note that $\Imp{\Instcs}$ can be computed in polynomial time, since the existence of a vertex $\vimp{v}{i}$ can be checked in polynomial time by solving \textsc{Boolean} \CSP{2} instance obtained by substituting $\mapf(v)=i$.
	We now prove the following lemma.
	\begin{lemma}
		\label{lem:imp}
		If there exists a vertex $v\in V(G)$ such that $\vimp{v}{0}$ or $\vimp{v}{1}$ is contained in a directed cycle of $\Imp{\Instcs}$, any two solutions $\mapf_0$ and $\mapf_1$ for $\Instcs$ such that $\mapf_0(v)=0$ and $\mapf_1(v)=1$ are not reconfigurable.
		On the other hand, if $\Imp{\Instcs}$ contains no directed cycles, $\Sol{\Instcs}$ is connected.
	\end{lemma}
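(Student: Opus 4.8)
The plan is to recast everything in terms of the implication graph. For a mapping $\mapf \in \Dom^{V(G)}$ write $\sigma(\mapf) = \{\vimp{v}{\mapf(v)} : v \in V(G)\}$ for the set of literals it selects. Unwinding the definition of the arcs --- recall $\vimp{v}{i} \to \vimp{w}{\neg j}$ is present exactly when $(i,j) \notin \Ecns(vw)$ --- gives the three facts on which the whole argument rests: (a) $\mapf$ is a solution of $\Instcs$ iff $\sigma(\mapf)$ is \emph{closed}, i.e. no arc of $\Imp{\Instcs}$ runs from a literal of $\sigma(\mapf)$ to a literal outside it, and consequently $\sigma(\mapf)$ then contains every literal reachable in $\Imp{\Instcs}$ from its own literals; (b) the arc relation is skew-symmetric, $\vimp{v}{i} \to \vimp{w}{\neg j}$ being present iff $\vimp{w}{j} \to \vimp{v}{\neg i}$ is; (c) if $\vimp{v}{i}$ is selected by some solution $\mapf$ and $\vimp{v}{i} \to \ell$ is an arc of $\Imp{\Instcs}$, then $\ell$ is again selected by $\mapf$ (in particular $\ell$ is a vertex of $\Imp{\Instcs}$). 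Finally, a single reconfiguration step changes the value of exactly one vertex, and I will check directly when such a change keeps a solution.

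For the first statement, assume without loss of generality that $\vimp{v}{0}$ lies on a directed cycle; since $\Imp{\Instcs}$ has no self-loops (as $G$ is $2$-uniform), $\vimp{v}{0}$ then belongs to a strongly connected component $C$ with more than one vertex. As $\Instcs$ is satisfiable, $C$ cannot consist only of the two literals of $v$ (a solution selecting one of them would, by closedness, select both), so $C$ contains a literal $\vimp{w}{b}$ with $w \ne v$. Because $\vimp{v}{0}$ and $\vimp{w}{b}$ reach each other inside $C$, fact (a) applied to an arbitrary solution $\mapf$ yields the rigid equivalence $\mapf(v) = 0 \iff \mapf(w) = b$. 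Now suppose toward a contradiction that $\mapf_0$ and $\mapf_1$ are reconfigurable, and along a reconfiguration sequence between them take the first solution $\mapg'$ with $\mapg'(v) = 1$; its predecessor $\mapg$ satisfies $\mapg(v) = 0$ and differs from $\mapg'$ only at $v$. Then $\mapg(w) = b$, hence $\mapg'(w) = b$, hence the equivalence forces $\mapg'(v) = 0$, a contradiction. So $\mapf_0$ and $\mapf_1$ are not reconfigurable.

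For the second statement, suppose $\Imp{\Instcs}$ is acyclic; I show that any two solutions $\mapf \ne \mapg$ are reconfigurable, by induction on $|\diff{\mapf}{\mapg}|$. Let $D = \diff{\mapf}{\mapg} \ne \emptyset$ and let $L = \{\vimp{x}{\mapg(x)} : x \in D\}$, a non-empty set of vertices of $\Imp{\Instcs}$ (witnessed by $\mapg$); since $\Imp{\Instcs}$ is acyclic, the subdigraph induced on $L$ is a finite DAG and hence has a sink $\vimp{u}{\mapg(u)}$. The claim is that changing the value of $u$ in $\mapf$ to $\mapg(u)$ yields a solution $\mapf'$. Closedness of $\sigma(\mapf')$ can fail in only two ways: an arc from the newly selected literal $\vimp{u}{\mapg(u)}$ to an unselected literal, or an arc from a still-selected literal to the just-dropped literal $\vimp{u}{\mapf(u)}$. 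The first cannot happen: by closedness of $\sigma(\mapg)$ the head of such an arc is some $\vimp{y}{\mapg(y)}$, and if $y \in D$ the arc contradicts $\vimp{u}{\mapg(u)}$ being a sink of the induced subdigraph on $L$, so $y \notin D$ and the head is selected by $\mapf'$. The second cannot happen either: such an arc starts at some $\vimp{z}{\mapf(z)}$, and if $z \notin D$ this contradicts closedness of $\sigma(\mapg)$, while if $z \in D$ then, both of $\vimp{u}{\mapg(u)}$ and $\vimp{z}{\mapg(z)}$ being selected by $\mapg$, skew-symmetry (b) produces an arc $\vimp{u}{\mapg(u)} \to \vimp{z}{\mapg(z)}$ of $\Imp{\Instcs}$ with $\vimp{z}{\mapg(z)} \in L$, again contradicting the sink property. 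Hence $\mapf'$ is a solution adjacent to $\mapf$ in $\Sol{\Instcs}$ with $\diff{\mapf'}{\mapg} = D \setminus \{u\}$; by induction $\mapf'$ reconfigures to $\mapg$, so $\mapf$ does too, and $\Sol{\Instcs}$ is connected.

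The step I expect to be the main obstacle is the exact bookkeeping for when a single value change preserves being a solution --- especially ruling out the second failure mode above, which is where skew-symmetry is genuinely used --- together with the subtlety that $\Imp{\Instcs}$ carries only the literals realised by some solution, so facts (a)--(c) must be invoked in a form that never steps outside this set; fact (c) is exactly what makes this legitimate, but it has to be stated and used with care.
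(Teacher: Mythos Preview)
Your proof is correct and follows essentially the same lines as the paper's. For the first statement you use the same ``first flip along a reconfiguration sequence'' contradiction as the paper, just packaged via a single auxiliary literal $\vimp{w}{b}$ in the strongly connected component rather than via every literal on the cycle; for the second statement the paper simply cites Gopalan et al.'s Lemma~4.9, and your sink-and-flip induction is exactly that standard argument written out in full.

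One small remark: your fact~(a) as stated (``$\mapf$ is a solution iff $\sigma(\mapf)$ is closed'') needs the proviso $\sigma(\mapf)\subseteq V(\Imp{\Instcs})$, which you implicitly use and flag in your final paragraph. It is worth noting that under this proviso the direction you actually need, closed $\Rightarrow$ solution, is genuinely valid: if an edge $vw$ were violated by $\mapf$, then since $\vimp{v}{\mapf(v)}$ is realised by some solution $g$, that $g$ must assign $w$ the value $\neg \mapf(w)$, so $\vimp{w}{\neg \mapf(w)}$ is a vertex and the arc $\vimp{v}{\mapf(v)}\to \vimp{w}{\neg \mapf(w)}$ does exist, contradicting closedness. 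So your worry about ``stepping outside the realised literals'' is handled cleanly, and the argument goes through as written.
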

	\begin{proof}
		The second statement can be proved by the similar argument as the proof of Lemma~4.9 in~\cite{GKMP09}, although our implication graph is slightly different from the original one.

		Therefore, we prove the first statement.
		Let $C=\vimp{v_0}{i_0} \to \vimp{v_1}{i_1} \to \cdots \to \vimp{v_m}{i_m} \to \vimp{v_0}{i_0}$ be a directed cycle in $\Imp{\Instcs}$, where $v_0=v$ and $i_j\in \{0,1\}$, $0\le j \le m$.
		From the construction, $m\ge 1$ holds.
		Without loss of generality, assume that $i_0=0$.
		Recall that each arc $\vimp{v}{i} \to \vimp{w}{\neg{j}}$ means that if $v$ is assigned $i$, then $w$ must be assigned $\neg{j}$.
		Then, $\mapf_0(v_p)=i_p$ holds for every $p\in \ISN{0}{m}$.
		Moreover, by contrapositions of the above implications, $\mapf_1(v_p)=\neg{i_p}$ also holds for every $p\in \ISN{0}{m}$.
		We assume for a contradiction that $\mapf_0$ and $\mapf_1$ are reconfigurable, and consider the first solution $\mapf$ in a reconfiguration sequence such that $\mapf(v_p)=\neg{i_p}$ for some $p\in \ISN{0}{m}$.
		Since there exists an arc $\vimp{v_q}{i_q} \to \vimp{v_p}{i_p}$ in a directed cycle $C$, and hence $\mapf(v_q)$ must be $\neg{i_q}$.
		However, by the definition of $\mapf$, we have $\mapf(v_q)=i_q$, which is a contradiction.
	\end{proof}
	
\subsection{Preprocessing}

	We now explain a preprocessing, which eliminates all ``undesirable'' vertices which prevent the partition from being proper.
	Let $\Inst=(\Instcs,\mapf_\ini,\mapf_\tar)$ be an instance of \CSR{2}, where $\Instcs=(G,\Dom,\Ecns)$.
	Without loss of generality, we can assume that a list $\Vcns(v)$ of every Boolean vertex $v\in \Bs$ is a subset of $\{0,1\}$ by a simple value replacement.
	Then, we define the instance $\Instcs^\resup=(G^\resup,\Dom,\Ecns^\resup)$ of \CSP{2} as the instance obtained by restricting all component of $\Instcs$ on $\Bs$.
	That is, 
	\begin{itemize}
		\item $G^\resup=G[\Bs]$; and
		\item  for each $X^\prime \in E(G^\resup)$, $\Ecns^\resup(X^\prime)=\bigcap_{X\in E^\prime}\subcns{X}$, where $E^\prime=\{X\in E(G) \colon X \cap \Bs=X^\prime \}$ and $\subcns{X}=\{\rest{\mapg}{X^\prime} \colon \mapg \in \Ecns(X)\}$.
	\end{itemize}
	Let $\Vfix$ be the set of vertices $v\in V(G)$ such that $\vimp{v}{0}$ or $\vimp{v}{1}$ are contained in a directed cycle of $\Imp{\Instcs^\resup}$.
	By Lemma~\ref{lem:imp}, in any solution for $\Instcs^\resup$, all vertices $v$ in $\Vfix$ are fixed, that is, cannot be reconfigured at all.
	This property also holds for the original instance $\Instcs$.
	Therefore, if $\rest{\mapf_\ini}{\Vfix}\ne \rest{\mapf_\tar}{\Vfix}$ holds, then we can immediately conclude that $\Inst$ is a no-instance.
	In the other case, we construct in polynomial time the substitution $\sbst{\Instcs}{\maph^\prime}$, where $\maph^\prime=\rest{\mapf_\ini}{\Vfix}=\rest{\mapf_\tar}{\Vfix}$.
	Then, we have the following proposition.
	\begin{proposition}
		$\Inst$ is a yes-instance if and only if $\Inst^\prime=(\sbst{\Instcs}{\maph^\prime},\allowbreak \rest{\mapf_\ini}{V(G^\prime)},\rest{\mapf_\tar}{V(G^\prime)})$ is.
	\end{proposition}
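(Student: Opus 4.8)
The plan is to show that substituting the fixed values $\maph^\prime$ on $\Vfix$ changes neither the solution set up to the relevant restriction nor the reachability, by combining Lemma~\ref{lem:imp} (which forces the values on $\Vfix$ to stay constant) with Lemma~\ref{lem:substitution} (which translates solutions and the adjacency relation between $\Instcs$ and $\sbst{\Instcs}{\maph^\prime}$).

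First I would establish the single structural fact that drives both directions: \emph{every} solution $\mapf$ occurring in any reconfiguration sequence of $\Inst$ satisfies $\rest{\mapf}{\Vfix}=\maph^\prime$. Here one uses that $\Vfix \subseteq \Bs$ (the implication graph $\Imp{\Instcs^\resup}$ has vertices only for members of $\Bs$) and that the restriction $\rest{\mapf}{\Bs}$ of any solution $\mapf$ of $\Instcs$ is a solution of $\Instcs^\resup$ — this is exactly the easy direction of the restriction argument in the proof of Lemma~\ref{lem:reduce_pre}, since $\Ecns^\resup(X^\prime)$ is an intersection of projections of constraints satisfied by $\mapf$. Moreover, two adjacent solutions of $\Instcs$ have restrictions to $\Bs$ that are either equal (when the changed vertex lies in $\NBs$) or adjacent (when it lies in $\Bs$), so a reconfiguration sequence of $\Inst$ maps, after deleting consecutive duplicates, to a \emph{walk} in $\Sol{\Instcs^\resup}$. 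By Lemma~\ref{lem:imp}, along such a walk no vertex of $\Vfix$ ever changes value; since the walk starts at $\rest{\mapf_\ini}{\Bs}$, every solution along it assigns $\Vfix$ the values $\rest{\mapf_\ini}{\Vfix}=\maph^\prime$, and hence so does every solution on the original reconfiguration sequence of $\Inst$.

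Given this, the ``only if'' direction is by projection: take a reconfiguration sequence from $\mapf_\ini$ to $\mapf_\tar$, restrict each solution to $V(G^\prime)=V(G)\setminus\Vfix$, and delete consecutive duplicates. Each restricted mapping extends (via $\maph^\prime$ on $\Vfix$) back to a solution of $\Instcs$, hence by Lemma~\ref{lem:substitution} is a solution of $\sbst{\Instcs}{\maph^\prime}$; and since consecutive solutions of the original sequence differ only in a vertex outside $\Vfix$, their restrictions differ in at most one vertex. The endpoints are $\rest{\mapf_\ini}{V(G^\prime)}$ and $\rest{\mapf_\tar}{V(G^\prime)}$, so $\Inst^\prime$ is a yes-instance. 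For the ``if'' direction I would run this in reverse: given a reconfiguration sequence for $\Inst^\prime$, extend each solution $\mapf^\prime$ to $V(G)$ by setting it to $\maph^\prime$ on $\Vfix$; Lemma~\ref{lem:substitution} guarantees each extension is a solution of $\Instcs$, consecutive extensions still differ in exactly one vertex, and the two endpoints coincide with $\mapf_\ini$ and $\mapf_\tar$ precisely because $\rest{\mapf_\iot}{\Vfix}=\maph^\prime$ for $\iot\in\{\ini,\tar\}$. Hence $\Inst$ is a yes-instance.

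The only delicate point — and thus the main obstacle — is the first step: transferring the ``fixedness'' of $\Vfix$ from $\Instcs^\resup$, where Lemma~\ref{lem:imp} is stated, to the original instance $\Instcs$. This rests on the observation that restriction to $\Bs$ sends a reconfiguration sequence of $\Instcs$ to an honest walk in $\Sol{\Instcs^\resup}$, so that Lemma~\ref{lem:imp} can be invoked edge by edge. Once that is in place, both implications are routine bookkeeping with Lemma~\ref{lem:substitution}.
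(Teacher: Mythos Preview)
Your proposal is correct and follows essentially the same approach as the paper: use the fixedness of $\Vfix$ to see that any reconfiguration sequence of $\Inst$ keeps $\rest{\mapf}{\Vfix}=\maph^\prime$ throughout, and then translate back and forth via Lemma~\ref{lem:substitution}. Your write-up is in fact more complete than the paper's, which asserts the sentence ``This property also holds for the original instance $\Instcs$'' without justification; your projection-to-$\Bs$ argument (restrictions of solutions of $\Instcs$ are solutions of $\Instcs^\resup$, and a reconfiguration sequence projects to a walk in $\Sol{\Instcs^\resup}$, so Lemma~\ref{lem:imp} applies) is exactly what is needed to fill that gap.
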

	\begin{proof}
		Because all vertices $v$ in $\Vfix$ are fixed, $\Inst$ is a yes-instance if and only if there exists a reconfiguration sequence $\Seq$ such that every solutions $\mapf$ in $\Seq$ satisfies $\rest{\mapf}{\Vfix}=\maph^\prime$.
		By Lemma~\ref{lem:substitution}, there exists such a reconfiguration sequence if and only if there exists a reconfiguration sequence between $\rest{\mapf_\ini}{V(G^\prime)}$ and $\rest{\mapf_\tar}{V(G^\prime)}$ in $\Sol{\sbst{\Instcs}{\maph^\prime}}$.
	\end{proof}

	Therefore, we can obtain an equivalent instance which satisfies the conditions (1) and (2) of Lemma~\ref{lem:nbcore} by repeating the above transformation until the corresponding implication graph becomes acyclic or empty.
	Since the number of vertices decreases during the process, this can be done in polynomial time.
	Let $\Inst^\ppsup=(\Instcs^\ppsup,\mapf_\ini^\ppsup,\mapf_\tar^\ppsup)$, where $\Instcs^\ppsup=(G^\ppsup,\Dom,\Ecns^\ppsup)$, be an instance obtained by this preprocessing.	
	Then, it is left to prove that $\Inst^\ppsup$ satisfies the condition (3).
	
\subsection{Properity of the pertition}
	
	Let $\Part$ be the partition for $\Instcs^\ppsup$, and let $\Cnd$ be any solution set in $\Part$ such that the restriction of every solution in $\Cnd$ on $\NBs$ is $\maph$.
	By Lemma~\ref{lem:imp}, in order to prove that $\Part$ is proper, it suffices to show that $\Imp{\sbst{\Instcs^\ppsup}{\maph}}$ has no directed cycles.
	
	Assume for a contradiction that $\Imp{\sbst{\Instcs^\ppsup}{\maph}}$ has a cycle $C$.
	Let $\Instcs^{\ppsup \resup}$ be an instance obtained by restricting all component of $\Instcs^\ppsup$ on $\Bs$.
	From the definition of the implication graph, for each vertex $\vimp{v}{i}$ in $C$, there exists a solution $\mapf^\prime$ for $\sbst{\Instcs^\ppsup}{\maph}$ such that $\mapf^\prime(v)=i$.
	By Lemma~\ref{lem:substitution}, a mapping $\mapf$ such that $\rest{\mapf}{\NBs}=\maph$ and $\rest{\mapf}{\Bs}=\mapf^\prime$ is a solution for $\Instcs^\ppsup$.
	Moreover, $\rest{\mapf}{\Bs}=\mapf^\prime$ is a solution for the restricted instance $\Instcs^{\ppsup \resup}$.
	Therefore, $\Imp{\Instcs^{\ppsup \resup}}$ has a vertex $\vimp{v}{i}$.
	For each arc $\vimp{v}{i} \to \vimp{w}{j}$, a mapping $(i, \neg{j})$ is not contained in the constraint $\Ecns^{\ppsup \prime}(vw)$ of $vw$ in $\sbst{\Instcs^\ppsup}{\maph}$.
	Recall that $\Ecns^{\ppsup \prime}(vw)$ is the mapping set 	$\bigcap_{X\in E^\prime}\subcns{X}$, where $E^\prime=\{X\in E(G^\ppsup) \colon X\setminus \NBs=\{v,w\} \}$ and $\subcns{X}=\{\rest{\mapg}{\{v,w\}} \colon \mapg \in \Ecns(X),\allowbreak \mbox{$\maph$ and $\mapg$ are compatible}\}$.
	Since $G^\ppsup$ has a hyperedge of size at most two, $E^\prime$ contains exactly one edge $vw$.
	Therefore, $\Ecns^{\ppsup \prime}(vw)=\Ecns^\ppsup(vw)$ holds, and hence $(i, \neg{j})$ not contained in $\Ecns^\ppsup(vw)$.
	Moreover, the constraint $\Ecns^{\ppsup \resup}(vw)$ does not contain $(i, \neg{j})$, too.
	From the definition, $\Imp{\Instcs^{\ppsup \resup}}$ contains the arc $\vimp{v}{i} \to \vimp{w}{j}$.
	By the above observations, $\Imp{\Instcs^\ppsup}$ has a directed cycle $C$, which contradicts that we have eliminated all directed cycles from the implication graph by the preprocessing.
	
	\medskip
	Thus, we have proved Lemma~\ref{lem:nbcore} and hence Theorem~\ref{the:nb}.
	


\bibliography{biblio}

\end{document}